\theoremstyle{prop}
\theoremstyle{proof}
\newtheorem{lem}{Lemma}
\begin{document}

\title{The Log Private Company Valuation Model}
\author{Battulga Gankhuu\footnote{
Department of Applied Mathematics, National University of Mongolia, Ulaanbaatar, Mongolia;
E-mail: battulga.g@seas.num.edu.mn}}
\date{}


\maketitle

\begin{abstract}
For a public company, pricing and hedging models of options and equity--linked life insurance products have been sufficiently developed. However, for a private company, because of unobserved prices, pricing and hedging models of the European options and life insurance products are in their early stages of development. For this reason, this paper introduces a log private company valuation model, which is based on the dynamic Gordon growth model. In this paper, we obtain closed--form pricing and hedging formulas of the European options and equity--linked life insurance products for private companies. Also, the paper provides Maximum Likelihood (ML) estimations of our model, Expectation Maximization (EM) algorithm, and valuation formula for private companies. \\[3ex]

\textbf{Keywords:} Private company, dynamic Gordon growth model, option pricing, equity--linked life insurance, locally risk--minimizing strategy, Kalman filtering, ML estimators.\\[1ex]

\end{abstract}

\section{Introduction}

Dividend discount models (DDMs), first introduced by \citeA{Williams38}, are popular tools for stock valuation. The basic idea is that the market price of a stock is equal to the present value of the sum of the dividend paid by the firm and the stock price of the firm, corresponding to the next period. As the outcome of DDMs depends crucially on dividend forecasts, most research in the last few decades has been around the proper model of dividend dynamics. Reviews of some existing DDMs that include deterministic and stochastic models can be found in \citeA{dAmico20a} and \citeA{Battulga22a}.

Existing stochastic DDMs have one common disadvantage: If dividends have chances to take negative values, then the stock price of the firm can take negative values with a positive probability, which is an undesirable property for a stock price. A log version of the stochastic DDM, which is called by dynamic Gordon growth model was introduced by \citeA{Campbell88}, who derived a connection between log price, log dividend, and log return by approximation. Since their model is in a log framework, the stock price and dividend get positive values.

A private company is a firm held under private ownership. Private companies may issue stock and have shareholders, but their shares do not trade on public exchanges and are not issued through an initial public offering (IPO). As a result, the stock prices of private companies are unobserved most of the time. On the other hand, a public company is a company that has sold a portion of itself to the public via an IPO, meaning shareholders have a claim to part of the company's assets and profits. Consequently, the public knows the daily prices of public companies.

Black--Scholes European call and put options are contracts that give their owner the right, but not the obligation, to buy or sell shares of a stock of a company at a predetermined price by a specified date. For a public company, the European option pricing models have been developed. However, since the stock prices of a private company are unobserved most of the time, the pricing of the European call and put options was a difficult topic for the private company. For the private company, to motivate employees to adopt a personal investment in the company's success, frequently offered the call option, giving the holder the right to purchase shares of stock at a specified price of the company. Also, to motivate the employee the private company may offer equity--linked life insurance products to the workers. 

In this paper, we aim to contribute the following fields: (i) to price the European options, (ii) to connect the equity--linked life insurance products with the private company, (iii) to hedge the options and life insurance products, and (iv) to estimate the model's parameter and to value the private company. To estimate the required rate of return of a private company, \citeA{Battulga23b} considered the (iv)th field. But the (i)--(iii) fields still have not been explored before. Therefore, the novelty of the paper is threefold.

The rest of the paper is organized as follows: In Section 2, we introduce a log private company valuation model. In Section 3, we obtain closed--form Black--Scholes call and put options pricing formula. Section 4 provides connections between the log private company valuation model and equity--linked life insurance products. In Section 5, we consider locally risk--minimizing strategies for the options and equity--linked life insurance products. Section 6 is dedicated to ML estimators of the model's parameters and an estimator of the value of a private company. In Section 7, we conclude the study. Finally, in the Technical Annex, we provide Lemmas, which are useful for pricing and hedging and their proofs.

\section{Log Private Company Valuation Model}

Let $(\Omega,\mathcal{G}_T^x,\mathbb{P})$ be a complete probability space, where $\mathbb{P}$ is a given physical or real--world probability measure and $\mathcal{G}_T^x$ will be defined below. Dividend discount models (DDMs), first introduced by \citeA{Williams38}, are a popular tool for stock valuation. In this paper, we assume that a firm will not default in the future. For a DDM with default risk, we refer to \citeA{Battulga22a}. The basic idea of all DDMs is that the market price at time $t-1$ of stock of the firm equals the sum of the stock price at time $t$ and dividend at time $t$ discounted at risk--adjusted rates (required rate of return on stock). We suppose that there are $n$ private companies. Therefore, for successive prices of $i$--th company, the following relation holds 
\begin{equation}\label{03001}
P_{i,t}=(1+k_{i,t})P_{i,t-1}-d_{i,t}
\end{equation}
for $t=1,\dots,T$ and $i=1,\dots,n$, where $k_{i,t}$ is the required rate of return, $P_{i,t}$ is the stock price, and $d_{i,t}$ is the dividend, respectively, at time $t$ of $i$--th company. Let $P_t:=(P_{1,t},\dots,P_{n,t})'$ be an $(n\times 1)$ price vector at time $t$, $d_t:=(d_{1,t},\dots,d_{n,t})'$ be an $(n\times 1)$ dividend vector at time $t$, $k_t:=(k_{1,t},\dots,k_{n,t})'$ be an $(n\times 1)$ required rate of return vector at time $t$. Then, equation \eqref{03001} is compactly written by
\begin{equation}\label{03002}
P_t=(i_n+k_t)\odot P_{t-1}-d_t
\end{equation}
where $i_n:=(1,\dots,1)'$ is an $(n\times 1)$ vector, which consists of ones and $\odot$ is the Hadamard's element--wise product of two vectors.

As mentioned above, if dividends have chances to take negative values, then the stock prices of the firms can take negative values with positive probabilities, which is an undesirable property for the stock prices. For this reason, we follow the idea in \citeA{Campbell88}. As a result, the stock prices of the companies take positive values. Following the ideas in \citeA{Campbell88}, one can obtain the following approximation
\begin{equation}\label{03003}
\ln\big(i_n+\exp\{\tilde{d}_t-\tilde{P}_t\}\big)\approx \ln(g_t)+G_t^{-1}(G_t-I_n)\big(\tilde{d}_t-\tilde{P}_t-\mu_t\big)
\end{equation}
where for a generic function $f:\mathbb{R}\to\mathbb{R}$ and $(n\times 1)$ vector $o=(o_1,\dots,o_n)'$, we use a vector notation $f(o):=(f(o_1),\dots,f(o_n))'$, $I_n$ is an $(n\times n)$ identity matrix, $\tilde{P}_t:=\ln(P_t)$ is an $(n\times 1)$ log price vector at time $t$, $\tilde{d}_t:=\ln(d_t)$ is an $(n\times 1)$ log dividend vector at time $t$, $g_t:=i_n+\exp\{\mu_t\}$ is an $(n\times 1)$ linearization parameter vector at time $t$, $G_t:=\text{diag}\{g_t\}$ is an $(n\times n)$ diagonal matrix, whose diagonal elements consist of the linearization parameter vector $g_t$, and $\mu_{t}:=\mathbb{E}\big[\tilde{d}_{t}-\tilde{P}_{t}\big|\mathcal{F}_0\big]$ is an $(n\times 1)$ mean log dividend--to--price ratio process at time $t$ of the companies, where $\mathcal{F}_0$ is an initial information, see below. It should be noted that a vector, which consists of diagonal elements of the matrix $G_t^{-1}$ has the following explanation: ignoring the expectation, it represents a percentage of the stock price at time $t$ in a sum of the stock price at time $t$ and dividend payment at time $t$. Therefore, we have
\begin{equation}\label{03004}
\exp\{\tilde{k}_t\}=(P_t+d_t)\oslash P_{t-1}\approx \exp\big\{\tilde{P}_t-\tilde{P}_{t-1}+\ln(g_t)+G_t^{-1}(G_t-I_n)\big(\tilde{d}_t-\tilde{P}_t-\mu_t\big)\big\},
\end{equation}
where $\oslash$ denotes element--wise division of two vectors, $\tilde{k}_t:=\ln(1+k_t)$ is an $(n\times 1)$ log required rate of return vector. As a result, for the log price at time $t$, the following approximation holds
\begin{equation}\label{03005}
\tilde{P}_t\approx G_t(\tilde{P}_{t-1}-\tilde{d}_t+\tilde{k}_t)+\tilde{d}_t-h_t.
\end{equation}
where $h_{t}:=G_t\big(\ln(g_t)-\mu_t\big)+\mu_t$ is an $(n\times 1)$ linearization parameter and the model is referred to as dynamic Gordon growth model, see \citeA{Campbell88}. For a quality of the approximation, we refer to \citeA{Campbell97}. To estimate the parameters of the dynamic Gordon growth model, to price Black--Scholes call and put options and equity--linked life insurance products, which will appear in the following sections, and to hedge the options and life insurance products we assume that the log required rate of return process is a sum of deterministic part and random part, namely, $\tilde{k}_t=C_k\psi_t+u_t$ under the real probability measure $\mathbb{P}$, where $C_k$ is an ($n\times l$) matrix, $\psi_t:=(\psi_{1,t},\dots,\psi_{l,t})'$ is an $(l\times 1)$ vector of exogenous variables, and $u_t:=(u_{1,t},\dots,u_{n,t})'$ is an $(n\times 1)$ white noise process. In this case, equation \eqref{03005} becomes 
\begin{equation}\label{03006}
\tilde{P}_t=G_t(\tilde{P}_{t-1}-\tilde{d}_t+C_k\psi_t)+\tilde{d}_t-h_t+G_tu_t.
\end{equation}

Let $\mathsf{B}_{i,t}$ be a book value of equity and $b_{i,t}$ be a book value growth rate, respectively, at time $t$ of $i$--th company. Since the book value of equity at time $t-1$ grows at rate $b_{i,t}$, a log book value at time $t$ of $i$--th company becomes 
\begin{equation}\label{03007}
\ln(\mathsf{B}_{i,t})=\tilde{b}_{i,t}+\ln(\mathsf{B}_{i,t-1}),
\end{equation}
where $\tilde{b}_{i,t}:=\ln(1+b_{i,t})$ is a log book value growth rate of $i$--th company. On the other hand, let $\Delta_{i,t}:=d_{i,t}/B_{i,t-1}$ be a dividend--to--book ratio process at time $t$ of the $i$--th company. Then, we have
\begin{equation}\label{03008}
\tilde{d}_{i,t}=\tilde{\Delta}_{i,t}+\ln(\mathsf{B}_{i,t-1}),
\end{equation}
where $\tilde{\Delta}_{i,t}:=\ln(\Delta_{i,t})$ is a log dividend--to--book ratio process at time $t$ of $i$--th company. To obtain the dynamic Gordon growth model, corresponding to the private companies, let $\tilde{m}_{i,t}:=\ln(P_{i,t}/B_{i,t})$ be a log price--to--book ratio at time $t$ of $i$--th private company. To keep notations simple, let $\mathsf{B}_t:=(\mathsf{B}_{1,t},\dots,\mathsf{B}_{n,t})'$ be an $(n\times 1)$ book value of equity process at time $t$, $\tilde{b}_t:=(\tilde{b}_{1,t},\dots,\tilde{b}_{n,t})'$ be an $(n\times 1)$ log book value growth rate process at time $t$, $\tilde{\Delta}_t:=(\tilde{\Delta}_{1,t},\dots,\tilde{\Delta}_{n,t})'$ be an $(n\times 1)$ log dividend--to--book ratio process at time $t$, and $\tilde{m}_t:=(\tilde{m}_{1,t},\dots,\tilde{m}_{n,t})'$ be an $(n\times 1)$ log price--to--book ratio process at time $t$. Then, in vector form, equations \eqref{03007} and \eqref{03008} can be written by
\begin{equation}\label{03009}
\ln(\mathsf{B}_t)=\tilde{b}_t+\ln(\mathsf{B}_{t-1})
\end{equation}
and
\begin{equation}\label{03010}
\tilde{d}_t=\tilde{\Delta}_t+\ln(\mathsf{B}_{t-1}).
\end{equation}

Let us assume that a log spot interest rate and economic variables that affect the log book value growth rate process $\tilde{b}_t$ and the log price--to--book ratio process $\tilde{m}_t$ are placed on $(\ell\times 1)$ dimensional Vector Autoregressive Process of order $p$ (VAR$(p)$) $z_t$. The VAR$(p)$ process $z_t$ is represented by
\begin{equation}\label{03011}
z_t=C_z\psi_t+A_1z_{t-1}+\dots+A_pz_{t-p}+v_t, ~~~t=1,\dots,T,
\end{equation}
where $z_t:=(z_{1,t},\dots,z_{\ell,t})'$ is an $(\ell\times 1)$ endogenous variables process, $v_t:=(v_{1,t},\dots,v_{\ell,t})'$ is an $(\ell\times 1)$ white noise process, $C_z$ is an $(\ell\times l)$ coefficient matrix, corresponding to the exogenous variables, and $A_1,\dots,A_p$ are $(\ell\times \ell)$ coefficient matrices, corresponding to processes $z_{t-1},\dots,z_{t-p}$. Then, equation \eqref{03011} can be written by
\begin{equation}\label{03012}
z_t=C_z\psi_t+Az_{t-1}^*+v_t, ~~~t=1,\dots,T,
\end{equation}
under the real probability measure $\mathbb{P}$, where $z_{t-1}^*:=(z_{t-1}',\dots,z_{t-p}')'$ is an $(\ell p\times 1)$ economic variables process, which consists of last $p$ lagged values of the process $z_t$ and $A:=[A_1:\dots:A_p]$ is an $(\ell\times \ell p)$ stacked matrix, which consists of the coefficient matrices $A_1,\dots,A_p$.

Since a spot interest rate $r_t$ is known at time $(t-1)$, without loss of generality we suppose that log spot interest rate $\tilde{r}_t:=\ln(1+r_t)$ is placed on the first component of the process $z_{t-1}$. That is, $\tilde{r}_t=e_{\ell,1}'z_{t-1}$, where $e_{\ell,i}:=(0,\dots,0,1,0,\dots,0)'$ is an $(\ell\times 1)$ unit vector, whose $i$--th component is one and others are zero. If $\ell>1$, then other components of the process $z_t$ represent economic variables that affect the log book value growth rates and the log price--to--book ratios of the companies. 

It should be noted that by using the augmented Dickey--Fuller test one can easily confirm that for the quarterly S\&P 500 index, its log price--to--book ratio is governed by the unit--root process with drift, see data from \citeA{Nasdaq23a}. For this reason, we assume that the log price--to--book ratio process of the companies follows the unit root process with drift, that is,
\begin{equation}\label{03013}
\tilde{m}_t=C_m\psi_t+\tilde{m}_{t-1}+w_t
\end{equation}
under the real probability measure $\mathbb{P}$, where $w_t:=(w_{1,t},\dots,w_{n,t})'$ is an $(n\times 1)$ white noise process and $C_m$ is an $(n\times l)$ coefficient matrix, corresponding to the exogenous variables

If we substitute equations $\tilde{P}_t=\tilde{m}_t+\tilde{b}_t+\ln(\mathsf{B}_{t-1})$, $\tilde{P}_{t-1}=\tilde{m}_{t-1}+\ln(\mathsf{B}_{t-1})$, and $\tilde{d}_t=\tilde{\Delta}_t+\ln(\mathsf{B}_{t-1})$ into equation \eqref{03006}, then the dynamic Gordon growth model for the private companies is given by the following equation
\begin{equation}\label{03014}
\tilde{b}_t=-\tilde{m}_t+G_t\tilde{m}_{t-1}+G_tC_k\psi_t-(G_t-I_n)\tilde{\Delta}_t-h_t+G_tu_t
\end{equation}
under the real probability measure $\mathbb{P}$. Note that for $i$--th company, it follows from the above equation that
\begin{equation}\label{03015}
\tilde{b}_{i,t}=-\tilde{m}_{i,t}+g_{i,t}\tilde{m}_{i,t-1}+c_{i,k}\psi_tg_{i,t}-(g_{i,t}-1)\tilde{\Delta}_{i,t}-h_{i,t}+g_{i,t}u_{i,t}.
\end{equation}
where $c_{i,k}$ is an $i$--th row of the matrix $C_k$. Thus, if $i$--th company does not pay a dividend at time $t$, i.e., $d_{i,t}=0$, then as $\lim_{d_{i,t}\to 0}g_{i,t}=1$ and $\lim_{d_{i,t}\to 0}h_{i,t}=0$, equation \eqref{03015} becomes
\begin{equation}\label{03016}
\tilde{b}_{i,t}=-\tilde{m}_{i,t}+\tilde{m}_{i,t-1}+c_{i,k}\psi_t+u_{i,t}
\end{equation}
under the real probability measure $\mathbb{P}$. In this case, the approximate equation becomes exact, see \citeA{Battulga23b}.

As a result, combining equations \eqref{03012}, \eqref{03013}, and \eqref{03014}, one obtains a private company valuation model (system), which models the companies whose dividends are sometimes paid and sometimes not paid:
\begin{equation}\label{03017}
\begin{cases}
\tilde{b}_t=\nu_{b,t}+\Psi_{b,t}\tilde{m}_t^*+G_tu_t\\
z_t=\nu_{z,t}+Az_{t-1}^*+v_t\\
\tilde{m}_t=\nu_{m,t}+\tilde{m}_{t-1}+w_t
\end{cases}~~~\text{for}~t=1,\dots,T
\end{equation}
under the real probability measure $\mathbb{P}$, where $\nu_{b,t}:=G_tC_k\psi_t-(G_t-I_n)\tilde{\Delta}_t-h_t$ is an $(n\times 1)$ intercept process of log book value growth rate process $\tilde{b}_t$, $\Psi_{b,t}:=[-I_n:G_t]$ is an $(n\times 2n)$ matrix, whose first block matrix is $-I_n$ and second block matrix is $G_t$, $\tilde{m}_t^*:=(\tilde{m}_t',\tilde{m}_{t-1}')'$ is a $(2n\times 1)$ log price--to--book ratio process, $\nu_{z,t}:=C_z\psi_t$ is an $(\ell\times 1)$ intercept process of economic variables process $z_t$, and $\nu_{m,t}:=C_m\psi_t$ is an $(n\times 1)$ intercept process of log price--to--book ratio process $\tilde{m}_t$. In this paper, we assume that the log dividend--to--price process $\tilde{\Delta}_t$ is deterministic (exogenous variable), see equation \eqref{03165}. Let us denote a dimension of system \eqref{03017} by $\tilde{n}:=2n+\ell$. Note that system \eqref{03017} can be used for both private and public companies. However, we refer to the system as a log private company valuation model (system). For a private company valuation model, we refer to \citeA{Battulga23b}.

For log private company valuation system \eqref{03017}, because values of the log book value growth rate process $\tilde{b}_t$ and economic variables process $z_t$ are observed and values the log price--to--book ratio process $\tilde{m}_t$ is unobserved, to estimate parameters of the system, we should use the Kalman filtering. The first two lines of the system define the measurement equation and the last line of the system defines the transition equation, respectively, of the Kalman filtering, see \citeA{Lutkepohl05}.

The stochastic properties of system \eqref{03017} is governed by the random vectors $\{u_1,\dots,u_T,v_1,\dots,v_T,$ $w_1,\dots,w_T, m_0\}$. Because of the fact that if the measurement equation's white noise processes $u_t$ and $v_t$ are correlated with the transition equation's white noise process $w_t$, then the Kalman filtering will collapse, see Section 6, we assume that for $t=1,\dots,T$, the measurement equation's white noise processes $u_t$ and $v_t$, the transition equation's white noise process $w_t$, and an initial log price--to--book ratio vector $\tilde{m}_0$ are mutually independent, and follow normal distributions, namely,
\begin{equation}\label{03018}
\tilde{m}_0\sim \mathcal{N}(\mu_0,\Sigma_0), ~~~\eta\sim \mathcal{N}(0,\Sigma_{\eta\eta}),~~~w_t\sim \mathcal{N}(0,\Sigma_{ww}), ~~~\text{for}~t=1,\dots,T
\end{equation}
under the real probability measure $\mathbb{P}$, where $\eta_t:=(u_t',v_t')'$ is an $([n+\ell]\times 1)$ white noise process of the measurement equation and its covariance matrix equals 
\begin{equation}\label{03019}
\Sigma_{\eta\eta}:=\begin{bmatrix}
\Sigma_{uu} & \Sigma_{uv}\\
\Sigma_{vu} & \Sigma_{vv}
\end{bmatrix}.
\end{equation}

It is worth mentioning that one may be model the dividend process by the classic/standard method of \citeA{Merton73}, namely, the dividend process is proportional to the stock price process
\begin{equation}\label{ad001}
\tilde{d}_t=\tilde{\Delta}_t+\tilde{P}_{t-1}
\end{equation}
where $\tilde{\Delta}_t:=\ln\big(d_t/P_{t-1}\big)$ is an $(n\times 1)$ vector of dividend--to--price ratios (dividend yield) of the companies. Then, it can be shown that a book value growth rate process $\tilde{b}_t$, corresponding to dividend equation \eqref{ad001} is 
\begin{eqnarray}\label{ad005}
\tilde{b}_t&=&-\tilde{m}_t+\tilde{m}_{t-1}+G_tC_k\psi_t-(G_t-I_n)\tilde{\Delta}_t-h_t+G_tu_t\nonumber\\
&=&\nu_{b,t}+\Psi_b\tilde{m}_t^*+G_tu_t
\end{eqnarray}
under the real probability measure $\mathbb{P}$, $\Psi_b:=[-I_n:I_n]$ is an $(n\times 2n)$ matrix, whose first block matrix is $-I_n$ and second block matrix is $I_n$. Therefore, by replacing $\Psi_{b,t}$ with $\Psi_b$, all the theoretical results, which arise in the following sections are same for the both dividend models, except parameter estimations. Also, it can be shown that the dividend--to--price ratio process of the S\&P 500 index is the unit root process without drift.

\section{Option Pricing}

\citeA{Black73} developed a closed--form formula for evaluating the European call option. The formula assumes that the underlying asset follows geometric Brownian motion, but does not take dividends into account. For public companies, most stock options traded on the option exchange pay dividends at least once before they expire and for most private companies, they also pay dividends. Therefore, it is important to develop formulas for the European call and put options on dividend--paying stocks from a practical point of view. \citeA{Merton73} first time used continuous dividend in the Black--Scholes framework and obtained a similar pricing formula with the Black--Scholes formula. In this Section, we develop an option pricing model for private and public companies, where the dividend process is given by equation \eqref{03008} or \eqref{ad001}. To price the Black--Scholes call and put options, we use the risk--neutral valuation method. 

Let $T$ be a time to maturity of the Black--Scholes call and put options at time zero, and for $t=1,\dots,T$, $\xi_t:=(u_t',v_t',w_t')'$ be an $(\tilde{n}\times 1)$ white noise process of system \eqref{03017}. According to equation \eqref{03018}, $\xi_1,\dots,\xi_T$ is a random sequence of independent identically multivariate normally distributed random vectors with means of $(\tilde{n}\times 1)$ zero vector and covariance matrices of $(\tilde{n}\times \tilde{n})$ matrix $\Sigma_{\xi\xi}:=\text{diag}\{\Sigma_{\eta\eta},\Sigma_{ww}\}$. Therefore, a distribution of a white noise vector $\xi:=(\xi_1',\dots,\xi_T')'$ is given by 
\begin{equation}\label{03020}
\xi \sim \mathcal{N}\big(0,I_T\otimes\Sigma_{\xi\xi}\big),
\end{equation}
where $\otimes$ is the Kronecker product of two matrices.

Let $x:=(x_1',\dots,x_T')'$ be an $(\tilde{n}T\times 1)$ vector, which consists of all the log book value growth rates, economic variables, including the log spot interest rate, and log price--to--book ratios of the companies and whose $t$--th sub--vector is $x_t:=(\tilde{b}_t',z_t',\tilde{m}_t')'$. We define $\sigma$--fields, which play major roles in the paper: $\mathcal{F}_0:=\sigma(\mathsf{B}_0,\tilde{\Delta}_1,\dots,\tilde{\Delta}_T,z_0,\dots,z_{-p+1})$ and for $t=1,\dots,T$, $\mathcal{F}_{t}:=\mathcal{F}_0\vee\sigma(\tilde{b}_1,z_1,\dots,\tilde{b}_t,z_t)$ and $\mathcal{G}_{t}:=\mathcal{F}_t\vee \sigma(\tilde{m}_0,\dots,\tilde{m}_t)$, where for generic $\sigma$--fields $\mathcal{O}_1$ and $\mathcal{O}_2$, $\mathcal{O}_1\vee \mathcal{O}_2$ is the minimal $\sigma$--field containing them. Note that $\sigma(\mathsf{B}_0,\tilde{b}_1,\dots,\tilde{b}_t)=\sigma(\mathsf{B}_0,\mathsf{B}_1,\dots,\mathsf{B}_t)$, the $\sigma$--field $\mathcal{F}_t$ represents available information at time $t$ for the private companies, the $\sigma$--field $\mathcal{G}_t$ represents available information at time $t$ for public companies, and the $\sigma$--fields satisfy $\mathcal{F}_{t}\subset \mathcal{G}_{t}$ for $t=0,\dots,T$. For the public company, to price and hedge the options and life insurance products, we have to use the information $\mathcal{G}_t$. Therefore, one obtains pricing and hedging formulas for the public company by replacing the information $\mathcal{F}_t$, corresponding to the private company with the information $\mathcal{G}_t$. 

\subsection{Risk--Neutral Probability Measure}

To price the call and put options and equity--linked life insurance products, we need to change from the real probability measure to some risk--neutral probability measure. Let $D_t:=\exp\{-\tilde{r}_1-\dots-\tilde{r}_t\}=1\big/\prod_{s=1}^t(1+r_s)$ be a discount process, where $\tilde{r}_t=e_{\ell,1}'z_{t-1}$ is the log spot interest rate at time $t$. According to \citeA{Pliska97} (see also \citeA{Bjork20}), for all the companies, conditional expectations of a return processes $(P_{i,t}+d_{i,t})/P_{i,t-1}-1$ $(i=1,\dots,n)$ must equal the spot interest rate $r_t$ under some risk--neutral probability measure $\tilde{\mathbb{P}}$ and a filtration $\{\mathcal{G}_t\}_{t=0}^T$. Thus, it must hold
\begin{equation}\label{03021}
\tilde{\mathbb{E}}\big[(P_t+d_t)\oslash P_{t-1}\big|\mathcal{G}_{t-1}\big]=\exp\big\{\tilde{r}_ti_n\big\}
\end{equation}
for $t=1,\dots,T$, where $\tilde{\mathbb{E}}$ denotes an expectation under the risk--neutral probability measure $\mathbb{\tilde{P}}$. According to equation \eqref{03004}, condition \eqref{03021} is equivalent to the following condition
\begin{equation}\label{03022}
\tilde{\mathbb{E}}\big[\exp\big\{u_t-(\tilde{r}_ti_n-C_k\psi_t)\big\}\big|\mathcal{G}_{t-1}\big]=i_n.
\end{equation}

It should be noted that condition \eqref{03022} corresponds only to the white noise random process $u_t$. Thus, we need to impose a condition on the white noise random processes $v_t$ and $w_t$ under the risk--neutral probability measure. This condition is fulfilled by $\tilde{\mathbb{E}}[f(v_t,w_t)|\mathcal{G}_{t-1}]=\tilde{\theta}_t$ for any Borel function $f:\mathbb{R}^{n+\ell}\to\mathbb{R}^{n+\ell}$ and $\mathcal{G}_{t-1}$ measurable any $([n+\ell]\times 1)$ random vector $\tilde{\theta}_t$. Because for any admissible choices of $\tilde{\theta}_t$, condition \eqref{03022} holds, the market is incomplete. But prices of the options and equity--linked life insurance products are still consistent with the absence of arbitrage. 

In order to change from the real probability measure $\mathbb{P}$ to the risk neutral probability measure $\mathbb{\tilde{P}}$, we define the following state price density process:
\begin{equation}\label{ad007}
M_t=K_0\times L_t,~~~t=1,\dots,T,
\end{equation}
where
\begin{equation}\label{ad008}
K_0:=\exp\bigg\{\theta_0'\Sigma_0^{-1}\tilde{m}_0+\frac{1}{2}\theta_0'\Sigma_0^{-1}\theta_0\bigg\}
\end{equation}
and
\begin{equation}\label{ad009}
L_t:=\exp\bigg\{\sum_{s=1}^t\theta_s'\mathsf{G}_s^{-1}\Sigma_{\xi\xi}^{-1}\xi_s+\frac{1}{2}\sum_{s=1}^t\theta_s'\mathsf{G}_s^{-1}\Sigma_{\xi\xi}^{-1}\mathsf{G}_s^{-1}\theta_s\bigg\}
\end{equation}
with $L_0=1$, where for $s=1,\dots,T$, the random vector $\theta_s$ is measurable with respect to $\sigma$--field $\mathcal{G}_{s-1}$ and $\mathsf{G}_s:=\text{diag}\{G_s,I_\ell,I_n\}$ is an $(\tilde{n}\times \tilde{n})$ block diagonal matrix. Since $\mathbb{E}\big[M_T\big|\mathcal{F}_0\big]=\mathbb{E}\big[\mathbb{E}[M_T|\mathcal{G}_0]\big|\mathcal{F}_0\big]=1$ and for all $\omega\in\Omega$, $M_T(\omega)>0$,
\begin{equation}\label{ad010}
\tilde{\mathbb{P}}\big[A\big|\mathcal{F}_0\big]=\int_AM_T(\omega|\mathcal{F}_t)d\mathbb{P}(\omega|\mathcal{F}_0), ~~~\text{for all}~A\in\mathcal{G}_T
\end{equation}
becomes probability measure. Then, it can be shown that a relative entropy and variance of the state price density process at time $T$ are given by
\begin{eqnarray}\label{ad011}
I(\tilde{\mathbb{P}},\mathbb{P}\big|\mathcal{F}_0)&:=&\tilde{\mathbb{E}}\big[\ln(M_T)\big|\mathcal{F}_0\big]=\frac{1}{2}\tilde{\mathbb{E}}\bigg[\theta_0'\Sigma_0^{-1}\theta_0+\sum_{s=1}^T\theta_s'\mathsf{G}_s^{-1}\Sigma_{\xi\xi}^{-1}\mathsf{G}_s^{-1}\theta_s\bigg|\mathcal{F}_0\bigg]
\end{eqnarray}
and
\begin{equation}\label{ad012}
\text{Var}\big[M_T\big|\mathcal{F}_0\big]=\tilde{\mathbb{E}}\bigg[\exp\bigg\{\theta_0'\Sigma_0^{-1}\theta_0+\sum_{s=1}^T\theta_s'\mathsf{G}_s^{-1}\Sigma_{\xi\xi}^{-1}\mathsf{G}_s^{-1}\theta_s\bigg\}\bigg|\mathcal{F}_0\bigg]-1,
\end{equation}
see \citeA{Battulga24a}. As a result, according to \citeA{Battulga24a}, for each $t=0,\dots,T$, the optimal Girsanov kernel process $\theta_t^*$, which minimizes the variance of the state price density process and relative entropy is obtained by
\begin{equation}\label{03023}
\theta_t^*=\Theta_t \bigg(\tilde{r}_ti_n-C_k\psi_t-\frac{1}{2}\mathcal{D}[\Sigma_{uu}]\bigg),~~~t=1,\dots,T~~~\text{and}~~~\theta_0^*=0,
\end{equation}
where $\Theta_t=\big[G_t:(\mathcal{S}\Sigma_{uu}^{-1})'\big]'$ is an $(\tilde{n}\times n)$ matrix, $\mathcal{S}:=[\Sigma_{vu}':0]'$ is a $([n+\ell]\times n)$ matrix, and for a generic square matrix $O$, $\mathcal{D}[O]$ denotes a vector, consisting of diagonal elements of the matrix $O$. 

Consequently, the log price process $\tilde{P}_t$ is given by
\begin{equation}\label{03024}
\tilde{P}_t=G_t\bigg(\tilde{P}_{t-1}-\tilde{d}_t+\tilde{r}_t i_n-\frac{1}{2}\mathcal{D}[\Sigma_{uu}]\bigg)+\tilde{d}_t-h_t+G_t\tilde{u}_t
\end{equation}
and system \eqref{03017} becomes
\begin{equation}\label{03025}
\begin{cases}
\tilde{b}_t=\tilde{\nu}_{b,t}+\Psi_{b,t}\tilde{m}_t^*+E_tz_{t-1}^*+G_t\tilde{u}_t,\\
z_t=\tilde{\nu}_{z,t}+\tilde{A}z_{t-1}^*+\tilde{v}_t,\\
\tilde{m}_t=\nu_{m,t}+\tilde{m}_{t-1}+\tilde{w}_t,
\end{cases}~~~\text{for}~t=1,\dots,T
\end{equation}
under the risk--neutral probability measure $\tilde{\mathbb{P}}$, where $\tilde{\nu}_{b,t}:=-(G_t-I_n)\tilde{\Delta}_t-\frac{1}{2}G_t\mathcal{D}[\Sigma_{uu}]-h_t$ is an $(n\times 1)$ intercept process of the log book value growth rate process $\tilde{b}_t$, $E_t:=[G_ti_ne_{\ell,1}':0:\dots:0]$ is an ($n\times \ell p$) coefficient matrix of the economic variables process $z_{t-1}^*$, $\tilde{\nu}_{z,t}:=C_z\psi_t-\Sigma_{vu}\Sigma_{uu}^{-1}\big(C_k\psi_t+\frac{1}{2}\mathcal{D}[\Sigma_{uu}]\big)$ is an $(\ell\times 1)$ intercept process of the economic variables process $z_t$, and $\tilde{A}:=[\tilde{A}_1:A_2:\dots:A_p]$ with $\tilde{A}_1:=A_1+\Sigma_{vu}\Sigma_{uu}^{-1}i_ne_{\ell,1}'$ is an ($\ell\times \ell p$) coefficient matrix, see \citeA{Battulga24a}.

It is worth mentioning that because the white noise processes $(u_t',v_t')'$ and $w_t$ and initial price--to--book ratio $\tilde{m}_0$ are independent under the real probability measure $\mathbb{P}$, the white noise processes $(\tilde{u}_t',\tilde{v}_t')'$ and $\tilde{w}_t$ and initial price--to--book ratio $\tilde{m}_0$ are also independent under the risk--neutral probability measure $\mathbb{\tilde{P}}$ and the distributions of the log price--to--book ratio $\tilde{m}_t$ are same for both probability measures, see \citeA{Battulga24a}. Also, note that a joint distribution of a random vector $\tilde{\xi}:=(\tilde{\xi}_1',\dots,\tilde{\xi}_T')'$ with $\tilde{\xi}_t:=(\tilde{u}_t',\tilde{v}_t',\tilde{w}_t')'$ equals the joint distribution of the random vector $\xi=(\xi_1',\dots,\xi_T')'$, that is,
\begin{equation}\label{03026}
\tilde{\xi}\sim \mathcal{N}\big(0,I_T\otimes \Sigma_{\xi\xi}\big)
\end{equation}
under the risk--neutral probability measure $\mathbb{\tilde{P}}$.

By comparing systems \eqref{03017} and \eqref{03025}, one can deduce that for the first and second lines of system (20), the term $C_k\psi_t$, which is a part of the log required rate of return process vanishes and the terms, which are related to the log spot interest rate and covariance matrix of the random vector $u_t$ emerge. It is an analog of the well--known fact that to price the European options, if a stock price is modeled by geometric Brownian motion, the drift process of the geometric Brownian motion is replaced by a risk--free interest rate process under a risk--neutral probability measure. However, the last line of system \eqref{03025} does not change because the white noise processes $\eta_t=(u_t',v_t')'$ and $w_t$ and initial price--to--book ratio $\tilde{m}_0$ are independent under the real probability measure $\mathbb{P}$. Also, we can conclude that because of the change of probability measure, the log price at time $t$ is changed, see equations \eqref{03006} and \eqref{03024}.

To obtain a joint distribution of the random process $x_t$, let us write system \eqref{03025} in the following form
\begin{equation}\label{03027}
\begin{cases}
\tilde{b}_t=\tilde{\nu}_{b,t}+\Psi_{b,t}\tilde{m}_t^*+E_tz_{t-1}^*+G_t\tilde{u}_t\\
z_t^*=\tilde{\nu}_{z,t}^*+\tilde{A}^*z_{t-1}^*+\tilde{v}_t^*\\
\tilde{m}_t^*=\nu_{m,t}^*+C\tilde{m}_{t-1}^*+\tilde{w}_t^*
\end{cases}~~~\text{for}~t=1,\dots,T
\end{equation}
under the risk--neutral probability measure $\mathbb{\tilde{P}}$, where $\tilde{\nu}_{z,t}^*:=(\tilde{\nu}_{z,t}',0,\dots,0)'$ is an $(\ell p\times 1)$ intercept process and $\tilde{v}_t^*:=(\tilde{v}_t',0,\dots,0)'$ is an $(\ell p\times 1)$ white noise process of the economic variables process $z_t^*$, and $\nu_{m,t}^*:=\big(\nu_{m,t}',0\big)'$ is a $(2n\times 1)$ intercept process and $\tilde{w}_t^*:=(\tilde{w}_t',0)'$ is a $(2n \times 1)$ white noise process of the log price--to--book ratio process $\tilde{m}_t^*$, and
\begin{equation}\label{03028}
\tilde{A}^*:=\begin{bmatrix}
\tilde{A}_1 & \dots & A_{p-1} & A_p\\
I_\ell & \dots & 0 & 0\\
\vdots & \ddots & \vdots & \vdots\\
0 & \dots & I_\ell & 0
\end{bmatrix},~~~\text{and}~~~
C:=\begin{bmatrix}
I_n & 0\\
I_n & 0
\end{bmatrix}
\end{equation}
are $(\ell p\times \ell p)$ and $(2n\times 2n)$ matrices, respectively. We denote a dimension of system \eqref{03027} by $\tilde{n}^*:=3n+\ell p$. Here we assume that initial value of the log price--to--book ratio process equals $\tilde{m}_0^*:=(\tilde{m}_0',\tilde{m}_0')'$. Consequently, its expectation and covariance matrix are $\tilde{m}_{0|0}^*:=\mathbb{E}[\tilde{m}_0^*|\mathcal{F}_0]=(\mu_0',\mu_0')'$ and $\Sigma(\tilde{m}_0^*|0):=\text{Var}[\tilde{m}_0^*|\mathcal{F}_0]=\mathsf{E}_2\otimes\Sigma$, respectively, where $\mathsf{E}_n$ is an ($n\times n$) matrix, which consists of ones. One can write system \eqref{03027} in VAR(1) form:
\begin{equation}\label{03029}
Q_{0,t}x_t^*=\tilde{\nu}_t^*+Q_{1,t}x_{t-1}^*+\mathsf{G}_t^*\tilde{\xi}_t^*
\end{equation} 
under the risk--neutral probability measure $\mathbb{\tilde{P}}$, where $x_t^*:=\big(\tilde{b}_t',(z_t^*)',(\tilde{m}_t^*)'\big)'$, $\tilde{\nu}_t^*:=\big(\tilde{\nu}_{b,t}',(\tilde{\nu}_{z,t}^*)',(\tilde{\nu}_{m,t}^*)'\big)'$, and $\tilde{\xi}_t^*:=\big(\tilde{u}_t',(\tilde{v}_t^*)',(\tilde{w}_t^*)'\big)'$ are $(\tilde{n}^*\times 1)$ VAR(1) process, intercept process, and white noise process, respectively,
\begin{equation}\label{03030}
Q_{0,t}:=\begin{bmatrix}
I_n & 0 & -\Psi_{b,t}\\
0 & I_{\ell p} & 0\\
0 & 0 & I_{2n}
\end{bmatrix},~~~
Q_{1,t}:=\begin{bmatrix}
0 & E_t & 0\\
0 & \tilde{A}^* & 0\\
0 & 0 & C
\end{bmatrix}~~~\text{and}~~~
\mathsf{G}_t^*=\begin{bmatrix}
G_t & 0 & 0\\
0 & \mathsf{J}_z'\mathsf{J}_z & 0\\
0 & 0 & \mathsf{J}_m'\mathsf{J}_m
\end{bmatrix}
\end{equation}
are $(\tilde{n}^*\times \tilde{n}^*)$ coefficient matrices, $\mathsf{J}_z:=[I_{\ell}:0]$ is an $(\ell\times \ell p)$ matrix, whose first block matrix is $I_\ell$ and other blocks are zero, and $\mathsf{J}_m:=[I_{n}:0]$ is an $(n \times 2n)$ matrix, whose first block matrix is $I_n$ and second block is zero. By repeating equation \eqref{03029}, one gets that for $s=t+1,\dots,T$,
\begin{equation}\label{03031}
x_s^*=\Pi_{t,s}^*x_t^*+\sum_{\beta=t+1}^s\Pi_{\beta,s}^*\tilde{\nu}_\beta^*+\sum_{\beta=t+1}^s\Pi_{\beta,s}^*\mathsf{G}_\beta^*\tilde{\xi}_\beta^*,
\end{equation}
where the coefficient matrices are
\begin{equation}\label{03032}
\Pi_{\beta,s}^*:=\prod_{\alpha=\beta+1}^sQ_{0,\alpha}^{-1}Q_{1,\alpha}=\begin{bmatrix}
0 & E_s(\tilde{A}^*)^{s-\beta-1} & \Psi_{b,s}C^{s-\beta}\\
0 & (\tilde{A}^*)^{s-\beta} & 0\\
0 & 0 & C^{s-\beta}
\end{bmatrix}
\end{equation}
for $\beta=0,\dots,s-1$ and 
\begin{equation}\label{03033}
\Pi_{s,s}^*:=Q_{0,s}^{-1}=\begin{bmatrix}
I_n & 0 & \Psi_{b,s}\\
0 & I_{\ell p} & 0\\
0 & 0 & I_{2n}
\end{bmatrix}.
\end{equation}
By using the matrices $\mathsf{J}_z$ and $\mathsf{J}_m$, one can transform the intercept processes $\tilde{\nu}_{z,t}$ and $\nu_{m,t}$ into the intercept processes $\tilde{\nu}_{z,t}^*$ and $\nu_{m,t}^*$, i.e., $\tilde{\nu}_{z,t}^*=\mathsf{J}_z'\tilde{\nu}_{z,t}$ and $\nu_{m,t}^*=\mathsf{J}_m'\nu_{m,t}$. Similarly, it holds $\tilde{v}_t^*=\mathsf{J}_z'\tilde{v}_t$ and $\tilde{w}_t^*=\mathsf{J}_m'\tilde{w}_t$. Now, let $\mathsf{J}_x=\text{diag}\{I_n,\mathsf{J}_z,\mathsf{J}_m\}$ be an $(\tilde{n}\times \tilde{n}^*)$ block diagonal matrix. Then, since $\tilde{\nu}_\beta^*=\mathsf{J}_x'\tilde{\nu}_\beta$ with $\tilde{\nu}_\beta:=\big(\tilde{\nu}_{b,\beta}',\tilde{\nu}_{z,\beta}',\nu_{m,\beta}'\big)'$ and $\tilde{\xi}_\beta^*=\mathsf{J}_x'\tilde{\xi}_\beta$, equation \eqref{03031} can be written by
\begin{equation}\label{03034}
x_s^*=\Pi_{t,s}^*x_t^*+\sum_{\beta=t+1}^s\Pi_{\beta,s}^*\mathsf{J}_x'\tilde{\nu}_\beta+\sum_{\beta=t+1}^s\Pi_{\beta,s}^*\mathsf{G}_\beta^*\mathsf{J}_x'\tilde{\xi}_\beta
\end{equation}
under the risk--neutral probability measure $\mathbb{\tilde{P}}$. The matrix $\mathsf{J}_x$ can be used to extract the process $x_s$ from the VAR(1) process $x_s^*$, i.e., $x_s=\mathsf{J}_xx_s^*$. Consequently, since $\mathsf{J}_x'\mathsf{J}_x\mathsf{G}_\beta^*=\mathsf{G}_\beta^*$, we have
\begin{equation}\label{03035}
x_s=\mathsf{J}_x\Pi_{t,s}^*x_t^*+\sum_{\beta=t+1}^s\Pi_{\beta,s}\tilde{\nu}_\beta+\sum_{\beta=t+1}^s\Pi_{\beta,s}\mathsf{G}_\beta\tilde{\xi}_\beta,~~~s=t+1,\dots,T
\end{equation}
under the risk--neutral probability measure $\mathbb{\tilde{P}}$, where $\Pi_{\beta,s}:=\mathsf{J}_x\Pi_{\beta,s}^*\mathsf{J}_x'$ is an $(\tilde{n}\times \tilde{n})$ matrix and $\mathsf{G}_\beta=\mathsf{J}_x\mathsf{G}_\beta^*\mathsf{J}_x'=\text{diag}\{G_\beta,I_\ell,I_n\}$ is the $(\tilde{n}\times \tilde{n})$ block diagonal matrix. Therefore, conditional on the information $\mathcal{G}_t$, for $t+1\leq s,s_1,s_2\leq T$, a conditional expectation at time $s$ and conditional covariance matrix at time $s_1$ and $s_2$ of the process $x_t$ are given by the following equations
\begin{equation}\label{03036}
\tilde{\mu}_{s|t}(\mathcal{G}_t):=\mathbb{\tilde{E}}\big[x_s\big|\mathcal{G}_t\big]=\mathsf{J}_x\Pi_{t,s}^* x_t^*+\sum_{\beta=t+1}^s\Pi_{\beta,s}\tilde{\nu}_\beta
\end{equation}
and
\begin{equation}\label{03037}
\Sigma_{s_1,s_2|t}(\mathcal{G}_t):=\widetilde{\text{Cov}}\big[x_{s_1},x_{s_2}\big|\mathcal{G}_t\big]=\sum_{\beta=t+1}^{s_1\wedge s_2}\Pi_{\beta,s_1}\mathsf{G}_\beta\Sigma_{\xi\xi}\mathsf{G}_\beta\Pi_{\beta,s_2}',
\end{equation}
where $s_1\wedge s_2$ is a minimum of $s_1$ and $s_2$. 

\subsection{Conditional Distribution for Given $\mathcal{F}_t$}

To price the options and the equity--linked life insurance products, we need a distribution of a random vector $\bar{x}_t^c:=(x_{t+1}',\dots,x_T')'$ given $\mathcal{F}_t$. For this reason, let $y_t:=(\tilde{b}_t',z_t')'$ be an $([n+\ell]\times 1)$ random process of the observed variables, $\mathsf{J}_y:=[I_{n+\ell}:0]$ be an $([n+\ell]\times\tilde{n}^*)$ matrix, which is used to extract the random process $y_t$ from the random process $x_t^*$, $\mathsf{J}_{m^*}:=[0:I_{2n}]$ be a $(2n\times\tilde{n}^*)$ matrix, which is used to extract the random process $\tilde{m}_t^*$ from the random process $x_t^*$, $y_t^*:=(\tilde{b}_t',(z_t^*)')'$ be an $([n+\ell p]\times 1)$ random process of the observed variables, and $\Pi_{t,s}^*=\big[\Pi_{t,s}^{*y}:\Pi_{t,s}^{*m}\big]$ be a partition of the matrix $\Pi_{t,s}^*$, corresponding to the processes $y_t^*$ and $\tilde{m}_t^*$. Then, equation \eqref{03034} is written by
\begin{equation}\label{03038}
x_t^*=\Pi_{0,t}^{*y}y_0^*+\Pi_{0,t}^{*m}\tilde{m}_0^*+\sum_{s=1}^t\Pi_{s,t}^*\mathsf{J}_x'\tilde{\nu}_s+\sum_{s=1}^t\Pi_{s,t}^*\mathsf{G}_s^*\mathsf{J}_x'\tilde{\xi}_s
\end{equation}
Because of the fact that $y_t=\mathsf{J}_yx_t^*$, $\tilde{m}_t^*=\mathsf{J}_{m^*}x_t^*$, the random vectors $\tilde{m}_0^*$ and $(\tilde{\xi}_1',\dots,\tilde{\xi}_T')'$ are independent given $\mathcal{F}_0$, $\widetilde{\text{Var}}\big[\tilde{m}_0^*\big|\mathcal{F}_0\big]=\Sigma(\tilde{m}_0^*|0)$, and
\begin{equation}\label{03039}
\Sigma_{s_1,s_2|t}^*(\mathcal{G}_t):=\widetilde{\text{Cov}}\big[x_{s_1}^*,x_{s_2}^*\big|\mathcal{G}_t\big]=\sum_{\beta=t+1}^{s_1\wedge s_2}\Pi_{\beta,s_1}^*\mathsf{G}_\beta^*\mathsf{J}_x'\Sigma_{\xi\xi}\mathsf{J}_x\mathsf{G}_\beta^*(\Pi_{\beta,s_2}^*)',
\end{equation}
covariances between vectors of observed variables at times $1,\dots,t$ and log price--to--book ratio at time $t$ are obtained by
\begin{equation}\label{03040}
\widetilde{\text{Cov}}\big[y_{s_1},y_{s_2}\big|\mathcal{F}_0\big]=\mathsf{J}_y\Big(\Pi_{0,s_1}^{*m}\Sigma(\tilde{m}_0^*|0)(\Pi_{0,s_2}^{*m})'+\Sigma_{s_1,s_2|0}^*(\mathcal{G}_t)\Big)\mathsf{J}_y',~~~s_1,s_2=1,\dots,t,
\end{equation}
\begin{equation}\label{03041}
\widetilde{\text{Cov}}\big[\tilde{m}_t^*,y_{s}\big|\mathcal{F}_0\big]=\mathsf{J}_{m^*}\Big(\Pi_{0,t}^{*m}\Sigma(\tilde{m}_0^*|0)(\Pi_{0,s}^{*m})'+\Sigma_{t,s|0}^*(\mathcal{G}_t)\Big)\mathsf{J}_y', ~~~s=1,\dots,t
\end{equation}
and
\begin{equation}\label{03042}
\widetilde{\text{Var}}\big[\tilde{m}_t^*\big|\mathcal{F}_0\big]=\mathsf{J}_{m^*}\Big(\Pi_{0,t}^{*m}\Sigma(\tilde{m}_0^*|0)(\Pi_{0,t}^{*m})'+\Sigma_{t,t|0}^*(\mathcal{G}_t)\Big)\mathsf{J}_{m^*}'.
\end{equation}
Also, since $\tilde{\mathbb{E}}\big[\tilde{m}_0^*\big|\mathcal{F}_0\big]=\tilde{m}_{0|0}^*$, conditional on $\mathcal{F}_0$, expectations of the processes $y_t$ and $\tilde{m}_t^*$ are given by
\begin{equation}\label{03043}
\tilde{\mathbb{E}}\big[y_t\big|\mathcal{F}_0\big]=\mathsf{J}_y\bigg(\Pi_{0,t}^{*y}y_0^*+\Pi_{0,t}^{*m}\tilde{m}_{0|0}^*+\sum_{s=1}^t\Pi_{s,t}^*\mathsf{J}_x'\tilde{\nu}_s\bigg)
\end{equation}
and
\begin{equation}\label{03044}
\tilde{\mathbb{E}}\big[\tilde{m}_t^*\big|\mathcal{F}_0\big]=\mathsf{J}_{m^*}\bigg(\Pi_{0,t}^{*y}y_0^*+\Pi_{0,t}^{*m}\tilde{m}_{0|0}^*+\sum_{s=1}^t\Pi_{s,t}^*\mathsf{J}_x'\tilde{\nu}_s\bigg),
\end{equation}
respectively. To obtain conditional distribution of the log price--to--book ratio at time $t$, $\tilde{m}_t$, given the information $\mathcal{F}_t$, let us introduce the following notations: $\bar{y}_t:=(y_1',\dots,y_t')'$ be an $([(n+\ell)t]\times 1)$ vector, which consists of the observed variables, $\widetilde{\text{Cov}}[y_t|\mathcal{F}_0]$ is an $([(n+\ell)t]\times [(n+\ell)t])$ covariance matrix of the random vector of observed variables $\bar{y}_t$, whose elements are calculated by equation \eqref{03040}, $\widetilde{\text{Cov}}[\tilde{m}_t^*,\bar{y}_t|\mathcal{F}_0]$ is a $(2n\times [(n+\ell)t])$ covariance matrix between the log price--to--book ratio process at time $t$, $\tilde{m}_t^*$ and the random vector of observed variables $\bar{y}_t$, whose elements are calculated by equation \eqref{03041}, and $\mathbb{\tilde{E}}\big[\bar{y}_t\big|\mathcal{F}_0\big]$ is an $([(n+\ell)t]\times 1)$ expectation of the random vector of the observed variables $\bar{y}_t$, whose elements are calculated by equation \eqref{03043}. As a result, according to the well--known conditional distribution formula of a multivariate normal random vector, one gets that conditional on the information $\mathcal{F}_t$, a distribution of the log price--to--book ratio process at time $t$ is given by
\begin{equation}\label{03045}
\tilde{m}_t^*~|~\mathcal{F}_t\sim \mathcal{N}\big(\tilde{\tilde{m}}_{t|t}^*,\Sigma(\tilde{m}_t^*|t)\big),
\end{equation}
under the risk--neutral probability measure $\mathbb{\tilde{P}}$, where
\begin{equation}\label{03046}
\tilde{\tilde{m}}_{t|t}^*:=\tilde{\mathbb{E}}\big[\tilde{m}_t^*\big|\mathcal{F}_0\big]+\widetilde{\text{Cov}}[\tilde{m}_t^*,\bar{y}_t|\mathcal{F}_0]\widetilde{\text{Cov}}[\bar{y}_t|\mathcal{F}_0]^{-1}\big(\bar{y}_t-\mathbb{\tilde{E}}[\bar{y}_t|\mathcal{F}_0]\big)
\end{equation}
and
\begin{equation}\label{03047}
\Sigma(\tilde{m}_t^*|t):=\widetilde{\text{Var}}[\tilde{m}_t^*|\mathcal{F}_0]-\widetilde{\text{Cov}}[\tilde{m}_t^*,\bar{y}_t|\mathcal{F}_0]\widetilde{\text{Cov}}[\bar{y}_t|\mathcal{F}_0]^{-1}\widetilde{\text{Cov}}[\tilde{m}_t^*,\bar{y}_t|\mathcal{F}_0]'
\end{equation}
are conditional expectation and covariance matrix of the log price--to--book ratio process $\tilde{m}_t$ given the information $\mathcal{F}_t$. By equation \eqref{03035}, we have that
\begin{equation}\label{03048}
x_s=\mathsf{J}_x\Pi_{t,s}^{*y}y_t^*+\mathsf{J}_x\Pi_{t,s}^{*m}\tilde{m}_t^*+\sum_{\beta=t+1}^s\Pi_{\beta,s}\tilde{\nu}_\beta+\sum_{\beta=t+1}^s\Pi_{\beta,s}\mathsf{G}_\beta\tilde{\xi}_\beta,~~~s=t+1,\dots,T
\end{equation}
Hence, conditional on the information $\mathcal{F}_t$, for $t+1\leq s,s_1,s_2\leq T$, a conditional expectation at time $s$ and conditional covariance matrix at time $s_1$ and $s_2$ of the process $x_t$ are given by the following equations
\begin{equation}\label{03049}
\tilde{\mu}_{s|t}(\mathcal{F}_t):=\mathbb{\tilde{E}}\big[x_s\big|\mathcal{F}_t\big]=\mathsf{J}_x\Pi_{t,s}^{*y} y_t^*+\mathsf{J}_x\Pi_{t,s}^{*m} \tilde{\tilde{m}}_{t|t}^*+\sum_{\beta=t+1}^s\Pi_{\beta,s}\tilde{\nu}_\beta
\end{equation}
and 
\begin{equation}\label{03050}
\Sigma_{s_1,s_2|t}(\mathcal{F}_t):=\widetilde{\text{Cov}}\big[x_{s_1},x_{s_2}\big|\mathcal{F}_t\big]=\mathsf{J}_x\Pi_{t,s_1}^{*m} \Sigma(\tilde{m}_t^*|t)(\Pi_{t,s_2}^{*m})'\mathsf{J}_x'+\sum_{\beta=t+1}^{s_1\wedge s_2}\Pi_{\beta,s_1}\mathsf{G}_\beta\Sigma_{\xi\xi}\mathsf{G}_\beta\Pi_{\beta,s_2}'.
\end{equation}
Note that because the log price--to--book ratio process $\tilde{m}_t$ is unobserved for the information $\mathcal{F}_t$, differences arise in expectations \eqref{03036} and \eqref{03049} and covariance matrices \eqref{03037} and \eqref{03050}. Consequently, due to equation \eqref{03048}, conditional on the information $\mathcal{F}_t$, a joint distribution of the random vector $\bar{x}_t^c$ is
\begin{equation}\label{03051}
\bar{x}_t^c~|~\mathcal{F}_t\sim \mathcal{N}\big(\tilde{\mu}_t^c(\mathcal{F}_t),\Sigma_t^c(\mathcal{F}_t)\big),~~~t=0,\dots,T-1
\end{equation}
under the risk--neutral probability measure $\tilde{\mathbb{P}}$, where $\tilde{\mu}_t^c(\mathcal{F}_t):=\big(\tilde{\mu}_{t+1|t}(\mathcal{F}_t),\dots,\tilde{\mu}_{T|t}(\mathcal{F}_t)\big)'$ is a conditional expectation and $\Sigma_t^c(\mathcal{F}_t):=\big(\Sigma_{s_1,s_2|t}(\mathcal{F}_t)\big)_{s_1,s_2=t+1}^T$ is a conditional covariance matrix of the random vector $\bar{x}_t^c$ and are calculated by equations \eqref{03049} and \eqref{03050}, respectively. 

\subsection{Forward Probability Measure}

According to \citeA{Geman95}, cleaver change of probability measure leads to a significant reduction in the computational burden of derivative pricing. A frequently used probability measure that reduces the computational burden is the forward probability measure and to price and hedge the European call and put options and equity--linked life insurance products, we will apply it. To define the forward probability measure, we need to zero--coupon bond. It is the well--known fact that conditional on $\mathcal{F}_t$, price at time $t$ of zero--coupon bond paying face value 1 at time $u$ is $B_{t,u}(\mathcal{F}_t):=\frac{1}{D_t}\mathbb{\tilde{E}}\big[D_u\big|\mathcal{F}_t\big]$. For $u=t+1,\dots,T$, the $(t,u)$--forward probability measure is defined by
\begin{equation}\label{03052}
\mathbb{\hat{P}}_{t,u}\big[A\big|\mathcal{F}_t\big]:=\frac{1}{D_tB_{t,u}(\mathcal{F}_t)}\int_AD_u\mathbb{\tilde{P}}\big[\omega|\mathcal{F}_t\big]~~~\text{for all}~A\in \mathcal{G}_T.
\end{equation}
Recall that the log spot interest rate is given by $\tilde{r}_s=e_{\ell,1}'z_{s-1}$. Therefore, a negative exponent of $D_u/D_t$ in the zero--coupon bond formula is represented by 
\begin{equation}\label{03053}
\sum_{\beta=t+1}^u\tilde{r}_\beta=\tilde{r}_{t+1}+e_{\ell,1}'J_z\Bigg[\sum_{\beta=t+1}^{u-1}J_{\beta|t}\Bigg]\bar{x}_t^c=\tilde{r}_{t+1}+\gamma_{t,u}'\bar{x}_t^c
\end{equation}
with convention $\sum_{i=a+1}^ab_i=0$ for $a\in \mathbb{N}\cup{\{0\}}$, where $J_z:=[0:I_\ell:0]$ is an $(\ell\times \tilde{n})$ matrix, whose second block matrix equals $I_\ell$ and other two blocks are zero and it can be used to extract the random process $z_s$ from the random process $x_s$, $J_{\beta|t}:=[0:I_{\tilde{n}}:0]$ is an $(\tilde{n}\times [\tilde{n}(T-t)])$ matrix, whose $(\beta-t)$--th block matrix equals $I_{\tilde{n}}$ and others are zero and it can be used to extract the random vector $x_\beta$ from the random vector $\bar{x}_t^c$, and $\gamma_{t,u}':=e_{\ell,1}'J_z\sum_{\beta=t+1}^{u-1}J_{\beta|t}$ is a $(1\times [\tilde{n}(T-t)])$ vector. Therefore, due to equation \eqref{03051}, two times of negative exponent of the price at time $t$ of the zero--coupon bond is represented by 
\begin{eqnarray}\label{03054}
&&2\sum_{s=t+1}^u\tilde{r}_s+\big(\bar{x}_t^c-\tilde{\mu}_t^c(\mathcal{F}_t)\big)'\big(\Sigma_t^c(\mathcal{F}_t)\big)^{-1}\big(\bar{x}_t^c-\tilde{\mu}_t^c(\mathcal{F}_t)\big)\nonumber\\
&&=\Big(\bar{x}_t^c-\tilde{\mu}_t^c(\mathcal{F}_t)+\Sigma_t^c(\mathcal{F}_t)\gamma_{t,u}\Big)'\big(\Sigma_t^c(\mathcal{F}_t)\big)^{-1}\Big(\bar{x}_t^c-\tilde{\mu}_t^c(\mathcal{F}_t)+\Sigma_t^c(\mathcal{F}_t)\gamma_{t,u}\Big)\\
&&+2\big(\tilde{r}_{t+1}+\gamma_{t,u}'\tilde{\mu}_t^c(\mathcal{F}_t)\big)-\gamma_{t,u}'\Sigma_t^c(\mathcal{F}_t)\gamma_{t,u}.\nonumber
\end{eqnarray}
As a result, for given $\mathcal{F}_t$, the price at time $t$ of the zero--coupon bond is
\begin{equation}\label{03055}
B_{t,u}(\mathcal{F}_t)=\exp\bigg\{-\tilde{r}_{t+1}-\gamma_{t,u}'\tilde{\mu}_t^c(\mathcal{F}_t)+\frac{1}{2}\gamma_{t,u}'\Sigma_t^c(\mathcal{F}_t)\gamma_{t,u}\bigg\}.
\end{equation}
Consequently, conditional on the information $\mathcal{F}_t$, a joint distribution of the random vector $\bar{x}_t^c$ is given by
\begin{equation}\label{03056}
\bar{x}_t^c~|~\mathcal{F}_t\sim \mathcal{N}\big(\hat{\mu}_{t,u}^c(\mathcal{F}_t),\Sigma_t^c(\mathcal{F}_t)\big),~~~t=0,\dots,T-1
\end{equation}
under the $(t,u)$--forward probability measure $\hat{\mathbb{P}}_{t,u}$, where $\hat{\mu}_{t,u}^c(\mathcal{F}_t):=\tilde{\mu}_t^c(\mathcal{F}_t)-\Sigma_t^c(\mathcal{F}_t)\gamma_{t,u}$ and $\Sigma_t^c(\mathcal{F}_t)$ are conditional expectation and conditional covariance matrix, respectively, of the random vector $\bar{x}_t^c$. As we compare equations \eqref{03051} and \eqref{03056}, we conclude that because of the forward probability measure, the expectation vector, corresponding to the risk--neutral probability measure of the random vector $\bar{x}_t^c$ is changed by the additional term, while covariance matrices of the random vector $\bar{x}_t^c$ are same for both probability measures. Therefore, for $s=t+1,\dots,T,$ $(s-t)$--th block vector of the conditional expectation $\hat{\mu}_t^c(\mathcal{F}_t)$ is given by
\begin{equation}\label{03058}
\hat{\mu}_{s|t,u}(\mathcal{F}_t):=J_{s|t}\hat{\mu}_{t,u}^c(\mathcal{F}_t)=\tilde{\mu}_{s|t}(\mathcal{F}_t)-\sum_{\beta=t+1}^{u-1}\big(\Sigma_{s,\beta|t}(\mathcal{F}_t)\big)_{n+1},
\end{equation}
where for a generic matrix $O$, we denote its $j$--th column by $(O)_j$. Also, it is clear that
\begin{equation}\label{03059}
\gamma_{t,u}'\tilde{\mu}_t^c(\mathcal{F}_t)=\sum_{\beta=t+1}^{u-1}\big(\tilde{\mu}_{\beta|t}(\mathcal{F}_t)\big)_{n+1}=\sum_{\beta=t+1}^{u-1}\big(\tilde{\mu}_{\beta|t}^z(\mathcal{F}_t)\big)_1
\end{equation}
and
\begin{equation}\label{03060}
\gamma_{t,u}'\Sigma_t^c(\mathcal{F}_t)\gamma_{t,u}=\sum_{\alpha=t+1}^{u-1}\sum_{\beta=t+1}^{u-1}\big(\Sigma_{\alpha,\beta|t}(\mathcal{F}_t)\big)_{n+1,n+1},
\end{equation}
where $\tilde{\mu}_{\beta|t}^z(\mathcal{F}_t):=J_z\tilde{\mu}_{\beta|t}(\mathcal{F}_t)$ is an expectation of the random vector $z_\beta$ under the risk--neutral probability measure $\mathbb{\tilde{P}}$, for a generic vector $o$, we denote its $j$--th element by $(o)_j$, and for a generic square matrix $O$, we denote its $(i,j)$--th element by $(O)_{i,j}$. If we substitute equations \eqref{03059} and \eqref{03060} into equation \eqref{03055}, then we obtain price at time $t$ of the zero--coupon bond
\begin{equation}\label{03061}
B_{t,u}(\mathcal{F}_t)=\exp\bigg\{-\tilde{r}_{t+1}-\sum_{\beta=t+1}^{u-1}\big(\tilde{\mu}_{\beta|t}^z(\mathcal{F}_t)\big)_1+\frac{1}{2}\sum_{\alpha=t+1}^{u-1}\sum_{\beta=t+1}^{u-1}\big(\Sigma_{\alpha,\beta|t}(\mathcal{F}_t)\big)_{n+1,n+1}\bigg\}.
\end{equation}
For $k=t+1,\dots,T$, the log price at time $k$ is written in terms of the log price--to--book ratio at time $k$, the log book value growth rates at times $t+1,\dots,k$, and the log book value at time $t$, that is, 
\begin{equation}\label{03062}
\tilde{P}_k=\tilde{m}_k+\tilde{b}_{t+1}+\dots+\tilde{b}_k+\ln(\mathsf{B}_t)
\end{equation}
for $t=0,\dots,k-1$. Let $J_b:=[I_n:0]$ be an $(n\times \tilde{n})$ matrix, whose first block matrix equals $I_n$ and others are zero and $J_m:=[0:I_n]$ be an $(n\times \tilde{n})$ matrix, whose last block matrix equals $I_n$ and others are zero. Those matrices are used to extract the random processes $\tilde{b}_s$ and $\tilde{m}_s$ from the random process $x_s$, i.e., $\tilde{b}_s=J_bx_s$ and $\tilde{m}_s=J_mx_s$ for $s=1,\dots,T$. Therefore, the log price at time $k$ is represented by
\begin{equation}\label{03063}
\tilde{P}_k=K_{k|t}\bar{x}_t^c+\ln(\mathsf{B}_t),
\end{equation}
where the matrix $K_{k|t}$ equals
\begin{equation}\label{ad006}
K_{k|t}:=J_mJ_{k|t}+J_b\sum_{\beta=t+1}^kJ_{\beta|t}.
\end{equation}
Let for $s=t+1,\dots,k$, $\hat{\mu}_{s|t,u}^b(\mathcal{F}_t):=J_b\hat{\mu}_{s|t,u}(\mathcal{F}_t)$, $\hat{\mu}_{s|t,u}^z(\mathcal{F}_t):=J_z\hat{\mu}_{s|t,u}(\mathcal{F}_t)$, and $\hat{\mu}_{s|t}^m(\mathcal{F}_t):=J_m\hat{\mu}_{s|t}(\mathcal{F}_t)$ be expectations of the random processes $\tilde{b}_s$, $z_s$ and $\tilde{m}_s$ under the $(t,u)$--forward probability measure $\mathbb{\hat{P}}_{t,u}$ for given $\mathcal{F}_t$. Hence, conditional on $\mathcal{F}_t$, an expectation of the log price at time $k$ under the $(t,u)$--forward probability measure $\mathbb{\hat{P}}_{t,u}$ equals
\begin{equation}\label{03064}
\hat{\mu}_{k|t,u}^{\tilde{P}}(\mathcal{F}_t):=\mathbb{\hat{E}}_{t,u}\big[\tilde{P}_k\big|\mathcal{F}_t\big]=\hat{\mu}_{k|t,u}^m(\mathcal{F}_t)+\sum_{\beta=t+1}^k\hat{\mu}_{\beta|t,u}^b(\mathcal{F}_t)+\ln(\mathsf{B}_t),
\end{equation}
where $\mathbb{\hat{E}}_{t,u}$ is an expectation under the $(t,u)$--forward probability measure $\mathbb{\hat{P}}_{t,u}$. According to the fact that $J_{s_1|t}\Sigma_t^c(\mathcal{F}_t) J_{s_2|t}'=\Sigma_{s_1,s_2|t}(\mathcal{F}_t)$, conditional on $\mathcal{F}_t$, a covariance matrix of the log price at time $T$ under the $(t,u)$--forward probability measure $\mathbb{\hat{P}}_{t,u}$ is obtained by 
\begin{eqnarray}\label{03065}
\Sigma_{k|t}^{\tilde{P}}(\mathcal{F}_t)&:=&\widehat{\text{Var}}\big[\tilde{P}_k\big|\mathcal{F}_t\big]=\sum_{\alpha=t+1}^k\sum_{\beta=t+1}^kJ_b\Sigma_{\alpha,\beta|t}(\mathcal{F}_t)J_b'+\sum_{\alpha=t+1}^kJ_b\Sigma_{\alpha,k|t}(\mathcal{F}_t)J_m'\nonumber\\
&+&\sum_{\beta=t+1}^kJ_m\Sigma_{k,\beta|t}(\mathcal{F}_t)J_b'+J_m\Sigma_{k,k|t}(\mathcal{F}_t)J_m'.
\end{eqnarray}
Consequently, conditional on $\mathcal{F}_t$, a distribution of the log price at time $k$ is given by
\begin{equation}\label{03066}
\tilde{P}_k~|~\mathcal{F}_t\sim \mathcal{N}\big(\hat{\mu}_{k|t,u}^{\tilde{P}}(\mathcal{F}_t),\Sigma_{k|t}^{\tilde{P}}(\mathcal{F}_t)\big)
\end{equation}
under the $(t,u)$--forward probability measure $\mathbb{\hat{P}}_{t,u}$. Therefore, according to equations \eqref{03064} and \eqref{03065} and Lemma \ref{lem01}, see Technical Annex, price vectors at time $t$ of the Black--Sholes call and put options with strike price vector $K$ and maturity $T$ are given by
\begin{eqnarray}\label{03067}
C_{T|t}(K)&=&\mathbb{\tilde{E}}\bigg[\frac{D_T}{D_t}\Big(P_{T}-K\Big)^+\bigg|\mathcal{F}_t\bigg]=B_{t,T}(\mathcal{F}_t)\mathbb{\hat{E}}_{t,T}\big[\big(P_{T}-K\big)^+\big|\mathcal{F}_t\big]\\
&=&B_{t,T}(\mathcal{F}_t)\bigg(\exp\bigg\{\hat{\mu}_{T|t,T}^{\tilde{P}}(\mathcal{F}_t)+\frac{1}{2}\mathcal{D}\big[\Sigma_{T|t}^{\tilde{P}}(\mathcal{F}_t)\big]\bigg\}\odot\Phi\big(d_{T|t}^1(\mathcal{F}_t)\big)-K\odot\Phi\big(d_{T|t}^2(\mathcal{F}_t)\big)\bigg)\nonumber
\end{eqnarray}
and
\begin{eqnarray}\label{03068}
P_{T|t}(K)&=&\mathbb{\tilde{E}}\bigg[\frac{D_T}{D_t}\Big(K-P_T\Big)^+\bigg|\mathcal{F}_t\bigg]=B_{t,T}(\mathcal{F}_t)\mathbb{\hat{E}}_{t,T}\big[\big(K-P_{T}\big)^+\big|\mathcal{F}_t\big]\\
&=&B_{t,T}(\mathcal{F}_t)\bigg(K\odot\Phi\big(-d_{T|t}^2(\mathcal{F}_t)\big)-\exp\bigg\{\hat{\mu}_{T|t,T}^{\tilde{P}}(\mathcal{F}_t)+\frac{1}{2}\mathcal{D}\big[\Sigma_{T|t}^{\tilde{P}}(\mathcal{F}_t)\big]\bigg\}\odot\Phi\big(-d_{T|t}^1(\mathcal{F}_t)\big)\bigg),\nonumber
\end{eqnarray}
where $d_{T|t}^1(\mathcal{F}_t):=\Big(\hat{\mu}_{T|t,T}^{\tilde{P}}(\mathcal{F}_t)+\mathcal{D}\big[\Sigma_{T|t}^{\tilde{P}}(\mathcal{F}_t)\big]-\ln(K)\Big)\oslash\sqrt{\mathcal{D}\big[\Sigma_{T|t}^{\tilde{P}}(\mathcal{F}_t)\big]}$ and $d_{T|t}^2(\mathcal{F}_t):=d_{T|t}^1(\mathcal{F}_t)-\sqrt{\mathcal{D}\big[\Sigma_{T|t}^{\tilde{P}}(\mathcal{F}_t)\big]}$. 
It should be noted that following the ideas in \citeA{Battulga24e}, one can develop pricing formulas for Margrabe exchange options without default risk.

\section{Life Insurance Products}

Now we consider pricing of some equity--linked life insurance products using the risk--neutral measure. Here we will price segregated fund contracts with guarantees, see \citeA{Hardy01} and unit--linked life insurances with guarantees, see \citeA{Aase94} and \citeA{Moller98}. For discrete--time life insurance products, which cover both of the equity--linked life insurance products, we refer to recent work of \citeA{Battulga24f} We suppose that the stocks represent some funds and an insured receives dividends from the funds. Let $T_x$ be $x$ aged insured's future lifetime random variable, $\mathcal{T}_t^x=\sigma(1_{\{T_x> s\}}:s\in[0,t])$ be $\sigma$--field, which is generated by a death indicator process $1_{\{T_x\leq t\}}$, $F_t$ be an $(n\times 1)$ vector of units of the funds, and $G_t$ be an $(n\times 1)$ vector of amounts of the guarantees, respectively, at time $t$. We assume that the $\sigma$--fields $\mathcal{G}_T$ and $\mathcal{T}_T^x$ are independent, and operational expenses, which are deducted from the funds and withdrawals are omitted from the life insurance products. A common life insurance product in practice is endowment insurance, and combinations of term life insurance and pure endowment insurance lead to various endowment insurances, see \citeA{Aase94}. Thus, it is sufficient to consider only the term life insurance and the pure endowment insurance.  

A $T$--year pure endowment insurance provides payment of a sum insured at the end of the $T$ years only if an insured is alive at the end of $T$ years from the time of policy issue. For the pure endowment insurance, we assume that the sum insured is forming $f(P_T)$ for some Borel function $f: \mathbb{R}_+^n\to \mathbb{R}_+^n$, where $\mathbb{R}_+^n:=\{x\in\mathbb{R}^n|x>0\}$ is the set of $(n\times 1)$ positive real vectors. In this case, the sum insured depends on the random stock price vector at time $T$, and the form of the function $f$ depends on an insurance contract. Choices of $f$ give us different types of life insurance products. For example, for $x,K\in\mathbb{R}_+^n$, $f(x)=i_n$, $f(x)=x$, $f(x)=\max\{x,K\}=[x-K]^++K$, and $f(x)=[K-x]^+$ correspond to simple life insurance, pure unit--linked, unit--linked with guarantee, and segregated fund contract with guarantee, respectively, see \citeA{Aase94}, \citeA{Bowers97}, and \citeA{Hardy01}. As a result, a discounted contingent claim of the $T$--year pure endowment insurance can be represented by the following equation
\begin{equation}\label{03069}
\overline{H}_T:=D_Tf(P_T)1_{\{T_x>T\}}.
\end{equation}
To price the contingent claim, for $t=1,\dots,T$, we define $\sigma$--fields $\mathcal{F}_t^x:=\mathcal{F}_t\vee \mathcal{T}_t^x$, which represents available information for analyst . Since the $\sigma$--fields $\mathcal{G}_T$ and $\mathcal{T}_T^x$ are independent, one can obtain that value at time $t$ of a contingent claim $f(P_T)1_{\{T_x>T\}}$ is given by
\begin{equation}\label{03070}
V_t=\frac{1}{D_t}\mathbb{\tilde{E}}[\overline{H}_T|\mathcal{F}_t^x]=\frac{1}{D_t}\mathbb{\tilde{E}}[D_Tf(P_T)|\mathcal{F}_t]{}_{T-t}p_{x+t},
\end{equation}
where $_tp_x:=\mathbb{P}[T_x>t]$ represents a probability that $x$--aged insured will attain age $x+t$. 

A $T$--year term life insurance is an insurance that provides payment of a sum insured only if death occurs in $T$ years. In contrast to pure endowment insurance, the term life insurance's sum insured depends on time $t$, that is, its sum insured form is $f(P_t)$ because random death occurs at any time in $T$ years. Therefore, a discounted contingent claim of the $T$--term life insurance is given by
\begin{equation}\label{03071}
\overline{H}_T:=D_{K_x+1}f(P_{K_x+1})1_{\{K_x+1\leq T\}}=\sum_{k=0}^{T-1}D_{k+1}f(P_{t+k})1_{\{K_x=k\}},
\end{equation}
where $K_x:=[T_x]$ is the curtate future lifetime random variable of life--aged--$x$. For the contingent claim of the term life insurance, providing a benefit at the end of the year of death, it follows from the fact that $\mathcal{G}_T$ and $\mathcal{T}_T^x$ are independent that a value process at time $t$ of the term insurance is
\begin{equation}\label{03072}
V_t=\frac{1}{D_t}\mathbb{\tilde{E}}[\overline{H}_T|\mathcal{F}_t^x]=\sum_{k=t}^{T-1}\frac{1}{D_t}\mathbb{\tilde{E}}[D_{k+1}f(P_{k+1})|\mathcal{F}_t]{}_{k-t}p_{x+t}q_{x+k}.
\end{equation}
where $_tq_x:=\mathbb{P}[T_x\leq t]$ represents a probability that $x$--aged insured will die within $t$ years.

For the $T$--year term life insurance and $T$--year pure endowment insurance both of which correspond to the segregated fund contract, observe that the sum insured forms are $f(P_k)=F_k\odot\big[L_k-P_k\big]^+$ for $k=1,\dots,T$, where $L_k:=G_k\oslash F_k$. On the other hand, the sum insured forms of the unit--linked life insurance are $f(P_k)=F_k\odot\big[P_k-L_k\big]^++G_k$ for $k=1,\dots,T$. Therefore, from the structure of the sum insureds of the segregated funds and the unit--linked life insurances, one can conclude that to price the life insurance products it is sufficient to consider European call and put options with strike price $L_k$ and maturity $k$ for $k=t+1,\dots,T$. 

Since conditional distributions of the stock prices at times $k=t+1,\dots,T$ are given by \eqref{03066}, similarly to equations \eqref{03067} and \eqref{03068}, one can obtain that for $k=t+1,\dots,T$,
\begin{eqnarray}\label{03074}
&&C_{k|t}(L_k)=\mathbb{\tilde{E}}\bigg[\frac{D_k}{D_t}\Big(P_k-L_k\Big)^+\bigg|\mathcal{F}_t\bigg]\\
&&=B_{t,k}(\mathcal{F}_t)\bigg(\exp\bigg\{\hat{\mu}_{k|t,k}^{\tilde{P}}(\mathcal{F}_t)+\frac{1}{2}\mathcal{D}\big[\Sigma_{k|t}^{\tilde{P}}(\mathcal{F}_t)\big]\bigg\}\odot\Phi\big(d_{k|t}^1(\mathcal{F}_t)\big)-L_k\odot\Phi\big(d_{k|t}^2(\mathcal{F}_t)\big)\bigg)\nonumber
\end{eqnarray}
and
\begin{eqnarray}\label{03075}
&&P_{k|t}(L_k)=\mathbb{\tilde{E}}\bigg[\frac{D_k}{D_t}\Big(L_k-P_k\Big)^+\bigg|\mathcal{F}_t\bigg]\\
&&=B_{t,k}(\mathcal{F}_t)\bigg(L_k\odot\Phi\big(-d_{k|t}^2(\mathcal{F}_t)\big)-\exp\bigg\{\hat{\mu}_{k|t,k}^{\tilde{P}}(\mathcal{F}_t)+\frac{1}{2}\mathcal{D}\big[\Sigma_{k|t}^{\tilde{P}}(\mathcal{F}_t)\big]\bigg\}\odot\Phi\big(-d_{k|t}^1(\mathcal{F}_t)\big)\bigg),\nonumber
\end{eqnarray}
where $d_{k|t}^1(\mathcal{F}_t):=\Big(\hat{\mu}_{k|t,k}^{\tilde{P}}(\mathcal{F}_t)+\mathcal{D}\big[\Sigma_{k|t}^{\tilde{P}}(\mathcal{F}_t)\big]-\ln(L_k)\Big)\oslash\sqrt{\mathcal{D}\big[\Sigma_{k|t}^{\tilde{P}}(\mathcal{F}_t)\big]}$ and $d_{k|t}^2(\mathcal{F}_t):=d_{k|t}^1(\mathcal{F}_t)-\sqrt{\mathcal{D}\big[\Sigma_{k|t}^{\tilde{P}}(\mathcal{F}_t)\big]}$.

Consequently, from equations \eqref{03074} and \eqref{03075}, net single premiums of the $T$--year life insurance products without withdrawal and operational expenses, providing a benefit at the end of the year of death (term life insurance) or the end of the year $T$ (pure endowment insurance) are given by
\begin{itemize}
\item[1.] for the $T$--year guaranteed term life insurance, corresponding to segregated fund contract, it holds
\begin{equation}\label{03076}
\lcterm{S}{x+t}{T-t}=\sum_{k=t}^{T-1}F_{k+1}\odot P_{k+1|t}(L_{k+1}){}_{k-t}p_{x+t}q_{x+k};
\end{equation}
\item[2.] for the $T$--year guaranteed pure endowment insurance, corresponding to segregated fund contract, it holds
\begin{equation}\label{03077}
\lcend{S}{x+t}{T-t}=F_T\odot P_{T|t}(L_T){}_{T-t}p_{x+t};
\end{equation}
\item[3.] for the $T$--year guaranteed unit--linked term life insurance, it holds
\begin{equation}\label{03078}
\lcterm{U}{x+t}{T-t}=\sum_{k=t}^{T-1}\big[F_{k+1}\odot C_{k+1|t}(L_{k+1})+B_{t,k+1}(\mathcal{F}_t)G_{k+1}\big]{}_{k-t}p_{x+t}q_{x+k};
\end{equation}
\item[4.] and for the $T$--year guaranteed unit--linked pure endowment insurance, it holds
\begin{equation}\label{03079}
\lcend{U}{x+t}{T-t}=\big[F_T\odot C_{T|t}(L_T)+B_{t,T}(\mathcal{F}_t)G_T\big]{}_{T-t}p_{x+t}.
\end{equation}
\end{itemize}

\section{Locally Risk--Minimizing Strategy}

By introducing the concept of mean--self--financing, \citeA{Follmer86} extended the concept of the complete market into the incomplete market. If a discounted cumulative cost process is a martingale, then a portfolio plan is called mean--self--financing. In a discrete--time case, \citeA{Follmer89} developed a locally risk--minimizing strategy and obtained a recurrence formula for optimal strategy. According to \citeA{Schal94} (see also \citeA{Follmer04}), under a martingale probability measure the locally risk--minimizing strategy and remaining conditional risk--minimizing strategy are the same. Therefore, in this section, we will consider locally risk--minimizing strategies, which correspond to the Black--Scholes call and put options given in Section 3 and the equity--linked life insurance products given in Section 4. In the insurance industry, for continuous--time unit--linked term life and pure endowment insurances with guarantee, locally risk--minimizing strategies are obtained by \citeA{Moller98}. Recently, for discrete--time equity--linked life insurance products, \citeA{Battulga24f} obtained locally risk--minimizing strategies. 

To simplify notations we define: for $t=1,\dots,T$, $\overline{P}_t:=(\overline{P}_{1,t},\dots,\overline{P}_{n,t})'$ is a discounted stock price process at time $t$, $\overline{d}_t:=(\overline{d}_{1,t},\dots,\overline{d}_{n,t})'$ is a discounted dividend payment process at time $t$, and $\Delta \overline{P}_t:=\overline{P}_t-\overline{P}_{t-1}$ and $\Delta \overline{d}_t:=\overline{d}_t-\overline{d}_{t-1}$ are difference processes at time $t$ of the discounted stock price and dividend processes, respectively, where $\overline{P}_{i,t}:=D_tP_{i,t}$ and $\overline{d}_{i,t}:=D_td_{i,t}$ are discounted stock price process and discounted dividend payment process, respectively, at time $t$ of $i$--th stock. For $i=1,\dots,n$, let $h_{i,t}$ be a proper number of shares at time $t$ and $h_{i,t}^0$ be a proper amount of cash (risk--free bond) at time $t$, which are required to successfully hedge $i$--th contingent claim $H_{i,T}$, and $\overline{H}_{i,T}$ be a discounted contingent claim, where we assume that the contingent claim $\overline{H}_{i,T}$ is square--integrable under the risk--neutral probability measure. 

To obtain locally risk--minimizing strategy ($h_i^0, h_i$), corresponding to the $i$--th contingent claim $H_{i,T}$, we follow \citeA{Follmer04} and \citeA{Follmer89}. Let $\tilde{\mathbb{P}}^\circ$ be a martingale probability measure satisfying 
\begin{equation}\label{05.001}
\tilde{\mathbb{E}}^\circ[\overline{P}_{t+1}+\overline{d}_{t+1}|\mathcal{G}_t]=\overline{P}_t+\overline{d}_t,
\end{equation}
where $\tilde{\mathbb{E}}^\circ$ is an expectation under the martingale probability measure $\tilde{\mathbb{P}}^\circ$. Note that $\tilde{\mathbb{P}}^\circ$ can be any probability measure. For example, one may choose the probability measure $\tilde{\mathbb{P}}^\circ$ by the risk--neutral probability measure $\tilde{\mathbb{P}}$. In this case, it is very difficult to obtain the locally risk--minimizing strategy ($h_i^0, h_i$). Discounted portfolio value at time $t$, corresponding to the $i$--th contingent claim $H_{i,T}$ is given by
\begin{equation}\label{05.002}
\overline{V}_{i,t}=h_{i,t}'(\overline{P}_t+\overline{d}_t)+\overline{h}_{i,t}^0,~~~i=1,\dots,n,~t=1,\dots,T
\end{equation}
Note that $h_{i,t}$ is a predictable process, which means its value is known at time $(t-1)$, while for the process $h_{i,t}^0$, its value is only known at time $t$,
We suppose that the final discounted portfolio value replicates the $i$--th discounted contingent claim, that is,
\begin{equation}\label{05.004}
\overline{V}_{i,T}=\overline{H}_{i,T}.
\end{equation}
A discounted cumulative cost process is defined by
\begin{equation}\label{05.005}
\overline{C}_{i,t}:=\overline{V}_{i,t}-\sum_{j=1}^th_{i,j}'(\Delta\overline{P}_j+\Delta\overline{d}_j)
\end{equation}
and $C_{i,0}=V_{i,0}$. To obtain locally risk--minimizing strategy ($h_i^0, h_i$), corresponding to the $i$--th contingent claim $H_{i,T}$, we need to minimize a conditional local risk, which is defined by 
\begin{equation}\label{05.006}
R_{i,t}:=\tilde{\mathbb{E}}^\circ\big[(\overline{C}_{i,t+1}-\overline{C}_{i,t})^2\big|\mathcal{F}_t^x\big]=\tilde{\mathbb{E}}^\circ\big[(\overline{V}_{i,t+1}-\overline{V}_{i,t}-h_{i,t+1}'(\Delta\overline{P}_{t+1}+\Delta\overline{d}_{t+1}))^2\big|\mathcal{F}_t^x\big].
\end{equation}
The above optimization problem corresponds to individual contingent claim and it does not take into account correlations between the contingent claims. For this reason, instead of the above optimization problem, we consider the following optimization problem 
\begin{equation}\label{05.007}
R_t:=\sum_{i=1}^n\tilde{\mathbb{E}}^\circ\big[(\overline{C}_{i,t+1}-\overline{C}_{i,t})^2\big|\mathcal{F}_t^x\big]\longrightarrow \text{min}.
\end{equation}
In order to solve the above optimization problem, we define the following vectors and matrix: $\overline{V}_t:=(\overline{V}_{1,t},\dots,\overline{V}_{n,t})'$ is an $(n\times 1)$ vector of discounted portfolio value process at time $t$, $\overline{H}_T:=(\overline{H}_{1,T},\dots,\overline{H}_{n,T})'$ is an $(n\times 1)$ vector of discounted contingent claim at time $T$, and $h_t:=[h_{1,t}:\dots:h_{n,t}]$ is an $(n\times n)$ proper number of shares matrix at time $t$. Then, the optimization problem becomes
\begin{equation}\label{05.008}
R_t:=\tilde{\mathbb{E}}^\circ\Big[\Big(\overline{V}_{t+1}-\overline{V}_t-h_{t+1}'(\Delta\overline{P}_{t+1}+\Delta\overline{d}_{t+1})\Big)'\Big(\overline{V}_{t+1}-\overline{V}_t-h_{t+1}'(\Delta\overline{P}_{t+1}+\Delta\overline{d}_{t+1})\Big)\Big|\mathcal{F}_t^x\Big]\longrightarrow \text{min}.
\end{equation}
To solve recursively the optimization problem with respect to the parameters $\overline{V}_t$ and $h_t$, we start $t=T-1$ with $\overline{V}_T=\overline{H}_T$. Partial derivatives from the objective function $R_t$ with respect to parameters $\overline{V}_t$ and $h_t$ are
\begin{equation}\label{05.009}
\frac{\partial R_t}{\partial\overline{V}_{t}}=-\tilde{\mathbb{E}}^\circ\Big[\overline{V}_{t+1}-\overline{V}_t-h_{t+1}'(\Delta\overline{P}_{t+1}+\Delta\overline{d}_{t+1})\Big|\mathcal{F}_t^x\Big]
\end{equation}
and
\begin{equation}\label{05.010}
\frac{\partial R_t}{\partial h_{t+1}}=2\tilde{\mathbb{E}}^\circ\Big[(\Delta\overline{P}_{t+1}+\Delta\overline{d}_{t+1})(\Delta\overline{P}_{t+1}+\Delta\overline{d}_{t+1})'\Big|\mathcal{F}_t^x\Big]h_{t+1}-2\tilde{\mathbb{E}}^\circ\Big[(\Delta\overline{P}_{t+1}+\Delta\overline{d}_{t+1})(\overline{V}_{t+1}-\overline{V}_t)'\Big|\mathcal{F}_t^x\Big],
\end{equation}
respectively. Since $\Delta\overline{P}_{t+1}+\Delta\overline{d}_{t+1}$ is a martingale difference, we have that
\begin{equation}\label{05.011}
\overline{V}_t=\tilde{\mathbb{E}}^\circ\big[\overline{V}_{t+1}\big|\mathcal{F}_t^x\big]~~~\text{and}~~~h_{t+1}=\overline{\Omega}_{t+1}^{-1}\overline{\Lambda}_{t+1}
\end{equation}
for $t=0,\dots,T-1$, where $\overline{\Omega}_{t+1}:=\tilde{\mathbb{E}}^\circ\big[(\Delta\overline{P}_{t+1}+\Delta\overline{d}_{t+1})(\Delta\overline{P}_{t+1}+\Delta\overline{d}_{t+1})'\big|\mathcal{F}_{t}^x\big]$ is an $(n\times n)$ random matrix and $\overline{\Lambda}_{t+1}:=\tilde{\mathbb{E}}^\circ\big[(\Delta\overline{P}_{t+1}+\Delta\overline{d}_{t+1})\overline{V}_{t+1}'\big|\mathcal{F}_{t}^x\big]$  is an $(n\times n)$ random matrix. As $\overline{V}_T=\overline{H}_T$, by equation \eqref{05.011} and tower property of conditional expectation, it can be shown that $\overline{\Lambda}_{t+1}:=\tilde{\mathbb{E}}^\circ\big[(\Delta\overline{P}_{t+1}+\Delta\overline{d}_{t+1})\overline{H}_T'\big|\mathcal{F}_{t}^x\big]$. Consequently, due to equation \eqref{05.002}, under the martingale probability measure $\mathbb{\tilde{P}}^\circ$, the locally risk--minimizing strategy ($h^0, h$) is given by the following equations:
\begin{equation}\label{05.012}
h_{t+1}=\overline{\Omega}_{t+1}^{-1}\overline{\Lambda}_{t+1}~~~\text{and}~~~h_{t+1}^0=V_{t+1}-h_{t+1}'(P_{t+1}+d_{t+1})
\end{equation}
for $t=0,\dots,T-1$ and $h_0^0=V_0-h_1'(P_0+d_0)$, where $V_{t+1}:=\frac{1}{D_{t+1}}\mathbb{\tilde{E}}^\circ[\overline{H}_T|\mathcal{F}_{t+1}^x]$ is a value process of the contingent claim $H_T$. If the
contingent claim $H_T$ is generated by stock price process $P_t$ and dividend process $d_t$ for $t=1,\dots,T$, then the process $h_t^0$
becomes predictable, see \citeA{Follmer89}. 

\subsection{Martingale and Forward Probability Measures}

By equation \eqref{03004}, martingale condition \eqref{05.001} is equivalent to the following condition
\begin{equation}\label{05.014}
\tilde{\mathbb{E}}^\circ\Big[\exp\Big\{u_t-\Big(\tilde{r}_ti_{n}-C_{k}\psi_t+(I_n-G_{t-1})(\tilde{d}_{t-1}-\tilde{P}_{t-1})+G_{t-1}^{-1}h_{t-1}\Big)\Big\}\Big|\mathcal{G}_{t-1}\Big]=i_{n}.
\end{equation}
Again, to obtain locally risk--minimizing strategies for the options and life insurance products, we use the unique optimal Girsanov kernel process $\theta_t$, which minimizes the variance of a state price density process and relative entropy. According to \citeA{Battulga23a}, the optimal kernel process $\theta_t$ is given by
\begin{equation}\label{05.015}
\theta_t=\Theta_t \bigg(\tilde{r}_ti_n-C_{k}\psi_t+\big(I_n-G_{t-1}^{-1}\big)(\tilde{d}_{t-1}-\tilde{P}_{t-1})+G_{t-1}^{-1}h_{t-1}-\frac{1}{2}\mathcal{D}[\Sigma_{uu}]\bigg).
\end{equation}
Consequently, system \eqref{03017} can be written by for $t=1,\dots,T$,
\begin{equation}\label{05.016}
\begin{cases}
\tilde{b}_t=\tilde{\nu}_{b,t}^\circ-\Psi_{b,t}\tilde{m}_t^*+E_t\tilde{z}_{t-1}^*+G_t\tilde{u}_t^\circ\\
z_t^*=\tilde{\nu}_{z,t}^{*\circ}+\tilde{A}^*z_{t-1}^*+\Psi_{z,t}\tilde{m}_{t-1}^*+\tilde{v}_t^{*\circ}\\
\tilde{m}_t^*=\tilde{\nu}_{m,t}^*+C\tilde{m}_{t-1}^*+w_t^{*\circ}
\end{cases}
\end{equation}
under the martingale probability measure $\tilde{\mathbb{P}}^\circ$, where $\tilde{\nu}_{b,t}^\circ:=G_t\big((I_n-G_{t-1}^{-1})\tilde{d}_{t-1}+G_{t-1}^{-1}(h_{t-1}+\ln(B_{t-1}))-\frac{1}{2}\mathcal{D}[\Sigma_{uu}]\big)+(I_n-G_t)\tilde{\Delta}_t-G_t\ln(B_{t-1})-h_t$ is an $(n\times 1)$ intercept process of the log book value growth rate process $\tilde{b}_t$, $\tilde{\nu}_{z,t}^{*\circ}:=\big((\tilde{\nu}_{z,t}^{\circ})',0,\dots,0\big)'$ with $\tilde{\nu}_{z,t}^{\circ}:=C_z\psi_t+\Sigma_{vu}\Sigma_{uu}^{-1}\big((I_n-G_{t-1}^{-1})(\tilde{d}_{t-1}-\ln(B_{t-1}))\big)-C_k\psi_t+G_{t-1}^{-1}h_{t-1}-\frac{1}{2}\mathcal{D}[\Sigma_{uu}]$ is an $(\ell p\times 1)$ intercept process of the economic variables process $z_t^*$, and
\begin{equation}\label{ad013}
\Psi_{z,t}:=\begin{bmatrix}
\Sigma_{vu}\Sigma_{uu} & 0\\
0 & 0
\end{bmatrix}
\end{equation}
is an $(\ell p \times 2n)$ matrix. A joint distribution of a random vector $\tilde{\xi}^{*\circ}:=\big((\tilde{\xi}_1^{*\circ})',\dots,(\tilde{\xi}_T^{*\circ})'\big)$ with $\tilde{\xi}_t^{*\circ}:=\big((\tilde{u}_t^\circ)',(\tilde{v}_t^{*\circ})',(\tilde{w}_t^{*\circ})'\big)'$ is given by
\begin{equation}\label{ad014}
\tilde{\xi}^{*\circ}\sim \mathcal{N}(0,I_T\otimes\Sigma_{\xi\xi})
\end{equation}
under the martingale probability measure $\tilde{\mathbb{P}}^\circ$. From the first line of the above system, one may be conclude that the log price at time $t$ equals
\begin{eqnarray}\label{ad015}
\tilde{P}_t&=&G_t\bigg(\tilde{r}_ti_n-\frac{1}{2}\mathcal{D}[\Sigma_{uu}]+(I_n-G_{t-1}^{-1})\tilde{d}_{t-1}+G_{t-1}^{-1}\big(\tilde{P}_{t-1}+h_{t-1}\big)\bigg)\nonumber\\
&+&(I_n-G_t)\tilde{d}_t-h_t+G_t\tilde{u}_t^*
\end{eqnarray}
under the martingale probability measure $\tilde{\mathbb{P}}^\circ$.

One may be write the system \eqref{05.016} in VAR(1) form
\begin{equation}\label{ad016}
Q_{0,t}x_t^*=\tilde{\nu}_t^{*\circ}+Q_{1,t}^\circ x_{t-1}^*+\mathsf{G}_t^*\tilde{\xi}^{*\circ}
\end{equation}
under the martingale probability measure $\tilde{\mathbb{P}}^\circ$, where $\tilde{\nu}_t^\circ:=\big((\tilde{\nu}_{b,t}^\circ)',(\tilde{\nu}_{z,t}^{*\circ})',(\tilde{\nu}_{m,t}^{*\circ})'\big)'$ an intercept process of the VAR(1) process $x_t^*$, and
\begin{equation}\label{ad017}
Q_{1,t}^\circ:=\begin{bmatrix}
0 & E_t & 0\\
0 & \tilde{A}_t^* & \Psi_{z,t}\\
0 & 0 & C
\end{bmatrix}.
\end{equation}
By repeating equation \eqref{ad016}, one gets that for $s=t+1,\dots,T$,
\begin{equation}\label{05.021}
x_s^*=\Pi_{t,s}^{*\circ} x_t^*+\sum_{\beta=t+1}^s\Pi_{\beta,s}^{*\circ}\tilde{\nu}_\beta^{*\circ}+\sum_{\beta=t+1}^s\Pi_{\beta,s}^{*\circ}\mathsf{G}_\beta^*\tilde{\xi}_\beta^{*\circ},
\end{equation}
where the coefficient matrices are for $\beta=t$,
\begin{equation}\label{05.022}
\Pi_{\beta,s}^{*\circ}:=\prod_{\alpha=\beta+1}^sQ_{0,\alpha}^{-1}Q_{1,\alpha}^\circ,
\end{equation}
for $\beta=t+1,\dots,s-1$,
\begin{eqnarray}\label{05.023}
\Pi_{\beta,s}^{*\circ}:=\Bigg(\prod_{\alpha=\beta+1}^sQ_{0,\alpha}^{-1}Q_{1,\alpha}^\circ\Bigg)Q_{0,\beta}^{-1},
\end{eqnarray}
and for $\beta=s$,
\begin{equation}\label{05.024}
\Pi_{\beta,s}^{*\circ}:=\Pi_{s,s}^*
\end{equation}
Consequently, we have that
\begin{equation}\label{ad018}
x_s=\mathsf{J}_x\Pi_{t,s}^{*\circ} x_t^*+\sum_{\beta=t+1}^s\Pi_{\beta,s}^{\circ}\tilde{\nu}_\beta^\circ+\sum_{\beta=t+1}^s\Pi_{\beta,s}^{\circ}\mathsf{G}_\beta\tilde{\xi}_\beta^{\circ},
\end{equation}
under the martingale probability measure $\tilde{\mathbb{P}}^\circ$, where $\Pi_{\beta,s}^\circ:=\mathsf{J}_x\Pi_{\beta,s}^{*\circ}\mathsf{J}_x'$ is an $(\tilde{n}\times \tilde{n})$ matrix, $\tilde{\nu}_\beta^\circ:=\mathsf{J}_x\tilde{\nu}_\beta^{*\circ}$ is an $(\tilde{n}\times 1)$ vector, and $\tilde{\xi}_\beta^\circ:=\mathsf{J}_x\tilde{\xi}_\beta^{*\circ}$ is an $(\tilde{n}\times 1)$ vector. Thus, we have that
\begin{equation}\label{ad019}
\tilde{\mu}_{s|t}^\circ(\mathcal{G}_t):=\mathbb{\tilde{E}}^\circ[x_s|\mathcal{G}_t]=\mathsf{J}_x\Pi_{t,s}^{*\circ} x_t^*+\sum_{\beta=t+1}^s\Pi_{\beta,s}^{\circ}\tilde{\nu}_\beta^\circ,~~~s=t+1,\dots,T
\end{equation}
and
\begin{equation}\label{ad020}
\Sigma_{s_1,s_2|t}^\circ(\mathcal{G}_t):=\widetilde{\text{Cov}}^\circ[x_{s_1},x_{s_2}|\mathcal{G}_t]=\sum_{\beta=t+1}^{s_1\wedge s_2}\Pi_{\beta,s}^{\circ}\mathsf{G}_\beta\Sigma_{\xi\xi}\mathsf{G}_\beta(\Pi_{\beta,s}^{\circ})'.
\end{equation}
Consequently, conditional on $\mathcal{G}_t$, a distribution of the random vector $\bar{x}_t^c$ is
\begin{equation}\label{03087}
\bar{x}_t^c~|~\mathcal{G}_t\sim \mathcal{N}\big(\tilde{\mu}_t^{c\circ}(\mathcal{G}_t),\Sigma_t^{c\circ}(\mathcal{G}_t)\big)
\end{equation}
under the martingale probability measure $\tilde{\mathbb{P}}^\circ$, where $\tilde{\mu}_{s|t}^{c\circ}(\mathcal{G}_t):=\big((\tilde{\mu}_{s|t}^\circ(\mathcal{G}_t))',\dots,(\tilde{\mu}_{s|t}^\circ(\mathcal{G}_t))'\big)'$ is expectation and $\Sigma_t^{c\circ}(\mathcal{G}_t):=\big(\Sigma_{s_1,s_2|t}^\circ(\mathcal{G}_t)\big)_{s_1,s_2=t+1}^T$ is covariance matrix of the random vector $\bar{x}_t^c$ for given $\mathcal{G}_t$. 

A $(t,u)$--forward probability measure, which is originated from the martingale measure $\tilde{\mathbb{P}}^\circ$ is defined by
\begin{equation}\label{05.028}
\mathbb{\hat{P}}_{t,u}^\circ\big[A\big|\mathcal{G}_t\big]:=\frac{1}{D_tB_{t,u}^\circ(\mathcal{G}_t)}\int_AD_u\mathbb{\tilde{P}}^\circ\big[\omega|\mathcal{G}_t\big]~~~\text{for all}~A\in \mathcal{G}_T,
\end{equation}
where $B_{t,u}^\circ(\mathcal{G}_t):=\frac{1}{D_t}\mathbb{\tilde{E}}^\circ\big[D_u\big|\mathcal{G}_t\big]$. Then, similarly to Subsection 3.2, it can be shown that
\begin{equation}\label{05.029}
B_{t,u}^\circ(\mathcal{G}_t)=\exp\bigg\{-\tilde{r}_{t+1}-\gamma_{t,u}'\tilde{\mu}_t^{c\circ}(\mathcal{G}_t)+\frac{1}{2}\gamma_{t,u}'\Sigma_t^{c\circ}(\mathcal{G}_t)\gamma_{t,u}\bigg\}
\end{equation}
and conditional on the information $\mathcal{G}_t$, a joint distribution of the random vector $\bar{x}_t^c$ is given by
\begin{equation}\label{05.030}
\bar{x}_t^c~|~\mathcal{G}_t\sim \mathcal{N}\big(\hat{\mu}_{t,u}^{\circ c}(\mathcal{G}_t),\Sigma_t^{\circ c}(\mathcal{G}_t)\big),~~~t=0,\dots,T-1
\end{equation}
under the $(t,u)$--forward probability measure $\hat{\mathbb{P}}_{t,u}^\circ$, where $\hat{\mu}_{t,u}^{\circ c}(\mathcal{G}_t):=\tilde{\mu}_t^{\circ c}(\mathcal{G}_t)-\Sigma_t^{\circ c}(\mathcal{G}_t)\gamma_{t,u}$ and $\Sigma_t^{\circ c}(\mathcal{G}_t)$ are conditional expectation and conditional covariance matrix, respectively, of the random vector $\bar{x}_t^c$ for given $\mathcal{G}_t$. Similarly as equation \eqref{03058}, $(s-t)$--th block vector of the conditional expectation $\hat{\mu}_{t,u}^{c\circ}(\mathcal{G}_t)$ is given by
\begin{equation}\label{03088}
\hat{\mu}_{s|t,u}^\circ(\mathcal{G}_t):=J_{s|t}\hat{\mu}_{t,u}^{c\circ}(\mathcal{G}_t)=\tilde{\mu}_{s|t}^\circ(\mathcal{G}_t)-\sum_{\beta=t+1}^{u-1}\big(\Sigma_{s,\beta|t}^\circ(\mathcal{G}_t)\big)_{n+1}.
\end{equation}
Consequently, it follows from equations \eqref{ad018} and \eqref{03087} that
for $s=t+1,\dots,T$,
\begin{eqnarray}\label{03089}
x_s\overset{d}{=}\mathsf{J}_x\Pi_{t,s}^{*\circ}x_t^*+\sum_{\beta=t+1}^s\Pi_{\beta,s}^\circ\tilde{\nu}_\beta^\circ-\sum_{\beta=t+1}^{u-1}\big(\Sigma_{s,\beta|t}^\circ(\mathcal{G}_t)\big)_{n+1}+\sum_{\beta=t+1}^s\Pi_{\beta,s}^\circ\mathsf{G}_\beta\hat{\xi}_\beta^\circ
\end{eqnarray}
under the $(t,u)$--forward probability measure $\mathbb{\hat{P}}_{t,u}^\circ$, where $d$ means equal distribution and the random vector $\hat{\xi}^{c\circ}:=\big((\hat{\xi}_{1}^\circ)',\dots,(\hat{\xi}_T^\circ)'\big)'$ follows multivariate normal distribution with zero mean and covariance matrix $I_T\otimes \Sigma_{\xi\xi}$, i.e,
\begin{equation}\label{03090}
\hat{\xi}^{c\circ}\sim \mathcal{N}(0,I_T\otimes \Sigma_{\xi\xi})
\end{equation}
under the $(t,u)$--forward probability measure $\hat{\mathbb{P}}_{t,u}^\circ$. By equation \eqref{ad020}, it can be shown that
\begin{equation}\label{03091}
\sum_{\beta=t+1}^{u-1}\big(\Sigma_{s,\beta|t}^\circ(\mathcal{G}_t)\big)_{n+1}=\sum_{\beta=t+1}^s\Pi_{\beta,s}^\circ\mathsf{G}_\beta\hat{c}_{\beta|t,u}^\circ,
\end{equation}
where $\hat{c}_{\beta|t,u}^\circ:=\sum_{\alpha=t+1}^{u-1}\big(\Sigma_{\xi\xi}\mathsf{G}_\beta(\Pi_{\beta,\alpha}^\circ)'\big)_{n+1}$ is an $(\tilde{n}\times 1)$ vector. Therefore, we have that
\begin{eqnarray}\label{03092}
x_s\overset{d}{=}\mathsf{J}_x\Pi_{t,s}^{*y\circ}y_t^*+\mathsf{J}_x\Pi_{t,s}^{*m\circ}\tilde{m}_t^*+\sum_{\beta=t+1}^s\Pi_{\beta,s}^\circ\hat{\nu}_\beta^\circ+\sum_{\beta=t+1}^s\Pi_{\beta,s}^\circ\mathsf{G}_\beta\hat{\xi}_\beta^\circ
\end{eqnarray}
under the $(t,u)$--forward probability measure $\hat{\mathbb{P}}_{t,u}^\circ$, where $\Pi_{t,s}^{*\circ}=\big[\Pi_{t,s}^{*y\circ}:\Pi_{t,s}^{*m\circ}\big]$ is a partition, corresponding to the random vectors $y_t^*$ and $\tilde{m}_t^*$ of the matrix $\Pi_{t,s}^{*\circ}$ and $\hat{\nu}_\beta^\circ:=\tilde{\nu}_\beta^\circ-\mathsf{G}_\beta\hat{c}_{\beta|t,u}^\circ$ is an intercept process of the process $x_\beta$ under the $(t,u)$--forward probability measure $\hat{\mathbb{P}}_{t,u}^\circ$, see below. As a result, by comparing equations \eqref{ad018} and \eqref{03092} and using a fact that $J_b\mathsf{G}_t=G_tJ_b$, $J_z\mathsf{G}_t=J_z$, $J_m\mathsf{G}_t=J_m$, and $J_m\hat{c}_{t|t,u}=0$, one can conclude that the log price process $\tilde{P}_t$ is given by
\begin{equation}\label{03093}
\tilde{P}_t=G_t\bigg(\tilde{P}_{t-1}-\tilde{d}_t+\tilde{r}_t i_n-\frac{1}{2}\mathcal{D}[\Sigma_{uu}]-J_b\hat{c}_{t|t,u}^\circ\bigg)+\tilde{d}_t-h_t+G_t\hat{u}_t^\circ
\end{equation}
and system \eqref{03017} becomes 
\begin{equation}\label{03094}
\begin{cases}
\tilde{b}_t=\hat{\nu}_{b,t}^\circ+\Psi_{b,t}\tilde{m}_t^*+E_tz_{t-1}^*+\hat{u}_t^\circ\\
z_t=\hat{\nu}_{z,t}^\circ+\tilde{A}_tz_{t-1}^*+\Sigma_{vu}\Sigma_{uu}^{-1}\tilde{m}_{t-1}+\hat{v}_t^\circ\\
\tilde{m}_t=\nu_{m,t}+\tilde{m}_{t-1}+\hat{w}_t^\circ
\end{cases}~~~\text{for}~t=1,\dots,T
\end{equation}
under the $(t,u)$--forward probability measure $\mathbb{\hat{P}}_{t,u}^\circ$, where $\hat{\nu}_{b,t}^\circ:=\tilde{\nu}_{b,t}^\circ-G_tJ_b\hat{c}_{t|t,u}^\circ$ is an $(n\times 1)$ intercept process of the log book value growth rate process and $\hat{\nu}_{z,t}^\circ:=\tilde{\nu}_{z,t}^\circ-J_z\hat{c}_{t|t,u}^\circ$ is an $(\ell\times 1)$ intercept process of the economic variables process. Again, the last line of system \eqref{03094} does not change because the white noise processes $\eta_t=(u_t',v_t')'$ and $w_t$ and the initial price--to--book ration $\tilde{m}_0$ are independent under the real probability measure $\mathbb{P}$. Also, the random white noise processes $(\hat{u}_t',\hat{v}_t')'$ and $\hat{w}_t$ and the initial price--to--book ration $\tilde{m}_0$ are independent under the $(t,u)$--forward probability measure $\mathbb{\hat{P}}_{t,u}^\circ$ and the distribution of the log price--to--book ratio $\tilde{m}_t$ is same for all the probability measures.

By replacing $\Pi_{\beta,s}^*$, $\Pi_{\beta,s}$, and $\tilde{\nu}_\beta$ in Subsections 3.2 and 3.3 with $\Pi_{\beta,s}^{*\circ}$, $\Pi_{\beta,s}^\circ$, and $\tilde{\nu}_\beta^\circ$ and $\hat{\nu}_\beta^\circ$, conditional on $\mathcal{F}_t$, one obtains distributions, corresponding to equation \eqref{03045}, \eqref{03051}, and \eqref{03056} of the price--to--book ratio process at time $t$, $\tilde{m}_t^*$ and random vector $\bar{x}_t^c$ under the martingale probability measure $\mathbb{\tilde{P}}^\circ$ and $(t,u)$--forward probability measure $\mathbb{\hat{P}}_{t,u}^\circ$, namely
\begin{equation}\label{ad021}
\tilde{m}_t^*~|~\mathcal{F}_t\sim \mathcal{N}\Big(\tilde{\tilde{m}}_{t|t}^{*\circ},\Sigma^{\circ}(\tilde{m}_t^*|t)\Big),~~~t=0,\dots,T-1
\end{equation}
under the martingale probability measure $\tilde{\mathbb{P}}^\circ$,
\begin{equation}\label{ad022}
\tilde{m}_t^*~|~\mathcal{F}_t\sim \mathcal{N}\Big(\hat{\tilde{m}}_{t|t,u}^{*\circ},\Sigma^{\circ}(\tilde{m}_t^*|t)\Big),~~~t=0,\dots,T-1
\end{equation}
under the $(t,u)$--forward probability measure $\hat{\mathbb{P}}_{t,u}^\circ$,
\begin{equation}\label{ad023}
\bar{x}_t^c~|~\mathcal{F}_t\sim \mathcal{N}\Big(\tilde{\mu}_t^{c\circ}(\mathcal{F}_t),\Sigma_t^{c\circ}(\mathcal{F}_t)\Big),~~~t=0,\dots,T-1
\end{equation}
under the martingale probability measure $\tilde{\mathbb{P}}^\circ$,
and
\begin{equation}\label{ad024}
\bar{x}_t^c~|~\mathcal{F}_t\sim \mathcal{N}\Big(\hat{\mu}_{t,u}^{c\circ}(\mathcal{F}_t),\Sigma_t^{c\circ}(\mathcal{F}_t)\Big),~~~t=0,\dots,T-1
\end{equation}
under the $(t,u)$--forward probability measure $\hat{\mathbb{P}}_{t,u}^\circ$, where for the two conditional expectations of the random vector $\bar{x}_t^c$, it holds $\hat{\mu}_t^{c\circ}(\mathcal{F}_t):=\tilde{\mu}_t^{c\circ}(\mathcal{F}_t)-\Sigma_t^{c\circ}(\mathcal{F}_t)\gamma_{t,u}$. Similarly to \eqref{05.029}, from equation \eqref{ad023}, conditional on $\mathcal{F}_t$, one obtains bond price at time $t$ under the martingale probability measure $\tilde{\mathbb{P}}^\circ$
\begin{equation}\label{ad037}
B_{t,u}^\circ(\mathcal{F}_t)=\exp\bigg\{-\tilde{r}_{t+1}-\gamma_{t,u}'\tilde{\mu}_t^{c\circ}(\mathcal{F}_t)+\frac{1}{2}\gamma_{t,u}'\Sigma_t^{c\circ}(\mathcal{F}_t)\gamma_{t,u}\bigg\}.
\end{equation}

\subsection{Parameters of Locally Risk--Minimizing Strategy}

Now we consider parameters $\overline{\Omega}_{t+1}$ and $\overline{\Lambda}_{t+1}$ in equation \eqref{05.012}. By substituting equation \eqref{ad015} into approximation equation \eqref{03004}, we get that a distribution of a log sum random variable of the discounted stock price $\overline{P}_{t+1}$ and the discounted dividend payment $\overline{d}_{t+1}$ is given by 
\begin{equation}\label{05.045}
\ln\big(\overline{P}_{t+1}+\overline{d}_{t+1}\big)~|~\mathcal{G}_t\sim \mathcal{N}\bigg(\ln\big(D_t\big)-\frac{1}{2}\mathcal{D}[\Sigma_{uu}]+(I_n-G_t^{-1})\tilde{d}_t+G_t^{-1}(\tilde{P}_t+h_t),\Sigma_{uu}\bigg)
\end{equation}
under the martingale probability measure $\mathbb{\tilde{P}}^\circ$. From the above equation, one can easily prove that $\overline{P}_{t+1}+\overline{d}_{t+1}$ is the martingale, i.e., $\tilde{\mathbb{E}}^\circ\big[\overline{P}_{t+1}+\overline{d}_{t+1}\big|\mathcal{G}_{t}\big]=\overline{P}_t+\overline{d}_t$. Hence, it follows from the well--known covariance formula of the multivariate log--normal random vector that 
\begin{eqnarray}\label{05.046}
\overline{\Omega}_{t+1}(\mathcal{G}_t)&:=&\tilde{\mathbb{E}}^\circ\big[(\Delta\overline{P}_{t+1}+\Delta\overline{d}_{t+1})(\Delta\overline{P}_{t+1}+\Delta\overline{d}_{t+1})'\big|\mathcal{G}_{t}\big]\nonumber\\
&=&\exp\{\Sigma_{uu}\}\odot\big(\overline{P}_t\overline{P}_t'+\overline{P}_t\overline{d}_t'+\overline{d}_t\overline{P}_t'+\overline{d}_t\overline{d}_t'\big).
\end{eqnarray}

According to equation \eqref{ad021}, conditional on $\mathcal{F}_t$, a distribution of the log price--to--book ratio process at time $t$ is given by 
\begin{equation}\label{03085}
\tilde{m}_t~|~\mathcal{F}_t\sim \mathcal{N}\big(\tilde{\tilde{m}}_{t|t}^\circ,\Sigma^\circ(\tilde{m}_t|t)\big),
\end{equation}
under the martingale probability measure $\mathbb{\tilde{P}}^\circ$, where $\tilde{\tilde{m}}_{t|t}^\circ:=\mathsf{J}_m\tilde{\tilde{m}}_{t|t}^{*\circ}$ and $\Sigma^\circ(\tilde{m}_t|t):=\mathsf{J}_m\Sigma(\tilde{m}_t^*|t)\mathsf{J}_m'$ are a mean and covariance matrix of the log price--to--book ratio process $\tilde{m}_t$ given $\mathcal{F}_t$.
Since the discounted price at time $t$ is represented by $\overline{P}_t=D_t\exp\big\{\tilde{m}_t+\ln(\mathsf{B}_t)\big\}$, if take $X_1=X_2=\tilde{m}_t+\ln(\mathsf{B}_t)$, $\mu_1=\mu_2=\tilde{\tilde{m}}_{t|t}+\ln(\mathsf{B}_t)$, $\Sigma_{11}=\Sigma_{12}=\Sigma_{21}=\Sigma_{22}=\Sigma(\tilde{m}_t|t)$, $L=0$, and $\alpha_1=\alpha_2=i_n$ in Lemma \ref{lem02}, then by the tower property of a conditional expectation, we obtain that
\begin{eqnarray}\label{03086}
\overline{\Omega}_{t+1}&=&D_t^2\exp\{\Sigma_{uu}\}\odot\big\{\exp\big(\Sigma^\circ(\tilde{m}_t|t)\big)\odot \big(\tilde{\mu}_{t|t}^\circ(\mathcal{F}_t)\tilde{\mu}_{t|t}^\circ(\mathcal{F}_t)'\big)\\
&+&\tilde{\mu}_{t|t}^\circ(\mathcal{F}_t)d_t'+d_t\tilde{\mu}_{t|t}^\circ(\mathcal{F}_t)'+d_td_t'\big\},\nonumber
\end{eqnarray}
where the expectation equals
\begin{equation}\label{ad025}
\tilde{\mu}_{t|t}^\circ(\mathcal{F}_t):=\tilde{\mathbb{E}}^\circ[P_t|\mathcal{F}_t]=\mathsf{B_t}\odot\exp\bigg(\tilde{\tilde{m}}_{t|t}^\circ+\frac{1}{2}\mathcal{D}\big[\Sigma^\circ(\tilde{m}_t|t)\big]\bigg)
\end{equation}

On the other hand, because $\tilde{d}_{t+1}$ is measurable with respect to $\sigma$--field $\mathcal{F}_t$, we have that 
\begin{eqnarray}\label{03099}
\overline{\Lambda}_{t+1}:=\tilde{\mathbb{E}}\big[\overline{P}_{t+1}\overline{H}_T'\big|\mathcal{F}_t^x\big]-\tilde{\mathbb{E}}\big[\overline{P}_t\overline{H}_T'\big|\mathcal{F}_t^x\big]+\Delta\overline{d}_{t+1}\tilde{\mathbb{E}}\big[\overline{H}_T'\big|\mathcal{F}_t^x\big].
\end{eqnarray}
Since the $\sigma$--fields $\mathcal{F}_t$ and $\mathcal{T}_t^x$ are independent and $\overline{P}_{t+1}=D_{t+1}\exp\{\tilde{P}_{t+1}\}$, due to the $(t,u)$--forward probability measure $\hat{\mathbb{P}}_{t,u}^\circ$, the conditional covariance is
\begin{itemize}
\item[($i$)] for the Black--Scholes call and put options,
\begin{eqnarray}\label{03100}
\overline{\Lambda}_{t+1}&=& D_t^2B_{t,T}^\circ(\mathcal{F}_t)\Big\{\beta_{t+1}\hat{\mathbb{E}}_{t,T}^\circ\big[\exp(\tilde{P}_{t+1})H_T'\big|\mathcal{F}_t\big]\nonumber\\
&-&\hat{\mathbb{E}}_{t,T}^\circ\big[\exp(\tilde{P}_t)H_T'\big|\mathcal{F}_t\big]+\delta_{t+1}\hat{\mathbb{E}}_{t,T}^\circ\big[H_T'\big|\mathcal{F}_t\big]\Big\},
\end{eqnarray}
\item[($ii$)] for the equity--linked pure endowment insurances,
\begin{eqnarray}\label{03101}
\overline{\Lambda}_{t+1}&=& D_t^2B_{t,T}^\circ(\mathcal{F}_t)\Big\{\beta_{t+1}\hat{\mathbb{E}}_{t,T}^\circ\big[\exp(\tilde{P}_{t+1})f(P_T)'\big|\mathcal{F}_t\big]\nonumber\\
&-&\hat{\mathbb{E}}_{t,T}^\circ\big[\exp(\tilde{P}_t)f(P_T)'\big|\mathcal{F}_t\big]+\delta_{t+1}\hat{\mathbb{E}}_{t,T}^\circ\big[f(P_T)'\big|\mathcal{F}_t^x\big]\Big\}{}_{T-t}p_{x+t},
\end{eqnarray}
\item[($iii$)] and for the equity--linked term life insurances,
\begin{eqnarray}\label{03102}
\overline{\Lambda}_{t+1}&=& D_t^2\sum_{k=t}^{T-1}B_{t,k+1}^\circ(\mathcal{F}_t)\Big\{\beta_{t+1}\hat{\mathbb{E}}_{t,k+1}^\circ\big[\exp(\tilde{P}_{t+1})f(P_{k+1})'\big|\mathcal{F}_t\big]\nonumber\\
&-&\hat{\mathbb{E}}_{t,k+1}^\circ\big[\exp(\tilde{P}_t)f(P_{k+1})'\big|\mathcal{F}_t\big]+\delta_{t+1}\hat{\mathbb{E}}_{t,k+1}^\circ\big[f(P_{k+1})'\big|\mathcal{F}_t^x\big]\Big\}{}_{k-t}p_{x+t}q_{x+k},
\end{eqnarray}
where $\beta_{t+1}:=\exp\{-\tilde{r}_{t+1}\}$ and $\delta_{t+1}:=\beta_{t+1}d_{t+1}-d_t$ are $(n\times 1)$ processes, which are measurable with respect to $\sigma$--field $\mathcal{F}_t$.
\end{itemize}

In order to obtain the locally risk--minimizing strategies for the Black--Scholes call and put options, and the equity--linked life insurance products, we need to calculate the conditional expectations given in equations \eqref{03100}--\eqref{03102} for contingent claims $H_T=[P_T-K]^+$ and $H_T=[K-P_T]^+$, and sum insureds $f(P_k)=F_k\odot[P_k-L_k]^++G_k$ and $f(P_k)=F_k\odot[L_k-P_k]^+$ for $k=t+1,\dots,T$. Thus, we need Lemma \ref{lem02}, see Technical Annex. To use the Lemma, we need the following expectations and covariance matrices:
due to equation \eqref{ad006}, \eqref{03092}, \eqref{ad022}, and \eqref{ad024}, we have that for $k=t+1,\dots,T$,
\begin{equation}\label{ad026}
\hat{\mu}_{k|t,u}^{\tilde{P}\circ}(\mathcal{F}_t):=\mathbb{\hat{E}}_{t,u}^\circ\big[\tilde{P}_{k}\big|\mathcal{F}_t\big]=K_{k|t}\hat{\mu}_{t,u}^{c\circ}+\ln(\mathsf{B}_t),
\end{equation}
\begin{equation}\label{ad027}
\Sigma_{k|t}^{\tilde{P}\circ}(\mathcal{F}_t):=\widehat{\text{Var}}^\circ\big[\tilde{P}_{k}\big|\mathcal{F}_t\big]=K_{k|t}\Sigma_t^{c\circ}(\mathcal{F}_t)K_{k|t}',
\end{equation}
\begin{equation}\label{ad028}
\Sigma_{\tilde{P}_{t+1},\tilde{P}_k}^\circ(\mathcal{F}_t):=\widehat{\text{Cov}}^\circ\big[\tilde{P}_{t+1},\tilde{P}_k\big|\mathcal{F}_t\big]=K_{t+1|t}\Sigma_t^{c\circ}(\mathcal{F}_t)K_{k|t}',
\end{equation}
\begin{equation}\label{ad029}
\hat{\mu}_{t|t,u}^\circ(\mathcal{F}_t):=\hat{\mathbb{E}}_{t,u}^\circ\big[\tilde{P}_t\big|\mathcal{F}_t\big]=\mathsf{J}_m\hat{\tilde{m}}_{t|t,u}^{*\circ}+\ln(\mathsf{B}_t)
\end{equation}
\begin{equation}\label{ad030}
\Sigma_{t|t}^{\tilde{P}\circ}(\mathcal{F}_t):=\widehat{\text{Var}}^\circ\big[\tilde{P}_{t}\big|\mathcal{F}_t\big]=\mathsf{J}_m\Sigma^\circ(\tilde{m}_t^*|t)\mathsf{J}_m',
\end{equation}
and
\begin{equation}\label{ad031}
\Sigma_{\tilde{P}_t,\tilde{P}_k}^\circ(\mathcal{F}_t):=\widehat{\text{Cov}}^\circ\big[\tilde{m}_t+\ln(\mathsf{B}_t),\tilde{P}_k\big|\mathcal{F}_t\big]=\mathsf{J}_m\Sigma^\circ(\tilde{m}_t^*|t)(\Pi_{t,k}^{*m\circ})'\mathsf{J}_x'.
\end{equation}
Consequently, it follows from Lemma \ref{lem02} and equations \eqref{03100}--\eqref{ad031} that for $t=0,\dots,T-1$, $\overline{\Lambda}_{t+1}$s, which correspond to the call and put options and equity--linked life insurance products are obtained by the following equations
\begin{itemize}
\item[1.] for the dividend--paying Black--Scholes call option on the weighted asset price, we have
\begin{eqnarray}\label{03112}
\overline{\Lambda}_{t+1}&=& D_t^2B_{t,T}^\circ(\mathcal{F}_t)\bigg\{\beta_{t+1}\Psi^+\Big(K;i_n;i_n;\hat{\mu}_{t+1|t,T}^{\tilde{P}\circ};\hat{\mu}_{T|t,T}^{\tilde{P}\circ};\Sigma_{t+1|t}^{\tilde{P}\circ};\Sigma_{\tilde{P}_{t+1},\tilde{P}_T}^\circ;\Sigma_{T|t}^{\tilde{P}\circ}\Big)\\
&-&\Psi^+\Big(K;i_n;i_n;\hat{\mu}_{t|t,T}^{\tilde{P}\circ};\hat{\mu}_{T|t,T}^{\tilde{P}\circ};\Sigma_{t|t}^{\tilde{P}\circ};\Sigma_{\tilde{P}_{t},\tilde{P}_T}^\circ;\Sigma_{T|t}^{\tilde{P}\circ}\Big)+\Psi^+\Big(K;\delta_{t+1};i_n;0;\hat{\mu}_{T|t,T}^{\tilde{P}\circ};0;0;\Sigma_{T|t}^{\tilde{P}\circ}\Big)\bigg\},\nonumber
\end{eqnarray}
\item[2.] for the dividend--paying Black--Scholes put option on the weighted asset price, we have
\begin{eqnarray}\label{03113}
\overline{\Lambda}_{t+1}&=& D_t^2B_{t,T}^\circ(\mathcal{F}_t)\bigg\{\beta_{t+1}\Psi^-\Big(K;i_n;i_n;\hat{\mu}_{t+1|t,T}^{\tilde{P}\circ};\hat{\mu}_{T|t,T}^{\tilde{P}\circ};\Sigma_{t+1|t}^{\tilde{P}\circ};\Sigma_{\tilde{P}_{t+1},\tilde{P}_T}^\circ;\Sigma_{T|t}^{\tilde{P}\circ}\Big)\\
&-&\Psi^-\Big(K;i_n;i_n;\hat{\mu}_{t|t,T}^{\tilde{P}\circ};\hat{\mu}_{T|t,T}^{\tilde{P}\circ};\Sigma_{t|t}^{\tilde{P}\circ};\Sigma_{\tilde{P}_{t},\tilde{P}_T}^\circ;\Sigma_{T|t}^{\tilde{P}\circ}\Big)+\Psi^-\Big(K;\delta_{t+1};i_n;0;\hat{\mu}_{T|t,T}^{\tilde{P}\circ};0;0;\Sigma_{T|t}^{\tilde{P}\circ}\Big)\bigg\},\nonumber
\end{eqnarray}
\item[3.] for the $T$-year guaranteed term life insurance, corresponding to a segregated fund contract, we have
\begin{eqnarray}\label{03114}
\overline{\Lambda}_{t+1}&=& D_t^2\sum_{k=t}^{T-1}B_{t,k+1}^\circ(\mathcal{F}_t)\bigg\{\beta_{t+1}\Psi^-\Big(L_{k+1};i_n;F_{k+1};\hat{\mu}_{t+1|t,k+1}^{\tilde{P}\circ};\hat{\mu}_{k+1|t,k+1}^{\tilde{P}\circ};\Sigma_{t+1|t}^{\tilde{P}\circ};\Sigma_{\tilde{P}_{t+1},\tilde{P}_{k+1}}^\circ;\nonumber\\
&&\Sigma_{k+1|t}^{\tilde{P}\circ}\Big)-\Psi^-\Big(L_{k+1};i_n;F_{k+1};\hat{\mu}_{t|t,k+1}^{\tilde{P}\circ};\hat{\mu}_{k+1|t,k+1}^{\tilde{P}\circ};\Sigma_{t|t}^{\tilde{P}\circ};\Sigma_{\tilde{P}_{t},\tilde{P}_{k+1}}^\circ;\Sigma_{k+1|t}^{\tilde{P}\circ}\Big)\\
&+&\Psi^-\Big(L_{k+1};\delta_{t+1};F_{k+1};0;\hat{\mu}_{k+1|t,k+1}^{\tilde{P}\circ};0;0;\Sigma_{k+1|t}^{\tilde{P}\circ}\Big)\bigg\}{}_{k-t}p_{x+t}q_{x+k},\nonumber
\end{eqnarray}
\item[4.] for the $T$-year guaranteed pure endowment insurance, corresponding to a segregated fund contract, we have
\begin{eqnarray}\label{03115}
\overline{\Lambda}_{t+1}&=& D_t^2B_{t,T}^\circ(\mathcal{F}_t)\bigg\{\beta_{t+1}\Psi^-\Big(L_T;i_n;F_T;\hat{\mu}_{t+1|t,T}^{\tilde{P}\circ};\hat{\mu}_{T|t,T}^{\tilde{P}\circ};\Sigma_{t+1|t}^{\tilde{P}\circ};\Sigma_{\tilde{P}_{t+1},\tilde{P}_T}^\circ;\Sigma_{T|t}^{\tilde{P}\circ}\Big)\nonumber\\
&-&\Psi^-\Big(L_T;i_n;F_T;\hat{\mu}_{t|t,T}^{\tilde{P}\circ};\hat{\mu}_{T|t,T}^{\tilde{P}\circ};\Sigma_{t|t}^{\tilde{P}\circ};\Sigma_{\tilde{P}_{t},\tilde{P}_{T}}^\circ;\Sigma_{T|t}^{\tilde{P}\circ}\Big)\\
&+&\Psi^-\Big(L_T;\delta_{t+1};F_T;0;\hat{\mu}_{T|t,T}^{\tilde{P}\circ};0;0;\Sigma_{T|t}^{\tilde{P}\circ}\Big)\bigg\}{}_{T-t}p_{x+t},\nonumber
\end{eqnarray}
\item[5.] for the $T$-year guaranteed unit--linked term life insurance, we have
\begin{eqnarray}\label{03116}
\overline{\Lambda}_{t+1}&=& D_t^2\sum_{k=t}^{T-1}B_{t,k+1}^\circ(\mathcal{F}_t)\bigg\{\beta_{t+1}\Psi^+\Big(L_{k+1};i_n;F_{k+1};\hat{\mu}_{t+1|t,k+1}^{\tilde{P}\circ};\hat{\mu}_{k+1|t,k+1}^{\tilde{P}\circ};\Sigma_{t+1|t}^{\tilde{P}\circ};\Sigma_{\tilde{P}_{t+1},\tilde{P}_{k+1}}^\circ;\nonumber\\
&&\Sigma_{k+1|t}^{\tilde{P}\circ}\Big)-\Psi^+\Big(L_{k+1};i_n;F_{k+1};\hat{\mu}_{t|t,k+1}^{\tilde{P}\circ};\hat{\mu}_{k+1|t,k+1}^{\tilde{P}\circ};\Sigma_{t|t}^{\tilde{P}\circ};\Sigma_{\tilde{P}_{t},\tilde{P}_{k+1}}^\circ;\Sigma_{k+1|t}^{\tilde{P}\circ}\Big)\nonumber\\
&+&\Psi^+\Big(L_{k+1};\delta_{t+1};F_{k+1};0;\hat{\mu}_{k+1|t,k+1}^{\tilde{P}\circ};0;0;\Sigma_{k+1|t}^{\tilde{P}\circ}\Big)+\bigg[\beta_{t+1}\exp\bigg\{\hat{\mu}_{t+1|t,k+1}^{\tilde{P}\circ}+\frac{1}{2}\mathcal{D}\big[\Sigma_{t+1|t}^{\tilde{P}\circ}\big]\bigg\}\nonumber\\
&-&\exp\bigg\{\hat{\mu}_{t|t,k+1}^{\tilde{P}\circ}+\frac{1}{2}\mathcal{D}\big[\Sigma_{t|t}^{\tilde{P}\circ}\big]\bigg\}+\delta_{t+1}\bigg]G_{k+1}'\bigg\}{}_{k-t}p_{x+t}q_{x+k},
\end{eqnarray}
\item[6.] and for the $T$-year guaranteed unit--linked pure endowment insurance, we have
\begin{eqnarray}\label{03117}
\overline{\Lambda}_{t+1}&=& D_t^2B_{t,T}^\circ(\mathcal{F}_t)\bigg\{\beta_{t+1}\Psi^+\Big(L_T;i_n;F_T;\hat{\mu}_{t+1|t,T}^{\tilde{P}\circ};\hat{\mu}_{T|t,T}^{\tilde{P}\circ};\Sigma_{t+1|t}^{\tilde{P}\circ};\Sigma_{\tilde{P}_{t+1},\tilde{P}_T}^\circ;\Sigma_{T|t}^{\tilde{P}\circ}\Big)\nonumber\\
&-&\Psi^+\Big(L_T;i_n;F_T;\hat{\mu}_{t|t,T}^{\tilde{P}\circ};\hat{\mu}_{T|t,T}^{\tilde{P}\circ};\Sigma_{t|t}^{\tilde{P}\circ};\Sigma_{\tilde{P}_{t},\tilde{P}_T}^\circ;\Sigma_{T|t}^{\tilde{P}\circ}\Big)\nonumber\\
&+&\Psi^+\Big(L_T;\delta_{t+1};F_T;0;\hat{\mu}_{T|t,T}^{\tilde{P}\circ};0;0;\Sigma_{T|t}^{\tilde{P}\circ}\Big)+\bigg[\beta_{t+1}\exp\bigg\{\hat{\mu}_{t+1|t,T}^{\tilde{P}\circ}+\frac{1}{2}\mathcal{D}\big[\Sigma_{t+1|t}^{\tilde{P}\circ}\big]\bigg\}\nonumber\\
&-&\exp\bigg\{\hat{\mu}_{t|t,T}^{\tilde{P}\circ}+\frac{1}{2}\mathcal{D}\big[\Sigma_{t|t}^{\tilde{P}\circ}\big]\bigg\}+\delta_{t+1}\bigg]G_T'\bigg\}{}_{T-t}p_{x+t},
\end{eqnarray}
\end{itemize}
where the functions $\Psi^+$ and $\Psi^-$ are defined in Lemma \ref{lem02}. Note that to simplify the notations in equations \eqref{03112}--\eqref{03117}, for the equations, we omit the notation $(\mathcal{F}_t)$ in equations \eqref{ad026}--\eqref{ad031}. As a result, by substituting equations \eqref{03086} and \eqref{03112}--\eqref{03117} into equation \eqref{05.012} we can obtain the locally risk--minimizing strategies for the Black--Scholes call and put options and the equity--linked life insurance products corresponding to the private company.

\section{Parameter Estimation}

The Kalman filtering, which was introduced by \citeA{Kalman60} is an algorithm that provides estimates of parameters of observed and unobserved (state) processes. The Kalman filtering has been demonstrating its usefulness in various applications. It has been used extensively in economics, system theory, the physical sciences, and engineering. It is usually assume that: 
\begin{itemize}
\item[(i)] a vector of observed variables is represented in terms of the vector of state variables in linear form (measurement equation)
\item[(ii)] and a vector of state variables is governed by the VAR(1) process (transition equation),
\end{itemize}
see \citeA{Hamilton94} and \citeA{Lutkepohl05}.

Let us reconsider the log private company valuation model \eqref{03017}. In state--space form, the system can be written as
\begin{equation}\label{03118}
\begin{cases}
y_t=\nu_{y,t}+\Psi_{y,t}\tilde{m}_t^*+A_{y}z_{t-1}^*+G_{y,t}\eta_t\\
\tilde{m}_t^*=\nu_{m,t}^*+C\tilde{m}_{t-1}^*+\tilde{w}_t^*
\end{cases}~~~\text{for}~t=1,\dots,T
\end{equation}
under the real probability measure $\mathbb{P}$, where $y_t=(\tilde{b}_t',z_t')'$ is the $([n+\ell]\times 1)$ process of observed variables, $\nu_{y,t}:=\big(\nu_{b,t}',\nu_{z,t}'\big)'$ is an ($[n+\ell]\times 1$) intercept process of the observed variables process $y_t$, $\eta_t=(u_t',v_t')'$ is the ($[n+\ell]\times 1$) white noise process of the process $y_t$, $\tilde{m}_t^*=(\tilde{m}_t,\tilde{m}_{t-1})'$ is the ($2n\times 1$) state process that consists of the log price--to--book ratio process at times $t$ and $t-1$, $\nu_{m,t}^*=(\nu_{m,t}',0)'$ is the ($2n\times 1$) intercept process of the state process $\tilde{m}_t^*$, $w_t^*=(w_t',0)'$ is the ($2n\times 1$) white noise process of the state process $\tilde{m}_t^*$, and
\begin{equation}\label{03119}
\Psi_{y,t}:=\begin{bmatrix}
\Psi_{b,t}\\
0
\end{bmatrix},~~~A_{y}:=\begin{bmatrix}
0\\
A
\end{bmatrix},~~~\Sigma_{\eta\eta}=\begin{bmatrix}
\Sigma_{uu} & \Sigma_{uv}\\
\Sigma_{vu} & \Sigma_{vv}
\end{bmatrix},
\end{equation}
\begin{equation}\label{03120}
G_{y,t}=\begin{bmatrix}
G_t & 0\\
0 & I_\ell
\end{bmatrix},~~~C=\begin{bmatrix}
I_n & 0\\
I_n & 0
\end{bmatrix},~~~\text{and}~~~\Sigma_{w^*w^*}:=\begin{bmatrix}
\Sigma_{ww} & 0\\
0 & 0
\end{bmatrix}
\end{equation}
are ($[n+\ell]\times 2n$), $([n+\ell]\times \ell p)$, $([n+\ell]\times [n+\ell])$, $([n+\ell]\times [n+\ell])$, $(2n\times 2n)$, and $(2n\times 2n)$ matrices, respectively. For system \eqref{03118}, its first line determines the measurement equation and the second line determines the transition equation. 

\subsection{Review of the Kalman Filtering}

For each $t=0,\dots,T$, conditional on the information $\mathcal{F}_t$, conditional expectations and covariance matrices of the observed variables vectors and the state vectors are recursively obtained by the Kalman filtering (see \citeA{Hamilton94} and \citeA{Lutkepohl05}):
\begin{itemize}
\item Initialization: 
\begin{itemize}
\item Expectation
\begin{equation}\label{03121}
\tilde{m}_{0|0}^*:=\mathbb{E}[\tilde{m}_0^*|\mathcal{F}_0]=(\mu_0',\mu_0')'
\end{equation}
\item Covariance
\begin{equation}\label{03122}
\Sigma(\tilde{m}_0^*|0):=\text{Cov}[\tilde{m}_0^*|\mathcal{F}_0]=(\mathsf{E}_2\otimes\Sigma_0)
\end{equation}
\end{itemize}
\item Prediction step: for $t=1,\dots,T$, 
\begin{itemize}
\item Expectations
\begin{eqnarray}
\tilde{m}_{t|t-1}^*&:=&\mathbb{E}(\tilde{m}_t^*|\mathcal{F}_{t-1})=\nu_{m,t}^*+C\tilde{m}_{t-1|t-1}^*\label{03123}\\
y_{t|t-1}&:=&\mathbb{E}(y_t|\mathcal{F}_{t-1})=\nu_{y,t}+\Psi_{y,t}\tilde{m}_{t|t-1}^*+A_yz_{t-1}^*\label{03124}
\end{eqnarray}
\item Covariances
\begin{eqnarray}
\Sigma(\tilde{m}_t^*|t-1)&:=&\text{Cov}[\tilde{m}_t^*|\mathcal{F}_{t-1}]=C\Sigma(\tilde{m}_{t-1}^*|t-1)C'+\Sigma_{w^*w^*}\label{03125}\\
\Sigma(y_t|t-1)&:=&\text{Var}[y_{t}|\mathcal{F}_{t-1}]=\Psi_{y,t}\Sigma(\tilde{m}_{t}|t-1)\Psi_{y,t}'+G_{y,t}\Sigma_{\eta\eta}G_{y,t}'\label{03126}
\end{eqnarray}
\end{itemize}
\item Correction step: for $t=1,\dots,T$, 
\begin{itemize}
\item Expectations
\begin{equation}\label{03127}
\tilde{m}_{t|t}^*:=\mathbb{E}[\tilde{m}_t^*|\mathcal{F}_t]=\tilde{m}_{t|t-1}^*+\mathcal{K}_{t}\big(y_t-y_{t|t-1}\big)
\end{equation}
\item Covariances
\begin{equation}\label{03128}
\Sigma(\tilde{m}_t^*|t):=\text{Cov}[\tilde{m}_t^*|\mathcal{F}_t]=\Sigma(\tilde{m}_t^*|t-1)-\mathcal{K}_{t}\Sigma(y_t|t-1)\mathcal{K}_t',
\end{equation}
where $\mathcal{K}_t:=\Sigma(\tilde{m}_t^*|t-1)\Psi_{y,t}'\Sigma(y_t|t-1)^{-1}$ is the Kalman filter gain.
\end{itemize}
\end{itemize}

For each $t=T+1,T+2,\dots$, conditional on the information $\mathcal{F}_T$, conditional expectations and covariance matrices of the observed variables vectors and the state vectors are recursively obtained by (see \citeA{Hamilton94} and \citeA{Lutkepohl05}):

\begin{itemize}
\item Forecasting step: for $t=T+1,T+2,\dots$, 
\begin{itemize}
\item Expectations
\begin{eqnarray}
\tilde{m}_{t|T}^*&:=&\mathbb{E}[\tilde{m}_t^*|\mathcal{F}_T]=\nu_{m,t}^*+C\tilde{m}_{t-1|T}^*\label{03129}\\
y_{t|T}&:=&\mathbb{E}[y_t|\mathcal{F}_T]=\nu_{y,t}+\Psi_{y,t}\tilde{m}_{t|T}^*+A_{y}z_{t-1}^*\label{03130}
\end{eqnarray}
\item Covariances
\begin{eqnarray}
\Sigma(\tilde{m}_t^*|T)&:=&\text{Cov}[\tilde{m}_t^*|\mathcal{F}_T]=C\Sigma(\tilde{m}_{t-1}^*|T)C'+\Sigma_{w^*w^*}\label{03131}\\
\Sigma(y_t|T)&:=&\text{Var}[y_t|\mathcal{F}_T]=\Psi_{y,t}\Sigma(\tilde{m}_t^*|T)\Psi_{y,t}'+G_{y,t}\Sigma_{\eta\eta}G_{y,t}'\label{03132}
\end{eqnarray}
\end{itemize}
\end{itemize}

The Kalman filtering, which is considered above provides an algorithm for filtering the state vector of price--to--book ratio $\tilde{m}_t^*$. To estimate the parameters of our model \eqref{03118}, in addition to the Kalman filtering, we also need to make inferences about the state vector of price--to--book ratio $\tilde{m}_t^*$ for each $t=1,\dots,T$ based on the full information $\mathcal{F}_T$, see below. Such an inference is called the smoothed estimate of the state vector of price--to--book ratio $\tilde{m}_t^*$. The smoothed inference of the state vector can be obtained by the following Kalman smoother recursions, see \citeA{Hamilton94} and \citeA{Lutkepohl05}.

\begin{itemize}
\item Smoothing step: for $t=T-1,T-2,\dots,0$,
\begin{itemize}
\item Expectations
\begin{equation}\label{03133}
\tilde{m}_{t|T}^*:=\mathbb{E}[\tilde{m}_t^*|\mathcal{F}_T]=\tilde{m}_{t|t}^*+\mathcal{S}_{t}\big(\tilde{m}_{t+1|T}^*-\tilde{m}_{t+1|t}^*\big)
\end{equation}
\item Covariances
\begin{equation}\label{03134}
\Sigma(\tilde{m}_t^*|T):=\text{Cov}[\tilde{m}_t^*|\mathcal{F}_T]=\Sigma(\tilde{m}_t^*|t)-\mathcal{S}_{t}\big(\Sigma(\tilde{m}_{t+1}^*|t)-\Sigma(\tilde{m}_{t+1}^*|T)\big)\mathcal{S}_t',
\end{equation}
where $\mathcal{S}_{t}:=\Sigma(\tilde{m}_t^*|t)C'\Sigma^{-1}(\tilde{m}_{t+1}^*|t)$ is the Kalman smoother gain. 
\end{itemize}
\end{itemize}

\subsection{EM Algorithm}

In the Expectation Maximization (EM) algorithm, one considers a joint density function of a random vector, which is composed of observed variables and state variables. In our cases, the vectors of observed variables and the state variables correspond to a vector of the observed variables that consist of the log book value growth rates and economic variables, $y:=(y_1',\dots,y_T')'$, and a vector of state variables that consist of the log price--to--book ratios, $\tilde{m}:=(\tilde{m}_0',\dots,\tilde{m}_T')'$, respectively. Usages of the EM algorithm in econometrics can be found in \citeA{Hamilton90} and \citeA{Schneider92}. Let us denote the joint density function by $f_{y,\tilde{m}}(y,\tilde{m})$. The EM algorithm consists of two steps. 

In the expectation (E) step of the EM algorithm, one has to determine a form of an expectation of log of the joint density given the full information $\mathcal{F}_T$ and maximum likelihood (ML) estimation at iteration $j$ of a parameter vector. We denote the expectation by $\Lambda(\theta|\mathcal{F}_T;\theta^{(j)})$, that is, $\Lambda(\theta|\mathcal{F}_T;\theta^{(j)}):=\mathbb{E}\big[\ln\big(f_{y,\tilde{m}}(y,\tilde{m})\big)|\mathcal{F}_T;\theta^{(j)}\big]$. For our log private company valuation model \eqref{03017}, an absolute value of a determinant formed $\big([\tilde{n}T]\times[\tilde{n}T]\big)$ matrix of partial derivatives of the elements of a vector $(u',v',w')'$ with respect to elements of a vector $(\tilde{b}',z',\tilde{m}')'$ equals $\Big|\frac{\partial (u',v',w')}{\partial (\tilde{b}',z',\tilde{m}')}\Big|=\prod_{t=1}^T|G_t^{-1}|$. Therefore, conditional on the full information $\mathcal{F}_T$ and ML estimator at iteration $j$, the expectation of log of the joint density of the vectors of the observed variables and the log price--to--book ratios is given by the following equation
\begin{eqnarray}\label{03136}
\Lambda(\theta|\mathcal{F}_T;\theta^{(j)})&=&-\frac{\tilde{n}T+n}{2}\ln(2\pi)-\frac{T}{2}\ln(|\Sigma_{\eta\eta}|)-\frac{T}{2}\ln(|\Sigma_{ww}|)-\frac{1}{2}\ln(|\Sigma_{0}|)\nonumber\\
&-&\frac{1}{2}\sum_{t=1}^T\mathbb{E}\Big[u_t'\Omega_{uu}u_t\Big|\mathcal{F}_T;\theta^{(j)}\Big]-\sum_{t=1}^T\mathbb{E}\Big[u_t'\Omega_{uv}v_t\Big|\mathcal{F}_T;\theta^{(j)}\Big]\nonumber\\
&-&\frac{1}{2}\sum_{t=1}^T\mathbb{E}\Big[v_t'\Omega_{vv} v_t\Big|\mathcal{F}_T;\theta^{(j)}\Big]-\frac{1}{2}\sum_{t=1}^T\mathbb{E}\Big[w_t'\Sigma_{ww}^{-1} w_t\Big|\mathcal{F}_T;\theta^{(j)}\Big]\\
&-&\frac{1}{2}\mathbb{E}\Big[\big(\tilde{m}_0-\mu_0\big)'\Sigma_0^{-1}\big(\tilde{m}_0-\mu_0\big)\Big|\mathcal{F}_T;\theta^{(j)}\Big]-\sum_{t=1}^T\ln(|G_t|),\nonumber
\end{eqnarray}
where $\theta:=\big(\text{vec}(C_k)',\mu_0',\text{vec}(C_z)',\text{vec}(A)',\text{vec}(C_m)',\text{vec}(\Sigma_{\eta\eta})',\text{vec}(\Sigma_{ww})',\text{vec}(\Sigma_0)'\big)'$ is a vector, which consists of all parameters of the model \eqref{03017}, $\Omega_{uu}$, $\Omega_{uv}$, and $\Omega_{vv}$ are the partitions of the matrix $\Sigma_{\eta\eta}^{-1}$, corresponding to the white noise process $\eta_t=(u_t',v_t')'$, $u_t=G_t^{-1}(\tilde{b}_t+\tilde{m}_t-\tilde{\Delta}_t)-\tilde{m}_{t-1}+\tilde{\Delta}_t-C_k\psi_t+G_t^{-1}h_t$ is the $(n\times 1)$ white noise process of the log book value growth rate process $\tilde{b}_t$, $v_t=z_t-C_z\psi_t-Az_{t-1}^*$ is the $(\ell\times 1)$ white noise process of the economic variables process $z_t$, and $w_t=\tilde{m}_t-C_m\psi_t-\tilde{m}_{t-1}$ is the $(n\times 1)$ white noise process of the log price--to--book value process $\tilde{m}_t$. 

In the maximization (M) step of the EM algorithm, we need to find a maximum likelihood estimation $\theta^{(j+1)}$ that maximizes the expectation, which is determined in the E step. First, we consider the case, where the dividend proportional to the book value of equity, see equation \eqref{03010}. Let us define a vector and matrices, which deal with partial derivatives of the log--likelihood (objective) function $\Lambda(\theta|\mathcal{F}_T;\theta^{(j)})$ with respect to parameters $\mu_0$, $C_k$ and $C_m$:
\begin{equation}\label{03137}
\delta_t:=(G_t-I_n)\Bigg(u_t+\tilde{m}_{t-1}-\bigg(\mu_0+C_m\sum_{s=1}^{t-1}\psi_s\bigg)\Bigg)
\end{equation}
is an $(n\times 1)$ vector and
\begin{equation}\label{03138}
L_i:=\text{diag}\{0,\dots,0,1,0,\dots,0\},~~~i=1,\dots,n
\end{equation}
are $(n\times n)$ diagonal matrix, whose $i$--th diagonal element is one and others are zero, where we use a convention $\sum_{s=1}^{0}\psi_s=0$. A conditional expectation of the vector $\delta_t$ given full information $\mathcal{F}_T$ is
\begin{equation}\label{03139}
\delta_{t|T}:=\mathbb{E}\big[\delta_t\big|\mathcal{F}_T\big]=(G_t-I_n)\Bigg(u_{t|T}+\tilde{m}_{t-1|T}-\bigg(\mu_0+C_m\sum_{s=1}^{t-1}\psi_s\bigg)\Bigg),
\end{equation}
where $\tilde{m}_{t|T}:=\mathsf{J}_m\tilde{m}_{t|T}^{*}$ is an $(n\times 1)$ smoothed inference of the log price--to-book value ratio process $\tilde{m}_t$ and $u_{t|T}=G_t^{-1}(\tilde{b}_t+\tilde{m}_{t|T}-\tilde{\Delta}_t)-\tilde{m}_{t-1|T}+\tilde{\Delta}_t-C_k\psi_t+G_t^{-1}h_t$ is an $(n\times 1)$ smoothed inference of the white noise process $u_t$. Since $\delta_t-\delta_{t|T}=(G_t-I_n)(\mathsf{R}_t+\mathsf{J}_m^c)\big(\tilde{m}_t^*-\tilde{m}_{t|T}^*\big)$ and $u_t-u_{t|T}=\mathsf{R}_t\big(\tilde{m}_t^*-\tilde{m}_{t|T}^*\big)$, it is clear that
\begin{equation}\label{03143}
\mathbb{E}\big[\delta_tu_t'|\mathcal{F}_T\big]=Z_t+\delta_{t|T}u_{t|T}'
\end{equation}
where $\mathsf{J}_m^c:=[0:I_n]$ is an $(n\times 2n)$ matrix, which is used to extract $\tilde{m}_{t-1}$ from $\tilde{m}_t^*$, $\mathsf{R}_t:=[G_t^{-1}:-I_n]$ is an $(n\times 2n)$ matrix and $Z_t:=(G_t-I_n)(\mathsf{R}_t+\mathsf{J}_m^c)\Sigma(\tilde{m}_t^*|T)\mathsf{R}_t'$ is an $(n\times n)$ matrix. 

Let $c_k:=\text{vec}(C_k)$ and $c_m:=\text{vec}(C_m)$ be vectorizations of the parameter matrices $C_k$ and $C_m$. Then, according to Lemma \ref{lem03}, partial derivatives of $\Lambda(\theta|\mathcal{F}_T;\theta^{(j)})$ with respect to the parameters $\mu_0$, $c_k$ and $c_m$ are obtained by
\begin{eqnarray}\label{03140}
\frac{\partial \Lambda(\theta|\mathcal{F}_T;\theta^{(j)})}{\partial \mu_0}&=&\sum_{t=1}^T\big(g_t-i_n\big)+\Sigma_0^{-1}\big(\tilde{m}_{0|T}^{(j)}-\mu_0\big)-\sum_{t=1}^T\text{diag}\big\{\delta_{t|T}^{(j)}\big\}\Omega_{uv}v_t\nonumber\\
&-&\sum_{t=1}^T\mathbb{E}\Big[\big[\text{tr}\{\delta_tu_t'L_{uu}^1\}:\dots: \text{tr}\{\delta_tu_t'L_{uu}^n\}\big]\Big|\mathcal{F}_T;\theta^{(j)}\Big]',
\end{eqnarray}
\begin{eqnarray}\label{03141}
\frac{\partial \Lambda(\theta|\mathcal{F}_T;\theta^{(j)})}{\partial c_k}&=&\sum_{t=1}^T(\psi_t\otimes I_n)\big(g_t-i_n\big)-\sum_{t=1}^T(\psi_t\otimes I_n)\Big(\big(\text{diag}\big\{\delta_{t|T}^{(j)}\big\}-I_n\big)\Omega_{uv}v_t\Big)\\
&-&\sum_{t=1}^T(\psi_t\otimes I_n)\bigg(\mathbb{E}\Big[\big[\text{tr}\{\delta_tu_t'L_{uu}^1\}:\dots:\text{tr}\{\delta_tu_t'L_{uu}^n\}\big]\Big|\mathcal{F}_T;\theta^{(j)}\Big]'-\Omega_{uu}u_{t|T}^{(j)}\bigg),\nonumber
\end{eqnarray}
and
\begin{eqnarray}\label{03142}
\frac{\partial \Lambda(\theta|\mathcal{F}_T;\theta^{(j)})}{\partial c_m}&=&\sum_{t=1}^T\bigg(\sum_{s=1}^{t-1}(\psi_s\otimes I_n)\bigg)\big(g_t-i_n\big)+\sum_{t=1}^T (\psi_t\otimes I_n)\Sigma_{ww}^{-1}w_{t|T}^{(j)}\nonumber\\
&-&\sum_{t=1}^T\bigg(\sum_{s=1}^{t-1}(\psi_s\otimes I_n)\bigg)\mathbb{E}\Big[\big[\text{tr}\{\delta_tu_t'L_{uu}^1\}:\dots:\text{tr}\{\delta_tu_t'L_{uu}^n\}\big]\Big|\mathcal{F}_T;\theta^{(j)}\Big]'\\
&-&\sum_{t=1}^T\bigg(\sum_{s=1}^{t-1}(\psi_s\otimes I_n)\bigg)\text{diag}\big\{\delta_{t|T}^{(j)}\big\}\Omega_{uv}v_t\nonumber,
\end{eqnarray}
where $u_{t|T}^{(j)}=G_t^{-1}\big(\tilde{b}_t+\tilde{m}_{t|T}^{(j)}-\tilde{\Delta}_t\big)-\tilde{m}_{t-1|T}^{(j)}+\tilde{\Delta}_t-C_k\psi_t+G_t^{-1}h_t$ is a smoothed white noise process at iteration $j$, corresponding to the white noise process $u_t$, 
\begin{equation}\label{ad032}
\delta_{t|T}^{(j)}:=\mathbb{E}\big[\delta_t\big|\mathcal{F}_T;\theta^{(j)}\big]=\big(G_t-I_n\big)\Bigg(u_{t|T}^{(j)}+\tilde{m}_{t-1|T}^{(j)}-\bigg(\mu_0+C_m\sum_{s=1}^{t-1}\psi_s\bigg)\Bigg)
\end{equation}
is a smoothed vector at iteration $j$, corresponding to the random vector $\delta_t$,  for $i=1,\dots,n$, $L_{uu}^i:=\Omega_{uu}L_i$ and $L_{vu}^i:=\Omega_{vu}L_i$ are $(n\times n)$ and $(\ell\times n)$ matrices, whose $i$--th column equal $i$--th column of the matrices $\Omega_{uu}$ and $\Omega_{vu}$ and other columns equal zero, respectively, $w_{t|T}^{(j)}:=\mathsf{R}\tilde{m}_{t|T}^{*(j)}-C_m\psi_t$ is a smoothed white noise process, corresponding to the white noise process $w_t$, $\mathsf{R}:=[I_n:-I_n]$ is an $(n\times 2n)$ matrix, and $\text{tr}\{O\}$ denotes the trace of a generic square matrix $O$. Note that it follows from Technical Annex that representations of the linearization parameter vectors $g_t$ and $h_t$ are given by 
\begin{equation}\label{ad033}
g_t=\exp\{-\varphi_t\}\oslash\big(\exp\{-\varphi_t\}-i_n\big)
\end{equation}
and
\begin{equation}\label{ad034}
h_t=-\Big\{\varphi_t\odot\exp\{-\varphi_t\}\oslash\Big(\exp\{-\varphi_t\}-i_n\Big)+\ln\Big(\exp\{-\varphi_t\}-i_n\Big)\Big\},
\end{equation}
respectively, where 
\begin{equation}\label{ad035}
\varphi_t:=\tilde{\Delta}_t-C_k\psi_t-\bigg(\mu_0+C_m\sum_{s=1}^{t-1}\psi_s\bigg).
\end{equation}

Let $u_{t|T}^k=G_t^{-1}(\tilde{b}_t+\tilde{m}_{t|T}-\tilde{\Delta}_t)-\tilde{m}_{t-1|T}+\tilde{\Delta}_t+G_t^{-1}h_t$ be a smoothed process, which excludes the term $C_k\psi_t$ from the smoothed white noise process $u_{t|T}$. Consequently, according to equation \eqref{03143}, the expectation, which is given in equations \eqref{03140}--\eqref{03142} equals
\begin{eqnarray}\label{03144}
\mathsf{E}_t^{(j)}&:=&\mathbb{E}\Big[\big[\text{tr}\{\delta_tu_t'L_{uu}^1\}:\dots:\text{tr}\{\delta_tu_t'L_{uu}^n\}\big]\Big|\mathcal{F}_T;\theta^{(j)}\Big]\nonumber\\
&=&\Big[\text{tr}\Big\{\Big(Z_t^{(j)}+\delta_{t|T}^{(j)}\big(u_{t|T}^{(j)}\big)'\Big)L_{uu}^1\Big\},\dots,\text{tr}\Big\{\Big(Z_t^{(j)}+\delta_{t|T}^{(j)}\big(u_{t|T}^{(j)}\big)'\Big)L_{uu}^n\Big\}\Big],
\end{eqnarray}
where $Z_t^{(j)}:=(G_t-I_n)\big(\mathsf{R}_t+\mathsf{J}_m^c)\Sigma^{(j)}(\tilde{m}_t^*|T)\mathsf{R}_t'$. By setting equations \eqref{03140}--\eqref{03142} to zero and taking into account the fact that for suitable matrices $A$, $B$, and $C$, the relationship $\text{vec}(ABC)=(C'\otimes A)\text{vec}(B)$ holds, the relationship $u_{t|T}^{(j)}=u_{t|T}^{k(j)}-C_k\psi_t$, and equation \eqref{03144}, one obtains ML equations of the parameters $\mu_0$, $C_k$, and $C_m$.
\begin{equation}\label{03145}
\mu_0:=\tilde{m}_{0|T}^{(j)}+\Sigma_0\sum_{t=1}^T\alpha_t^{(j)},
\end{equation}
\begin{eqnarray}\label{03146}
C_k:=\Omega_{uu}^{-1}\Bigg[\sum_{t=1}^T\bigg(\Omega_{uu}u_{t|T}^{k(j)}-\alpha_t^{(j)}-\Omega_{uv}v_t\bigg)\psi_t'\Bigg]\Bigg[\sum_{t=1}^T\psi_t\psi_t'\Bigg]^{-1},
\end{eqnarray}
and
\begin{eqnarray}\label{03147}
C_m:=\Bigg[\sum_{t=1}^T\bigg\{\Sigma_{ww}\alpha_t^{(j)}\sum_{s=1}^{t-1}\psi_s'+\mathsf{R}\tilde{m}_{t|T}^{*(j)}\psi_t'\bigg\}\Bigg]\Bigg[\sum_{t=1}^T\psi_t\psi_t'\Bigg]^{-1},
\end{eqnarray}
where $\alpha_t^{(j)}=g_t-i_n-\big(\mathsf{E}_t^{(j)}\big)'-\text{diag}\Big\{\delta_{t|T}^{(j)}\Big\}\Omega_{uv}v_t$ is an $(n\times 1)$ vector. If $i$--th company does not pay a dividend at time $t$, then as mentioned before $g_{i,t}=1$. In this case, because $i$--th components of the vectors $\delta_t$ and $\delta_{t|T}^{(j)}$ are zero, $i$--th component of the vector $\alpha_t^{(j)}$ also equals zero. Consequently, if all companies do not pay dividends, then the ML equations of parameters $\mu_0$, $C_k$, and $C_m$ are obtained by setting $\alpha_t^{(j)}=0$ in ML equations \eqref{03145}--\eqref{03147}.

To obtain estimations of the parameter matrices $C_z$ and $A$, let us define the following stacked vector and matrices: $\bar{z}_{t-1}^*:=(\psi_t',(z_{t-1}^*)')'$ is an $([l+\ell p]\times 1)$ vector, $\bar{A}:=[C_z:A]$ is an $(\ell\times [l+\ell p]])$ matrix, $U:=[u_1:\dots:u_T]$ is an $(n\times T)$ matrix, $V:=[v_1:\dots:v_T]$ is an $(\ell\times T)$ matrix, $\mathsf{Z}:=[z_1:\dots:z_T]$ is an $(\ell\times T)$ matrix, and $\bar{\mathsf{Z}}_{-1}^*:=[\bar{z}_0^*:\dots:\bar{z}_{T-1}^*]$ is an $([l+\ell p]\times T)$ matrix. Then, the second line of the system \eqref{03017} can be written by
\begin{equation}\label{03148}
z_t=\bar{A}\bar{z}_{t-1}^*+v_t, ~~~t=1,\dots,T.
\end{equation}  
Consequently, the above equation is compactly written by
\begin{equation}\label{03149}
\mathsf{Z}=\bar{A}\bar{\mathsf{Z}}_{-1}^*+V.
\end{equation}  
Since $\sum_{t=1}^Tv_tu_t'=VU'$ and $\sum_{t=1}^Tv_tv_t'=VV'$, and $V$ is measurable with respect to the full information $\mathcal{F}_T$, one obtains that 
\begin{equation}\label{03150}
\sum_{t=1}^T\mathbb{E}\Big[u_t'\Omega_{uv}v_t\Big|\mathcal{F}_T;\theta^{(j)}\Big]=\text{tr}\Big\{\big(U_{\bullet|T}^{(j)}\big)'\Omega_{uv}V\Big\}~\text{and}~\sum_{t=1}^T\mathbb{E}\Big[v_t'\Omega_{vv}v_t\Big|\mathcal{F}_T;\theta^{(j)}\Big]=\text{tr}\Big\{V'\Omega_{vv}V\Big\},
\end{equation}
where $U_{\bullet|T}^{(j)}:=\big[u_{1|T}^{(j)}:\dots:u_{T|T}^{(j)}\big]$ is an $(n\times T)$ smoothed matrix, corresponding to the white noise process $u_t$. Therefore, due to \citeA{Lutkepohl05}, partial derivatives of the above equations with respect to the matrix $\bar{A}$ are given by 
\begin{equation}\label{03151}
\frac{\partial}{\partial \bar{A}}\sum_{t=1}^T\mathbb{E}\Big[u_t'\Omega_{uv}v_t\Big|\mathcal{F}_T;\theta^{(j)}\Big]=-\Omega_{vu}U_{\bullet|T}^{(j)}\big(\bar{\mathsf{Z}}_{-1}^*\big)'
\end{equation}
and
\begin{equation}\label{03152}
\frac{\partial}{\partial \bar{A}}\sum_{t=1}^T\mathbb{E}\Big[v_t'\Omega_{vv}v_t\Big|\mathcal{F}_T;\theta^{(j)}\Big]=2\Omega_{vv}\bar{A}\bar{\mathsf{Z}}_{-1}^*\big(\bar{\mathsf{Z}}_{-1}^*\big)'-2\Omega_{vv}\mathsf{Z}\big(\bar{\mathsf{Z}}_{-1}^*\big)'.
\end{equation}
Consequently, a partial derivative of the log--likelihood function $\Lambda(\theta|\mathcal{F}_T;\theta^{(j)})$ with respect to the matrix $\bar{A}$ is given by
\begin{equation}\label{03153}
\frac{\Lambda(\theta|\mathcal{F}_T;\theta^{(j)})}{\partial \bar{A}}=\Omega_{vu}U_{\bullet|T}^{(j)}\big(\bar{\mathsf{Z}}_{-1}^*\big)'+\Omega_{vv}\mathsf{Z}\big(\bar{\mathsf{Z}}_{-1}^*\big)'-\Omega_{vv}\bar{A}\bar{\mathsf{Z}}_{-1}^*\big(\bar{\mathsf{Z}}_{-1}^*\big)'
\end{equation}
As a result, if we equate the above equation to zero, then one obtains an estimator of the parameter matrix $\bar{A}$
\begin{equation}\label{03154}
\bar{A}:=\Big(\Omega_{vv}^{-1}\Omega_{vu}U_{\bullet|T}^{(j)}\big(\bar{\mathsf{Z}}_{-1}^*\big)'+\mathsf{Z}\big(\bar{\mathsf{Z}}_{-1}^*\big)'\Big)\big(\bar{\mathsf{Z}}_{-1}^*\big(\bar{\mathsf{Z}}_{-1}^*\big)'\big)^{-1}.
\end{equation}

For estimators of the covariance matrices $\Sigma_{\eta\eta}$, $\Sigma_{ww}$, and $\Sigma_0$, the following formulas holds
\begin{equation}\label{03155}
\Sigma_{\eta\eta}:=\frac{1}{T}\sum_{t=1}^T\mathbb{E}\big[\eta_t\eta_t'\big|\mathcal{F}_T;\theta^{(j)}\big]=\frac{1}{T}\sum_{t=1}^T\begin{bmatrix}
\mathbb{E}\big[u_tu_t'\big|\mathcal{F}_T;\theta^{(j)}\big] & \mathbb{E}\big[u_tv_t'\big|\mathcal{F}_T;\theta^{(j)}\big]\\
\mathbb{E}\big[v_tu_t'\big|\mathcal{F}_T;\theta^{(j)}\big] & \mathbb{E}\big[v_tv_t'\big|\mathcal{F}_T;\theta^{(j)}\big]
\end{bmatrix},
\end{equation}
\begin{equation}\label{03156}
\Sigma_{ww}:=\frac{1}{T}\sum_{t=1}^T\mathbb{E}\big[w_tw_t'\big|\mathcal{F}_T;\theta^{(j)}\big], 
~~~~\text{and}~~~\Sigma_0:=\Sigma^{(j)}(\tilde{m}_0|T).
\end{equation}
To calculate the conditional expectations $\mathbb{E}\big[\eta_t\eta_t'|\mathcal{F}_T;\theta^{(j)}\big]$ and $\mathbb{E}\big[w_tw_t'|\mathcal{F}_T;\theta^{(j)}\big]$, observe that the white noise processes at time $t$ of the log book value growth rate process, the economic variables process, and the log price--to--book ratio process can be represented by 
\begin{eqnarray}\label{03157}
u_t&=&u_{t|T}+\mathsf{R}_t\big(\tilde{m}_t^*-\tilde{m}_{t|T}^*\big)\nonumber\\
v_t&=&z_t-\bar{A}\bar{z}_{t-1}^*\\
w_t&=&w_{t|T}+\mathsf{R}\big(\tilde{m}_t^*-\tilde{m}_{t|T}^*\big)\nonumber.
\end{eqnarray}
Therefore, as $v_t$, $u_{t|T}$, and $v_{t|T}$ are measurable with respect to the full information $\mathcal{F}_T$ (their values are known at time $T$), it follows from equations \eqref{03157} that 
\begin{equation}\label{03158}
\mathbb{E}\big[u_tu_t'|\mathcal{F}_T;\theta^{(j)}\big]=u_{t|T}^{(j)}\big(u_{t|T}^{(j)}\big)'+\mathsf{R}_t\Sigma^{(j)}(\tilde{m}_t^*|T)\mathsf{R}_t',~~~
\mathbb{E}\big[u_tv_t'|\mathcal{F}_T;\theta^{(j)}\big]=u_{t|T}^{(j)}v_t',
\end{equation}
\begin{equation}\label{03159}
\mathbb{E}\big[v_tv_t'|\mathcal{F}_T;\theta^{(j)}\big]=v_tv_t',~~~\text{and}~~~\mathbb{E}\big[w_tw_t'|\mathcal{F}_T;\theta^{(j)}\big]=w_{t|T}^{(j)}\big(w_{t|T}^{(j)}\big)'+\mathsf{R}\Sigma^{(j)}(\tilde{m}_t^*|T)\mathsf{R}'.
\end{equation}

To get estimations at iteration $(j+1)$ of the parameters $\Sigma_{\eta\eta}$ and $\Sigma_{ww}$, we need to substitute equations \eqref{03158} and \eqref{03159} into equations \eqref{03155} and \eqref{03156}.
After substitutions, to obtain estimations at iteration $(j+1)$ of the model's parameters, we needs to solve ML equations \eqref{03145}--\eqref{03147} and \eqref{03154}--\eqref{03156} for the parameters applying the Kalman smoothing at iteration $j$. As a result, then under suitable conditions the zig--zag iteration that corresponds to equations \eqref{03121}--\eqref{03128}, \eqref{03133}, \eqref{03134}, \eqref{03145}--\eqref{03147}, and \eqref{03154}--\eqref{03156} converges to the maximum likelihood estimation of our log private company valuation model, corresponding to equation \eqref{03010}. 

Now, we consider the case, where the dividend proportional to market price of equity, see equation \eqref{ad001}. Then, it can be shown that ML equations of the parameters $\mu_0$, $C_k$, and $C_m$ are given by the following equations
\begin{equation}\label{ad002}
\mu_0:=\tilde{m}_{0|T}^{(j)},
\end{equation}
\begin{eqnarray}\label{ad003}
C_k:=\Omega_{uu}^{-1}\Bigg[\sum_{t=1}^T\bigg(\Omega_{uu}u_{t|T}^{k(j)}-\alpha_t^{(j)}-\Omega_{uv}v_t\bigg)\psi_t'\Bigg]\Bigg[\sum_{t=1}^T\psi_t\psi_t'\Bigg]^{-1},
\end{eqnarray}
and
\begin{eqnarray}\label{ad004}
C_m:=\Bigg[\sum_{t=1}^T\bigg\{\mathsf{R}\tilde{m}_{t|T}^{*(j)}\psi_t'\bigg\}\Bigg]\Bigg[\sum_{t=1}^T\psi_t\psi_t'\Bigg]^{-1},
\end{eqnarray}
where $\alpha_t^{(j)}=g_t-i_n-\big(\mathsf{E}_t^{(j)}\big)'-\text{diag}\Big\{\delta_{t|T}^{(j)}\Big\}\Omega_{uv}v_t$ is an $(n\times 1)$ vector with $\delta_{t|T}^{(j)}:=(G_t-I_n)\big(u_{t|T}^{(j)}-\tilde{m}_{t-1|T}^{(j)}\big)$ and representations \eqref{ad033} and \eqref{ad034} still hold for $\varphi:=\tilde{\Delta}_t-C_k\psi_t$. ML equations of the other parameters are the same as the previous case, where dividend is proportional to book value. Again, under the suitable conditions the zig--zag iteration converges to the ML estimations of the parameters.

A smoothed inference of the market value vector at time $t$ of the private companies is calculated by the following formula
\begin{equation}\label{03160}
V_{t|T}=m_{t|T}\odot B_t, ~~~t=0,1,\dots,T,
\end{equation}
where $m_{t|T}=\exp\{\tilde{m}_{t|T}\}$ is a smoothed price--to--book ratio vector at time $t$. Also, an analyst can forecast the market value process of the private companies by using equation \eqref{03129}. 

\section{Conclusion}

For a public company, option pricing models and life insurance models, relying on the observed price of the company have been developed. However, for a private company, because of unobserved prices, pricing and hedging of options and life insurance products are in their early stages of development. For this reason, this paper introduces a log private company valuation model, which is based on the dynamic Gordon growth model. To the best of our knowledge, this is the first attempt to introduce an option pricing model for a private company. Since most private companies pay dividends, in this paper, we consider the dividend--paying Black--Scholes call and put options and obtain closed--form pricing formulas, which are based on book values, log book value growth rates, and unobserved log price--to--book ratios of a private company for the options. Next, we obtain closed--form pricing formulas for the equity--linked life insurance products, which consist of term life and pure endowment insurances, corresponding to segregated fund contracts and unit--linked life insurance. Because hedging is an important concept for both options and life insurance products, we derived locally risk--minimizing strategies for the Black--Scholes call and put options and the equity--linked life insurance products. Finally, in order to use the model, we provide ML estimations of the model's parameters and EM algorithm, which are focused on the Kalman filtering. 

\section{Technical Annex}

Here we give the Lemmas and their proofs.

\begin{lem}\label{lem01}
Let $X\sim \mathcal{N}(\mu,\sigma^2)$. Then for all $K>0$,
\begin{equation*}\label{03174}
\mathbb{E}\big[\big(e^X-K\big)^+\big]=\exp\bigg\{\mu+\frac{\sigma^2}{2}\bigg\}\Phi(d_1)-K\Phi(d_2)
\end{equation*}
and
\begin{equation*}\label{03174}
\mathbb{E}\big[\big(K-e^X\big)^+\big]=K\Phi(-d_2)-\exp\bigg\{\mu+\frac{\sigma^2}{2}\bigg\}\Phi(-d_1),
\end{equation*}
where $d_1:=\big(\mu+\sigma^2-\ln(K)\big)/\sigma$, $d_2:=d_1-\sigma$, and $\Phi(x)=\int_{-\infty}^x\frac{1}{\sqrt{2\pi}}e^{-u^2/2}du$ is the cumulative standard normal distribution function.
\end{lem}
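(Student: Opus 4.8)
The plan is a direct computation by standardizing and completing the square, followed by put--call parity for the second identity. First I would write $X = \mu + \sigma Z$ with $Z \sim \mathcal{N}(0,1)$ and observe that the event $\{e^X > K\}$ coincides with $\{Z > (\ln K - \mu)/\sigma\} = \{Z > -d_2\}$, since by definition $d_2 = (\mu - \ln K)/\sigma$. Hence
\begin{equation*}
\mathbb{E}\big[(e^X-K)^+\big] = \mathbb{E}\big[e^X \mathbf{1}_{\{Z > -d_2\}}\big] - K\,\mathbb{P}(Z > -d_2).
\end{equation*}
The second term is immediately $K\big(1-\Phi(-d_2)\big) = K\Phi(d_2)$ by symmetry of the standard normal density.

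For the first term I would write it out as $e^{\mu}\int_{-d_2}^{\infty} e^{\sigma z}\tfrac{1}{\sqrt{2\pi}}e^{-z^2/2}\,dz$ and complete the square in the exponent, using $\sigma z - \tfrac{z^2}{2} = \tfrac{\sigma^2}{2} - \tfrac12(z-\sigma)^2$. This turns the integral into $e^{\mu+\sigma^2/2}\int_{-d_2}^\infty \tfrac{1}{\sqrt{2\pi}}e^{-(z-\sigma)^2/2}\,dz = e^{\mu+\sigma^2/2}\,\mathbb{P}(Z > -d_2-\sigma)$, and since $d_2 + \sigma = d_1$ this equals $e^{\mu+\sigma^2/2}\Phi(d_1)$. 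Combining gives the first claimed formula.

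For the put formula I would invoke the elementary identity $(K-e^X)^+ = (e^X-K)^+ - (e^X - K)$, take expectations, and use $\mathbb{E}[e^X] = e^{\mu+\sigma^2/2}$ together with the call formula just proved; the terms rearrange into $K\Phi(-d_2) - e^{\mu+\sigma^2/2}\Phi(-d_1)$ after replacing $1-\Phi(d_i)$ by $\Phi(-d_i)$. (Alternatively one could repeat the standardize-and-complete-the-square argument on $\{e^X < K\}$, which is equally routine.)

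There is no real obstacle here; the only point requiring a little care is bookkeeping the constants so that the shift $d_2 \mapsto d_2 + \sigma = d_1$ emerges correctly from completing the square, and making sure the inequality direction is tracked when passing between $\mathbb{P}(Z>\cdot)$ and $\Phi(\cdot)$. Integrability is not an issue since $e^X$ is log-normal and hence has finite mean, so all expectations above are well defined and the manipulations are justified.
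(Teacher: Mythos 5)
Your proof is correct: the standardization $X=\mu+\sigma Z$, the identification of the exercise region with $\{Z>-d_2\}$, the completion of the square giving the shift $d_2\mapsto d_2+\sigma=d_1$, and the put--call parity step $(K-e^X)^+=(e^X-K)^+-(e^X-K)$ all check out. Note that the paper does not actually prove Lemma 1 --- it simply cites an earlier reference (Battulga, 2023a) --- so your write-up supplies the standard self-contained derivation that the citation points to; there is nothing to reconcile between the two.
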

\begin{proof}
See, e.g., \citeA{Battulga24a}.
\end{proof}

\begin{lem}\label{lem02}
Let $\alpha_1\in \mathbb{R}^{n_1}$ and $\alpha_2\in\mathbb{R}^{n_2}$ be fixed vectors, and $X_1\in \mathbb{R}^{n_1}$ and $X_2\in \mathbb{R}^{n_2}$ be random vectors and their joint distribution is given by
$$\begin{bmatrix}
X_1 \\ X_2
\end{bmatrix} \sim \mathcal{N}\bigg(\begin{bmatrix}
\mu_1 \\ \mu_2
\end{bmatrix},\begin{bmatrix}
\Sigma_{11} & \Sigma_{12}\\
\Sigma_{21} & \Sigma_{22}
\end{bmatrix}\bigg).$$
Then, for all $L\in\mathbb{R}_{+}^{n_2}$, it holds
\begin{eqnarray*}
&&\Psi^+\big(L;\alpha_1;\alpha_2;\mu_1;\mu_2;\Sigma_{11};\Sigma_{12};\Sigma_{22}\big):=\mathbb{E}\bigg[\Big(\alpha_1\odot e^{X_1}\Big)\Big(\alpha_2\odot\big(e^{X_2}-L\big)^+\Big)'\bigg]\\
&&=\bigg(\Big(\alpha_1\odot\mathbb{E}\big[e^{X_1}\big]\Big)\Big(\alpha_2\odot\mathbb{E}\big[e^{X_2}\big]\Big)'\bigg)\odot e^{\Sigma_{12}}\odot\Phi\bigg(i_{n_1}\otimes d_1'+\Sigma_{12}\mathrm{diag}\big\{\mathcal{D}\big[\Sigma_{22}\big]\big\}^{-1/2}\bigg)\\
&&-\bigg(\Big(\alpha_1\odot\mathbb{E}\big[e^{X_1}\big]\Big)\Big(\alpha_2\odot L\Big)'\bigg)\odot\Phi\bigg(i_{n_1}\otimes d_2'+\Sigma_{12}\mathrm{diag}\big\{\mathcal{D}\big[\Sigma_{22}\big]\big\}^{-1/2}\bigg)
\end{eqnarray*}
and
\begin{eqnarray*}
&&\Psi^-\big(L;\alpha_1;\alpha_2;\mu_1;\mu_2;\Sigma_{11};\Sigma_{12};\Sigma_{22}\big):=\mathbb{E}\bigg[\Big(\alpha_1\odot e^{X_1}\Big)\Big(\alpha_2\odot\big(L-e^{X_2}\big)^+\Big)'\bigg]\\
&&=\bigg(\Big(\alpha_1\odot\mathbb{E}\big[e^{X_1}\big]\Big)\Big(\alpha_2\odot L\Big)'\bigg)\odot \Phi\bigg(-i_{n_1}\otimes d_2'-\Sigma_{12}\mathrm{diag}\big\{\mathcal{D}\big[\Sigma_{22}\big]\big\}^{-1/2}\bigg)\\
&&-\bigg(\Big(\alpha_1\odot\mathbb{E}\big[e^{X_1}\big]\Big)\Big(\alpha_2\odot \mathbb{E}\big[e^{X_2}\big]\Big)'\bigg)\odot e^{\Sigma_{12}}\odot\Phi\bigg(-i_{n_1}\otimes d_1'-\Sigma_{12}\mathrm{diag}\big\{\mathcal{D}\big[\Sigma_{22}\big]\big\}^{-1/2}\bigg),
\end{eqnarray*}
where for each $i=1,2$, $\mathbb{E}\big[e^{X_i}\big]=e^{\mu_i+1/2\mathcal{D}[\Sigma_{ii}]}$ is the expectation of the multivariate log--normal random vector, $d_1:=\big(\mu_2+\mathcal{D}[\Sigma_{22}]-\ln(L)\big)\oslash\sqrt{\mathcal{D}[\Sigma_{22}]}$, $d_2:=d_1-\sqrt{\mathcal{D}[\Sigma_{22}]}$, and $\Phi(x)=\int_{-\infty}^x\frac{1}{\sqrt{2\pi}}e^{-u^2/2}du$ is the cumulative standard normal distribution function.
\end{lem}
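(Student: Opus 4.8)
The plan is to peel the matrix identity down to scalar statements and then reassemble in Hadamard form. By definition $\Psi^{+}$ and $\Psi^{-}$ are $(n_1\times n_2)$ matrices whose $(i,j)$-th entries are $\alpha_{1,i}\alpha_{2,j}\,\mathbb{E}\big[e^{X_{1,i}}(e^{X_{2,j}}-L_j)^{+}\big]$ and $\alpha_{1,i}\alpha_{2,j}\,\mathbb{E}\big[e^{X_{1,i}}(L_j-e^{X_{2,j}})^{+}\big]$, respectively, so it is enough to evaluate, for each pair $(i,j)$, the scalar expectation $\mathbb{E}\big[e^{X_{1,i}}(e^{X_{2,j}}-L_j)^{+}\big]$ and its put analogue. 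Only the bivariate marginal of $(X_{1,i},X_{2,j})'$ is needed here; it is normal with mean $(\mu_{1,i},\mu_{2,j})'$ and covariance matrix whose diagonal entries are $(\Sigma_{11})_{i,i}$ and $(\Sigma_{22})_{j,j}$ and whose off-diagonal entry is $(\Sigma_{12})_{i,j}$.

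For a fixed $(i,j)$ I would remove the factor $e^{X_{1,i}}$ by an exponential (Esscher) change of measure. Define $\hat{\mathbb{P}}$ by $\frac{d\hat{\mathbb{P}}}{d\mathbb{P}}:=e^{X_{1,i}}\big/\mathbb{E}\big[e^{X_{1,i}}\big]=\exp\big\{X_{1,i}-\mu_{1,i}-\frac{1}{2}(\Sigma_{11})_{i,i}\big\}$, which is a genuine probability density. By the Cameron--Martin shift for Gaussian vectors, under $\hat{\mathbb{P}}$ the pair $(X_{1,i},X_{2,j})'$ stays bivariate normal with unchanged covariance matrix but with its mean translated by the covariances with $X_{1,i}$; in particular $X_{2,j}\mid\hat{\mathbb{P}}\sim\mathcal{N}\big(\mu_{2,j}+(\Sigma_{12})_{i,j},(\Sigma_{22})_{j,j}\big)$. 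Hence $\mathbb{E}\big[e^{X_{1,i}}(e^{X_{2,j}}-L_j)^{+}\big]=\mathbb{E}\big[e^{X_{1,i}}\big]\,\hat{\mathbb{E}}\big[(e^{X_{2,j}}-L_j)^{+}\big]$, and the same with the put payoff.

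Applying Lemma \ref{lem01} under $\hat{\mathbb{P}}$ with $\mu=\mu_{2,j}+(\Sigma_{12})_{i,j}$, $\sigma^{2}=(\Sigma_{22})_{j,j}$, and strike $K=L_j$ gives $\hat{\mathbb{E}}\big[(e^{X_{2,j}}-L_j)^{+}\big]=e^{\mu+\sigma^{2}/2}\Phi\big(d_1^{(i,j)}\big)-L_j\Phi\big(d_2^{(i,j)}\big)$ with $d_1^{(i,j)}=(\mu+\sigma^{2}-\ln L_j)/\sigma$ and $d_2^{(i,j)}=d_1^{(i,j)}-\sigma$, together with the corresponding put formula. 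Two elementary bookkeeping facts then close the argument. First, $e^{\mu+\sigma^{2}/2}=\mathbb{E}\big[e^{X_{2,j}}\big]\,e^{(\Sigma_{12})_{i,j}}$, so the prefactor $\mathbb{E}\big[e^{X_{1,i}}\big]e^{\mu+\sigma^{2}/2}$ equals $\mathbb{E}\big[e^{X_{1,i}}\big]\,\mathbb{E}\big[e^{X_{2,j}}\big]\,e^{(\Sigma_{12})_{i,j}}$, and assembling these entries produces the matrix $\big((\alpha_1\odot\mathbb{E}[e^{X_1}])(\alpha_2\odot\mathbb{E}[e^{X_2}])'\big)\odot e^{\Sigma_{12}}$. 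Second, with $d_1$, $d_2$ denoting the $(n_2\times1)$ vectors defined in the statement, one checks $d_1^{(i,j)}=(d_1)_j+(\Sigma_{12})_{i,j}/\sqrt{(\Sigma_{22})_{j,j}}$ and $d_2^{(i,j)}=(d_2)_j+(\Sigma_{12})_{i,j}/\sqrt{(\Sigma_{22})_{j,j}}$, so the $(n_1\times n_2)$ matrices $\big(d_1^{(i,j)}\big)_{i,j}$ and $\big(d_2^{(i,j)}\big)_{i,j}$ are exactly $i_{n_1}\otimes d_1'+\Sigma_{12}\,\mathrm{diag}\big\{\mathcal{D}[\Sigma_{22}]\big\}^{-1/2}$ and $i_{n_1}\otimes d_2'+\Sigma_{12}\,\mathrm{diag}\big\{\mathcal{D}[\Sigma_{22}]\big\}^{-1/2}$. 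Gathering the entrywise identities into Hadamard-product form, and using $\Phi(-x)$ for the put pieces, yields the stated closed forms for $\Psi^{+}$ and $\Psi^{-}$.

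Everything here is routine; the only places that need care are the change-of-measure step --- checking that the density has unit mean and correctly identifying which mean shifts and by how much --- and the final reindexing from scalar entries to the compact Kronecker/Hadamard notation, especially matching the argument of $\Phi$ with $i_{n_1}\otimes d_k'+\Sigma_{12}\,\mathrm{diag}\{\mathcal{D}[\Sigma_{22}]\}^{-1/2}$. A minor caveat is that the stated formula implicitly requires $\mathcal{D}[\Sigma_{22}]>0$; the degenerate coordinates can be treated by a limiting argument.
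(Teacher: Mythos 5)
Your proposal is correct, and its skeleton --- reduce the matrix identity to the $(i,j)$-th scalar entries $\alpha_{1,i}\alpha_{2,j}\,\mathbb{E}\big[e^{X_{1,i}}(e^{X_{2,j}}-L_j)^{+}\big]$, work only with the bivariate marginal of $(X_{1,i},X_{2,j})'$, and then reassemble via the identification of $\big(d_k^{(i,j)}\big)_{i,j}$ with $i_{n_1}\otimes d_k'+\Sigma_{12}\,\mathrm{diag}\{\mathcal{D}[\Sigma_{22}]\}^{-1/2}$ --- is exactly the one the paper uses. The difference lies in how the scalar bivariate identity is established: the paper simply cites it (Lemma 5 of Battulga, 2023a) and writes down the resulting entrywise formulas, whereas you derive it from first principles by an Esscher change of measure $d\hat{\mathbb{P}}/d\mathbb{P}=e^{X_{1,i}-\mu_{1,i}-\frac{1}{2}(\Sigma_{11})_{i,i}}$, the Cameron--Martin shift of the mean of $X_{2,j}$ by $(\Sigma_{12})_{i,j}$, and an application of Lemma \ref{lem01} under the tilted measure. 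Your computations check out: the prefactor $\mathbb{E}[e^{X_{1,i}}]e^{\mu+\sigma^2/2}=\mathbb{E}[e^{X_{1,i}}]\mathbb{E}[e^{X_{2,j}}]e^{(\Sigma_{12})_{i,j}}$ and the shifts $d_k^{(i,j)}=(d_k)_j+(\Sigma_{12})_{i,j}/\sqrt{(\Sigma_{22})_{j,j}}$ reproduce precisely the entrywise expressions displayed in the paper's proof. What your route buys is self-containedness --- the reader need not consult the external reference --- at the cost of a slightly longer argument; your closing caveat about degenerate coordinates with $(\mathcal{D}[\Sigma_{22}])_j=0$ is a fair observation that applies equally to the paper's statement.
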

\begin{proof}
For $i=1,\dots,n_1$ and $j=1,\dots,n_2$, let us denote an $i$--th elements of the fixed vector $\alpha_1$ and random vector $X_1$ by $(\alpha_1)_i$ and $(X_1)_i$, respectively, and a $j$--th elements of the fixed vector $\alpha_2$ and random vector $X_2$ by $(\alpha_2)_j$ and $(X_2)_j$, respectively, a mean of the random variable $(X_2)_j$ by $(\mu_2)_j$, a variance of the random variable $(X_2)_j$ by $(\mathcal{D}[\Sigma_{22}])_j$, a covariance between the random vectors $(X_1)_i$ and $(X_2)_j$ by $(\Sigma_{12})_{ij}$, and a $j$--th element of the vector $L$ by $L_j$. Then, it follows from Lemma 5 in \citeA{Battulga24a}, that
\begin{eqnarray*}
&&\mathbb{E}\Big[\Big((\alpha_1)_ie^{(X_1)_i}\Big)\Big((\alpha_2)_j\big(e^{(X_2)_j}-K\big)^+\Big)\Big]=\Big((\alpha_1)_i\mathbb{E}\big[e^{(X_1)_i}\big]\Big)\Big((\alpha_2)_j\mathbb{E}\big[e^{(X_2)_j}\big]\Big)e^{(\Sigma_{12})_{ij}}\\
&&\times \Phi\Bigg((d_1)_j+\frac{(\Sigma_{12})_{ij}}{\sqrt{(\mathcal{D}[\Sigma_{22}])_j}}\Bigg)-\Big((\alpha_1)_i\mathbb{E}\big[e^{(X_1)_i}\big]\Big)\Big((\alpha_2)_jL_j\Big)\Phi\Bigg((d_2)_j+\frac{(\Sigma_{12})_{ij}}{\sqrt{(\mathcal{D}[\Sigma_{22}])_j}}\Bigg)
\end{eqnarray*}
and
\begin{eqnarray*}
&&\mathbb{E}\Big[\Big((\alpha_1)_ie^{(X_1)_i}\Big)\Big((\alpha_2)_j\big(K-e^{(X_2)_j}\big)^+\Big)\Big]=\Big((\alpha_1)_i\mathbb{E}\big[e^{(X_1)_i}\big]\Big)\Big((\alpha_2)_jL_j\Big)\Phi\Bigg(-(d_2)_j-\frac{(\Sigma_{12})_{ij}}{\sqrt{(\mathcal{D}[\Sigma_{22}])_j}}\Bigg)\\
&&-\Big((\alpha_1)_i\mathbb{E}\big[e^{(X_1)_i}\big]\Big)\Big((\alpha_2)_j\mathbb{E}\big[e^{(X_2)_j}\big]\Big)e^{(\Sigma_{12})_{ij}}\Phi\Bigg(-(d_1)_j-\frac{(\Sigma_{12})_{ij}}{\sqrt{(\mathcal{D}[\Sigma_{22}])_j}}\Bigg)
\end{eqnarray*}
for $i=1,\dots,n_1$ and $j=1,\dots,n_2$, where $(d_1)_j:=\big((\mu_2)_j+(\mathcal{D}[\Sigma_{22}])_j-\ln(L_j)\big)/\sqrt{(\mathcal{D}[\Sigma_{22}])_j}$ and $(d_2)_j:=(d_1)_j-\sqrt{(\mathcal{D}[\Sigma_{22}])_j}$. That completes the proof the Lemma.
\end{proof}

\begin{lem}\label{lem03}
Let $u_t=G_t^{-1}(\tilde{b}_t+\tilde{m}_t-\tilde{\Delta}_t)-\tilde{m}_{t-1}+\tilde{\Delta}_t-C_k\psi_t+G_t^{-1}h_t$ and $w_t=\tilde{m}_t-C_m\psi_t-\tilde{m}_{t-1}$. Then, the following partial derivatives hold
\begin{equation}\label{03161}
\frac{\partial u_t}{\partial \mu_0'}=\big[L_1\delta_t:\dots:L_n\delta_t\big],~~~\frac{\partial \ln(|G_t|)}{\partial \mu_0'}=-(g_t-i_n)'
\end{equation}
\begin{equation}\label{03162}
\frac{\partial u_t}{\partial c_k'}=\Big(\big[L_1\delta_t:\dots:L_n\delta_t\big]-I_n\Big)(\psi_t'\otimes I_n),~~~\frac{\partial \ln(|G_t|)}{\partial c_k'}=-(g_t-i_n)'(\psi_t'\otimes I_n),
\end{equation}
\begin{equation}\label{03163}
\frac{\partial u_t}{\partial c_m'}=\sum_{s=1}^{t-1}\big[L_1\delta_t:\dots:L_n\delta_t\big](\psi_s'\otimes I_n), ~~~\frac{\partial \ln(|G_t|)}{\partial c_m'}=-\sum_{s=1}^{t-1}(g_t-i_n)'(\psi_s'\otimes I_n),
\end{equation}
and
\begin{equation}\label{03164}
\frac{\partial }{\partial c_m'}\Big(w_t'\Sigma_{ww}^{-1}w_t\Big)=-2w_t'\Sigma_{ww}^{-1}(\psi_t'\otimes I_n),
\end{equation}
where $\delta_t$ and $L_i$ ($i=1,\dots,n$) are given by equations \eqref{03137} and \eqref{03138}.
\end{lem}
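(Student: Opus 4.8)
The plan is to exploit the fact that the ``linearization parameters'' $g_t$, $G_t$, $h_t$ are \emph{not} free constants but are functions of the parameters $\mu_0$, $C_k$, $C_m$ through the vector $\mu_t$, and then to differentiate everything by the chain rule. The first task is therefore to obtain a tractable description of $\mu_t$. Using $\tilde P_t=\tilde m_t+\ln(\mathsf{B}_t)$ and $\tilde d_t=\tilde\Delta_t+\ln(\mathsf{B}_{t-1})$, the book values cancel in $\tilde d_t-\tilde P_t=\tilde\Delta_t-\tilde m_t-\tilde b_t$, so $\mu_t=\tilde\Delta_t-\mathbb{E}[\tilde m_t|\mathcal{F}_0]-\mathbb{E}[\tilde b_t|\mathcal{F}_0]$. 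From \eqref{03013} one reads off $\mathbb{E}[\tilde m_t|\mathcal{F}_0]=\mu_0+C_m\sum_{s=1}^{t}\psi_s=:m_{t|0}$, and from \eqref{03014} with $\mathbb{E}[u_t|\mathcal{F}_0]=0$ one gets $\mathbb{E}[\tilde b_t|\mathcal{F}_0]=-m_{t|0}+G_tm_{t-1|0}+G_tC_k\psi_t-(G_t-I_n)\tilde\Delta_t-h_t$. Substituting, the $m_{t|0}$ terms cancel and, after multiplying by $G_t^{-1}$ and inserting $h_t=G_t(\ln(g_t)-\mu_t)+\mu_t$, one is left with the componentwise implicit relation
\begin{equation*}
F(\mu_t):=\mu_t-\ln\big(i_n+\exp\{\mu_t\}\big)=\tilde\Delta_t-\mu_0-C_m\sum_{s=1}^{t-1}\psi_s-C_k\psi_t,
\end{equation*}
equivalently $\mu_t-\ln(g_t)=\tilde\Delta_t-m_{t-1|0}-C_k\psi_t$. (Since $F$ is a strictly increasing bijection onto $(-\infty,0)^n$, this pins down $\mu_t$ precisely when $g_t>i_n$, i.e.\ in the dividend--paying case.)

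Next I would differentiate this implicit equation. The Jacobian $F'(\mu_t)=I_n-\text{diag}\{(g_t-i_n)\oslash g_t\}$ equals $G_t^{-1}$, and the right--hand side is affine with coefficient $-I_n$ in $\mu_0$, $-(\psi_t'\otimes I_n)$ in $c_k=\text{vec}(C_k)$, and $-\sum_{s=1}^{t-1}(\psi_s'\otimes I_n)$ in $c_m=\text{vec}(C_m)$, so the implicit function theorem gives $\partial\mu_t/\partial\mu_0'=-G_t$, $\partial\mu_t/\partial c_k'=-G_t(\psi_t'\otimes I_n)$ and $\partial\mu_t/\partial c_m'=-G_t\sum_{s=1}^{t-1}(\psi_s'\otimes I_n)$. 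For the determinant part, from $\ln(|G_t|)=i_n'\ln(g_t)$ and $\partial g_t/\partial\mu_t'=G_t-I_n$ one has $\partial\ln(|G_t|)/\partial\mu_t'=(i_n-G_t^{-1}i_n)'$; composing with the three expressions above and using $(i_n-G_t^{-1}i_n)'G_t=(g_t-i_n)'$ yields immediately the three stated formulas $\partial\ln(|G_t|)/\partial\mu_0'=-(g_t-i_n)'$, $\partial\ln(|G_t|)/\partial c_k'=-(g_t-i_n)'(\psi_t'\otimes I_n)$ and $\partial\ln(|G_t|)/\partial c_m'=-\sum_{s=1}^{t-1}(g_t-i_n)'(\psi_s'\otimes I_n)$.

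For the derivatives of $u_t$ I would first rewrite it using $G_t^{-1}h_t=\ln(g_t)-\mu_t+G_t^{-1}\mu_t$ as $u_t=G_t^{-1}a_t+\ln(g_t)-\mu_t-\tilde m_{t-1}+\tilde\Delta_t-C_k\psi_t$ with $a_t:=\tilde b_t+\tilde m_t-\tilde\Delta_t+\mu_t$. Because $F$ acts componentwise, $\mu_{i,t}$ (hence $g_{i,t}$) depends only on $\mu_{0,i}$ and on the $i$--th rows of $C_k$ and $C_m$, so $\partial u_t/\partial\mu_0'$ is diagonal. Differentiating the $i$--th entry in $\mu_{0,i}$, substituting $\partial g_{i,t}/\partial\mu_{0,i}=(g_{i,t}-1)\,\partial\mu_{i,t}/\partial\mu_{0,i}$, $\partial (a_t)_i/\partial\mu_{0,i}=\partial\mu_{i,t}/\partial\mu_{0,i}$ and $\partial\mu_{i,t}/\partial\mu_{0,i}=-g_{i,t}$, the contributions $g_{i,t}^{-1}$, $g_{i,t}^{-1}(g_{i,t}-1)$ and $-1$ cancel, leaving $\partial u_{i,t}/\partial\mu_{0,i}=g_{i,t}^{-1}(g_{i,t}-1)(a_t)_i$. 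The implicit relation gives $m_{i,t-1|0}=\tilde\Delta_{i,t}-c_{i,k}\psi_t-\mu_{i,t}+\ln(g_{i,t})$, and a one--line computation then shows $u_{i,t}+\tilde m_{i,t-1}-m_{i,t-1|0}=g_{i,t}^{-1}(a_t)_i$; hence $\partial u_{i,t}/\partial\mu_{0,i}=(g_{i,t}-1)\big(u_{i,t}+\tilde m_{i,t-1}-m_{i,t-1|0}\big)=(\delta_t)_i$, i.e.\ $\partial u_t/\partial\mu_0'=\text{diag}\{\delta_t\}=[L_1\delta_t:\dots:L_n\delta_t]$. The $c_k$ and $c_m$ cases run identically, the only difference being the extra \emph{direct} term $-C_k\psi_t$ in $u_t$ for $c_k$; carrying the Kronecker factor $(\psi_t'\otimes I_n)$ (respectively $\sum_{s=1}^{t-1}(\psi_s'\otimes I_n)$) through the same cancellation produces $\partial u_t/\partial c_k'=([L_1\delta_t:\dots:L_n\delta_t]-I_n)(\psi_t'\otimes I_n)$ and $\partial u_t/\partial c_m'=\sum_{s=1}^{t-1}[L_1\delta_t:\dots:L_n\delta_t](\psi_s'\otimes I_n)$. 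Finally, $w_t=\tilde m_t-\tilde m_{t-1}-(\psi_t'\otimes I_n)c_m$ is affine in $c_m$ with $\partial w_t/\partial c_m'=-(\psi_t'\otimes I_n)$, so $\partial(w_t'\Sigma_{ww}^{-1}w_t)/\partial c_m'=2w_t'\Sigma_{ww}^{-1}\partial w_t/\partial c_m'=-2w_t'\Sigma_{ww}^{-1}(\psi_t'\otimes I_n)$.

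The only genuinely nontrivial step is the first one: one must recognize that $\mu_t$ is determined by the model through the conditional expectation in its definition, and push through the cancellations that reduce this to the clean implicit equation $F(\mu_t)=\tilde\Delta_t-\mu_0-C_m\sum_{s=1}^{t-1}\psi_s-C_k\psi_t$ with $F'(\mu_t)=G_t^{-1}$. After that, everything is chain--rule bookkeeping; the only additional device needed is the telescoping in the $u_t$ computation together with the identity $u_{i,t}+\tilde m_{i,t-1}-m_{i,t-1|0}=g_{i,t}^{-1}(a_t)_i$, which is exactly what makes the outcome collapse to $\delta_t$.
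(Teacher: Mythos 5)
Your proof is correct and follows essentially the same route as the paper: both arguments hinge on recognizing that $g_t$ and $h_t$ depend on $\mu_0$, $C_k$, $C_m$ only through $\mu_t=\mathbb{E}[\tilde d_t-\tilde P_t|\mathcal{F}_0]$, reducing this dependence to the relation $\mu_t-\ln(g_t)=\varphi_t$ with $\varphi_t=\tilde\Delta_t-C_k\psi_t-\mu_0-C_m\sum_{s=1}^{t-1}\psi_s$, and then pushing the chain rule through until everything collapses to $\delta_t$. Your packaging via the implicit function theorem with $F'(\mu_t)=G_t^{-1}$ (where the paper instead solves for $g_t$ and $h_t$ explicitly in terms of $\varphi_t$ and differentiates term by term) is only a cosmetic difference.
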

\begin{proof}
Observe that $\tilde{P}_{t-1}-\tilde{d}_t=\tilde{m}_{t-1}-\tilde{\Delta}_t$. Then, due to equation \eqref{03006}, the log dividend--to--price ratio is represented by $\tilde{d}_t-\tilde{P}_t=h_t-G_t(\tilde{m}_{t-1}-\tilde{\Delta}_t+C_k\psi_t)-G_tu_t.$ Therefore, as $\mu_t=\mathbb{E}[\tilde{d_t}-\tilde{P}_t|\mathcal{F}_0]$, we get that
\begin{equation}\label{03165}
\mu_t=h_t+G_t\big(\tilde{\Delta}_t-C_k\psi_t-\mathbb{E}[\tilde{m}_{t-1}|\mathcal{F}_0]\big).
\end{equation}
Consequently, because $h_t=G_t\big(\ln(g_t)-\mu_t\big)+\mu_t=G_t\big(\ln\big(g_t\oslash(g_t-i_n)\big)\big)+\mu_t$ and $\mathbb{E}[\tilde{m}_{t-1}|\mathcal{F}_0]=\mu_0+C_m\sum_{s=1}^{t-1}\psi_s$, one obtains that 
\begin{equation}\label{03166}
g_t=\exp\{-\varphi_t\}\oslash\big(\exp\{-\varphi_t\}-i_n\big),
\end{equation}
where $\varphi_t:=\tilde{\Delta}_t-C_k\psi_t-\big(\mu_0+C_m\sum_{s=1}^{t-1}\psi_s\big)$. Further, since $\mu_t=\ln(g_t-i_n)=-\ln\big(\exp\{-\varphi_t\}-i_n\big)$, we get that
\begin{equation}\label{03167}
h_t=-\Big\{\varphi_t\odot\exp\{-\varphi_t\}\oslash\Big(\exp\{-\varphi_t\}-i_n\Big)+\ln\Big(\exp\{-\varphi_t\}-i_n\Big)\Big\}.
\end{equation}
Thus, for $i=1,\dots,n$, partial derivatives of the linearization parameters $g_{i,t}$ and $h_{i,t}$ are obtained as 
\begin{equation}\label{03168}
\frac{\partial g_{i,t}}{\partial \alpha}=(g_{i,t}-1)g_{i,t}\frac{\partial \varphi_{i,t}}{\partial \alpha}~~~\text{and}~~~
\frac{\partial h_{i,t}}{\partial \alpha}=-\varphi_{i,t}(g_{i,t}-1)g_{i,t}\frac{\partial \varphi_{i,t}}{\partial \alpha},
\end{equation}
where $\alpha$ equals elements of the vector $\mu_0$ and the matrices $C_k$ and $C_m$.  Since
\begin{equation}\label{03169}
\frac{\partial}{\partial \alpha}\ln(|G_t|)=\sum_{i=1}^n(g_{i,t}-1)\frac{\partial \varphi_{i,t}}{\partial \alpha},
\end{equation}
\begin{equation}\label{03170}
\frac{\partial}{\partial \alpha}\bigg(\frac{1}{g_{i,t}}\bigg)=-\frac{g_{i,t}-1}{g_{i,t}}\frac{\partial \varphi_{i,t}}{\partial \alpha},~~~\text{and}~~~\frac{\partial}{\partial \alpha}\bigg(\frac{h_{i,t}}{g_{i,t}}\bigg)=-\Big(\varphi_{i,t}g_{i,t}+h_{i,t}\Big)\frac{g_{i,t}-1}{g_{i,t}}\frac{\partial \varphi_{i,t}}{\partial \alpha},
\end{equation}
a partial derivative of the white noise process $u_{i,t}$ with respect to the parameter $\alpha$ is obtained by the following equation 
\begin{equation}\label{03171}
\frac{\partial u_{i,t}}{\partial \alpha}=-\big(g_{i,t}-1\big)\Bigg(u_{i,t}+\tilde{m}_{i,t-1}-\bigg(\mu_0+C_m\sum_{s=1}^{t-1}\psi_s\bigg)_i\Bigg)\frac{\partial \varphi_{i,t}}{\partial \alpha}-\frac{\partial (C_k\psi_t)_i}{\partial \alpha}, 
\end{equation}
where we use the fact that $\ln(|G_t|)=\sum_{i=1}^n\ln(|g_{i,t}|)$. Therefore, in vector form, equations \eqref{03169} and \eqref{03171} can be written by
\begin{equation}\label{03172}
\frac{\partial}{\partial \alpha}\ln(|G_t|)=(g_t-i_n)'\frac{\partial \varphi_t}{\partial \alpha}
\end{equation}
and
\begin{equation}\label{03173}
\frac{\partial u_t}{\partial \alpha}=-\text{diag}\{\delta_t\}\frac{\partial \varphi_t}{\partial \alpha}-\frac{\partial (C_k\psi_t)}{\partial \alpha}=-\big[L_1\delta_t:\dots:L_n\delta_t\big]\frac{\partial \varphi_t}{\partial \alpha}-\frac{\partial (C_k\psi_t)}{\partial \alpha},
\end{equation}
respectively.
Let us denote vectorizations of the matrices $C_k$ and $C_m$ by $c_k:=\text{vec}(C_k)$ and $c_m:=\text{vec}(C_m)$, respectively. Consequently, since $\frac{\partial (C_k\psi_t)}{\partial c_k'}=\frac{\partial ((\psi_t'\otimes I_n)c_k)}{\partial c_k'}=(\psi_t'\otimes I_n)$ and $\frac{\partial (C_m\psi_s)}{\partial c_m'}=(\psi_s'\otimes I_n)$, partial derivatives of the white noise process $u_t$ and $\ln(|G_t|)$ with respect to the parameters $\mu_0$, $c_k$, and $c_m$ are given by the equations \eqref{03161}--\eqref{03163}. Also, as $\frac{\partial (C_m\psi_t)}{\partial c_m'}=\psi_t'\otimes I_n$, one obtains equation \eqref{03164}. That completes the proof.
\end{proof}

\bibliographystyle{apacite}
\bibliography{References}

\end{document}